\numberwithin{equation}{section}
\newcommand{\nn}{\nonumber}
\newcommand{\eps}{\epsilon}
\newcommand{\IR}{\mathds{R}}
\newcommand{\IC}{\mathds{C}}
\newcommand{\IF}{\mathds{F}}
\newcommand{\IZ}{\mathds{Z}}
\newcommand{\IN}{\mathds{N}}
\newcommand{\IP}{\mathds{P}}
\newcommand{\sgn}{\mbox{sgn}}
\newcommand{\cD}{\mathcal{D}}
\newcommand{\cC}{\mathcal{C}}
\newcommand{\cS}{\mathcal{S}}
\newcommand{\cM}{\mathcal{M}}
\newcommand{\cO}{\mathcal{O}}
\newcommand{\cX}{\mathcal{X}}
\newcommand{\cT}{\mathcal{T}}
\newcommand{\cI}{\mathcal{I}}
\newcommand{\de}{\mathrm{d}}
\newcommand{\De}{\mathrm{D}}
\newcommand{\I}{\mathrm{i}}
\newcommand{\Ext}{\mathrm{Ext}}
\newcommand{\Exp}{\mathrm{Exp}}
\newcommand{\Log}{\mathrm{Log}}
\newcommand{\Hom}{\mathrm{Hom}}
\newcommand{\coh}{\mathrm{coh}}
\def\bea{\begin{eqnarray}}
\def\eea{\end{eqnarray}}
\def\be{\begin{equation}}
\def\ee{\end{equation}}
\def\ba{\begin{align}}
\def\ea{\end{align}}
\def\bse{\begin{subequations}}
\def\ese{\end{subequations}}
\def\OmS{\Omega_{\rm S}}
\def\({\left(}
\def\){\right)}
\def\[{\left[}
\def\]{\right]}
\def\<{\left\langle}
\def\>{\right\rangle}
\def\tr{{\rm tr}}
\def\bOm{\overline{\Omega}}
\def\Omstar{\Omega_\star}
\theoremstyle{plain}
\newtheorem{theorem}{Theorem}[section]
\newtheorem{lemma}[theorem]{Lemma}
\newtheorem{conjecture}[theorem]{Conjecture}
\theoremstyle{definition}
\newtheorem{remark}[theorem]{Remark}
\newtheorem{example}[theorem]{Example}
\theoremstyle{remark}
\def\mat#1{\ensuremath{#1}\xspace}
\def\dmat#1#2{\gdef#1{\mat{#2}}}
\def\csdmat#1#2{\csdef{#1}{\mat{#2}}}
\def\redef#1{\csletcs{t@#1}{#1}\csdmat{#1}{\csuse{t@#1}}}
\def\opr#1#2{\dmat#1{\operatorname{#2}}}
\def\oper#1{\csdmat{#1}{\operatorname{#1}}}
\def\forlist#1#2{\@xp\forcsvlist\@xp{\@xp#1\@xp}\@xp{#2}}
\forcsvlist\oper{
Vect,Mod,Rep,Ab,Grp,Com,Set,Top,Sch,Coh,Qcoh,Ob,Ring,Alg,
Map,Hom,End,Aut,Ext,Cat,Sh,Ind,RHom,Fun,Var,rad,
Ker,Im,deg,Tr,det,rk,diag,im,coker,tr,cone,add,
sk,cosk,colim,Res,Gal,Coker,dlog,
GL,SL,PGL,Gr,Lib,Conv,Sol,Tor,
Max,Spec,Ass,Ann,Quot,Pic,Sym,MHS,MHM,
td,ch,Supp,supp,vol,trdeg,lc,NS,Loc,Iso,per,gr,Der,Lie,
Log,Exp,Pow,ord,lcm,VB,Hilb,cl,SU,SO,Ad,ad,id}
\opr\Id{id}
\opr\lHom{\lb{Hom}}
\opr\lAut{\lb{Aut}}
\opr\lIso{\lb{Isom}}
\opr\hgt{ht}
\opr\chr{char}
\opr\udim{\mathbf{dim}}
\opr\cls{\mathsf{cl}}
\opr\mmod{mod}
\opr\Et{\mathsf{Et}}
\def\mto{\mapsto}
\def\xto{\xrightarrow}
\def\emb{\hookrightarrow}
\dmat\es{\varnothing}
\def\sbs{\subset}
\dmat\imp{\implies}
\dmat\xx{\times}
\dmat\dd{\partial}
\def\iso{\simeq}
\def\ms{\backslash}
\def\bi{\mathbf i}
\def\bop{\bigoplus}
\def\ilim{\varprojlim} 
\def\ts{\otimes}
\def\lb{\underline}
\def\ub{\overline}
\def\oh{{\frac12}}
\dmat\Gm{\mathbb G_{\mathrm m}}
\dmat\one{\mathbbm1}
\def\ilim{\varprojlim} 
\def\what{\widehat}
\def\tl{\tilde}
\def\wtl{\widetilde}
\def\inv{^{-1}}
\def\dual{^{\vee}}
\dmat\pt{\mathbf{pt}}
\dmat\bul{\bullet}
\newdimen\bulh\settoheight\bulh{\bul}
\def\GIT{/\!\!/}
\def\defbb#1{\csdmat{b#1}{\mathbb{#1}}}
\def\defcal#1{\csdmat{c#1}{\mathcal{#1}}}
\def\deffrak#1{\csdmat{f#1}{\mathfrak{#1}}}
\def\defbf#1{\csdmat{bf#1}{\mathbf{#1}}}
\def\defsf#1{\csdmat{s#1}{\mathsf{#1}}}
\def\defall#1{\defbb{#1}\defcal{#1}\deffrak{#1}\defbf{#1}\defsf{#1}}
\forcsvlist\defall{A,B,C,D,E,F,G,H,I,J,K,L,M,N,O,P,Q,R,S,T,U,V,W,X,Y,Z}
\forcsvlist\deffrak{a,b,c,d,e,f,g,h,j,k,l,m,n,o,p,q,r,s,t,u,v,w,x,y,z}
\def\defbl#1{\csdmat{b#1}{\mathbf{#1}}} 
\forcsvlist\defbl{a,b,c,d,g,h,i,j,k,l,n,o,p,q,r,s,t,u,v,w,x,y,z}
\dmat\bk{\Bbbk}
\dmat\Si{\Sigma}
\dmat\Te{\Theta}
\dmat\Na{\nabla}
\dmat\kp{\kappa}
\forcsvlist\redef{Phi,Psi,pi,mu,nu}
\def\set#1{\mat{\left\{#1\right\}}}
\def\sets#1#2{\left\{\left.#1\ \right\vert#2\right\}}
\def\ang#1{\mat{\left\langle#1\right\rangle}}
\def\rbr#1{\left(#1\right)}
\def\n#1{\left\lvert#1\right\rvert}
\def\ov#1#2{{\substack{#1\\#2}}} 
\def\pser#1{[\![#1]\!]} 
\def\lser#1{(\;\!\!\!(#1)\!\!\!\;)} 
\def\pmat#1{\begin{pmatrix}#1\end{pmatrix}}
\def\smat#1{\left(\begin{smallmatrix}#1\end{smallmatrix}\right)}
\edef\col{\mathchar\the\mathcode`:} 
\patchcmd\colon{:}{\col}{}{} 
\def\qq#1{\lq#1\rq}
\def\tpdf{\texorpdfstring} 
\def\ie{i.e.\@\xspace} 
\def\cf{cf.\@\xspace} 
\def\eg{e.g.\@\xspace} 
\def\wrt{with respect to\xspace}
\def\defcite#1#2{
	\DeclareDocumentCommand{#1}{o}
		{\IfValueTF{##1}{\cite[##1]{#2}}{\cite{#2}}\xspace}
}
\DeclareDocumentCommand{\idef}{mo}{%
	\ifmmode#1\else\textit{#1}\fi%
	\IfValueTF{#2}{\index{#2}}{\index{#1}}%
}
\def\leqnomode{\tagsleft@true\let\veqno\@@leqno}
\def\reqnomode{\tagsleft@false\let\veqno\@@eqno}
\newenvironment{ctikzcd}
	{\center\tikzcd}{\endtikzcd\endcenter}
\newenvironment{ctikz}
	{\center\tikzpicture}{\endtikzpicture\endcenter}
\theoremstyle{plain}
\DeclareDocumentCommand\sec{o}{
	\IfValueTF{#1}{\subsection{#1}}
			{\subsection{}\hspace{-1ex}}}
\defcite\MR{Mozgovoy:2008fd}
\newcommand\setpic[2]{
	\tikzset{#1/.pic={code={#2}}}}
\tikzset{ 
	crc1/.style={black,fill=yellow,radius=1cm},
	crc2/.style={black,fill=red,radius=1cm},
}
\tikzset{
crc3/.style={circle,draw=black,fill,inner sep=2pt},
crc4/.style={crc3,fill=white},
rcrc/.style={circle,fill=red,inner sep=2pt},
ar1/.style={->,blue,shorten >=2.5pt,shorten <=2.5pt,>=latex},
lbl/.style args={#1}{label=left:{$#1$}},
lbr/.style args={#1}{label=right:{$#1$}},
lbu/.style args={#1}{label=above:{$#1$}},
}
\opr\NCDT{NC}
\def\Omstar{\Om_*}
\def\defbbI#1{\csdmat{I#1}{\mathbb{#1}}}
\forcsvlist\defbbI{T,R,C,F,Z,N,Q,H,O,P}
\def\oprod{\prod^\curvearrowright}
\def\hOm{\cA}
\def\f{\mathsf f}
\dmat\Z{\mathsf Z}
\def\rf{\mathrm{ref}}
\def\NC{\mathrm{NC}}
\defcite\MN{morrison_motivicb}
\def\NC{\mathrm{NC}}
\def\re{\mathrm {re}}
\def\im{\mathrm {im}}
\def\I{I}
\def\J{J}
\def\bibalias#1#2{
\csdef{b@#1}{\csname b@#2\endcsname}}
\def\dP{\mathrm{dP}}
\def\PdP{\mathrm{PdP}}
\title{Attractor invariants, brane tilings and crystals}
\author{Sergey Mozgovoy}
\address{School of Mathematics, Trinity College Dublin, Dublin 2, Ireland
\newline\indent
Hamilton Mathematics Institute, Trinity College Dublin, Dublin 2, Ireland}
\email{mozgovoy@maths.tcd.ie}
\author{Boris Pioline}
\address{Laboratoire de Physique Th\'eorique et Hautes Energies, 
Sorbonne Universit\'e and CNRS UMR 7589, 
Campus Pierre et Marie Curie, 4 place Jussieu, F-75005, Paris, France}
\email{pioline@lpthe.jussieu.fr}
\def\AA{\hat\bA}
\def\cbop{\what\bop}
\def\tcX{\widetilde{\mathcal X}}
\begin{document}
\begin{abstract}
Supersymmetric D-brane bound states on a Calabi-Yau threefold $\cX$ are counted 
by generalized Donaldsdon-Thomas invariants $\Omega_\Z(\gamma)$, depending on a 
Chern character (or electromagnetic charge) $\gamma\in H^*(\cX)$ and a stability condition (or central charge)  $\Z$.
Attractor invariants $\Omega_*(\gamma)$ are special instances of DT invariants, where
$\Z$ is the attractor stability condition $\Z_\gamma$ (a generic perturbation of self-stability),
from which DT invariants for any other stability condition can be deduced. 
While difficult to compute in general, these invariants become tractable
when $\cX$ is a crepant resolution of  a singular toric Calabi-Yau threefold 
associated to a brane tiling, and hence to a quiver with potential.
We survey some known results and conjectures about framed and unframed refined DT invariants in this context, and compute attractor invariants explicitly for a variety of toric Calabi-Yau threefolds,
in particular when $\cX$ is the total space of the
canonical bundle of a smooth projective surface,
or when $\cX$ is a crepant resolution of $\IC^3/G$.
We check that in all these cases, $\Omega_*(\gamma)=0$  unless $\gamma$ is the dimension vector of a simple representation or belongs to  the kernel of the skew-symmetrized Euler form.
Based on computations in small dimensions, we predict the values of all
attractor invariants, thus potentially solving the problem of counting DT invariants of these threefolds in all stability chambers.
We also compute the non-commutative refined DT invariants and verify that they agree with the counting of molten crystals in the unrefined limit.
\end{abstract}

\maketitle
\tableofcontents

\section{Introduction and summary}

Elucidating the microscopic origin of the entropy of black holes is a key objective of any putative theory of quantum gravity. String theory reached this milestone about 25 years ago, with the first quantitative description of the micro-states of BPS black holes (those black holes preserving a fraction of the supersymmetry of the vacuum), as supersymmetric bound states of D-branes wrapped around
various cycles of the internal Calabi-Yau (CY) three-fold $\cX$ \cite{Strominger:1996sh}. The indices counting (with sign) BPS bound states of arbitrary D-brane charge $\gamma\in H^*(\cX)$ were soon determined exactly for $\cX=T^6$, $\cX=K3\times T^2$ or orbifolds thereof preserving at least 16 supercharges (see e.g.\ \cite{Sen:2007qy} for a review). Despite some early successes, determining the exact BPS indices for a genuine CY3-fold with $\SU(3)$ holonomy, hence preserving 8 supercharges, is still an open problem, except for very special charges $\gamma$. Our goal in this article is to review recent
progress in the case of non-compact threefolds, and provide some support for recent conjectures on the BPS indices for local Fano surfaces in the so-called \qq{attractor} or \qq{self-stability} chamber \cite{Beaujard:2020sgs}. 

\medskip

Mathematically,  this physics problem amounts to computing the generalized Donaldson-Thomas  (DT) invariants $\Omega_{\Z}(\gamma)$ for arbitrary $\gamma\in H^*(\cX)$ 
and central charge 
$\Z\in \cS(\cX)$ in the space of Bridgeland stability conditions 
(or rather, its poorly understood subspace  $\cM(\cX)\subset \cS(\cX)$ spanned by string theory moduli) 
\cite{Douglas:2000gi,MR1403918,MR2373143,Aspinwall:2004jr}.
Informally, $\Omega_\Z(\gamma)$ is the weighted Euler characteristic of the moduli space of semistable objects with Chern character
$\gamma$ in the bounded derived category of coherent sheaves $D^b(\coh\cX)$. It is notoriously hard to calculate in general, except when $\gamma$ describes a rank-one coherent sheaf on $\cX$ (a D6-brane
in physics parlance) and $\Z$ is the large volume central charge, in which case $\Omega_\Z(\gamma)$ reduces to the usual Donaldson-Thomas invariant \cite{thomas1998holomorphic}, computable
from the Gromov-Witten invariants of $\cX$ \cite{gw-dt}. Furthermore, physicists are often 
not content knowing $\Omega_\Z(\gamma)$, but would also like to know its refined (or motivic) version $\Omega_\Z(\gamma,y)$, which counts  BPS states weighted by their angular momentum in 3 dimensions \cite{Dimofte:2009bv,Gaiotto:2010be}. The latter is known for the Hilbert scheme of $n$ points (corresponding to bound states of a single D6-brane with $n$ D0-branes) on an arbitrary $\cX$ \cite{Behrend:2009dc}, but hard to 
define for general compact CY3-folds and D-brane charges \cite{ks}.

\medskip

For a non-compact CY3-fold,
the derived category $D^b_c(\coh\cX)$ of (compactly supported) coherent sheaves 
becomes more tractable,
being equivalent to the bounded derived category of quiver representations  $D^b(Q,W)$
for a certain quiver with potential $(Q,W)$, the vertices correspond to a tilting sequence 
of line bundles on $\cX$
(or in physics parlance, a set of \qq{fractional branes}).
When $\cX$ is toric (i.e.\ has a $(\IC^\times)^3$-action), $(Q,W)$ can be more directly
obtained from a brane tiling \cite{Franco:2005rj,Hanany:2006nm,Mozgovoy:2008fd}.    
It is convenient to associate to  $\Z$ a vector of stability parameters $\theta$, known
in physics as Fayet-Iliopoulos (FI) parameters.
The  DT invariants $\Om_\te(\gamma)=\Omega_\Z(\gamma)$,
along with  their refined version $\Om_\te(\ga,y)=\Om_\Z(\ga,y)$,
can in principle be computed using 
representation theory techniques,
but the presence of a potential $W$ (called superpotential
by physicists) greatly complicates the matter. 
The framing induced by the presence of a non-compact D-brane (e.g.\ a D6-brane wrapped on $\cX$)
can be used to overcome these difficulties
\cite{Szendroi:2007nu}. 

\medskip

In general, the generalized DT invariant $\Omega_\te(\ga)$ or its refined version depend sensitively on the stability parameter $\theta$,
their jumps across walls of marginal stability in $\cS(\cX)$ 
being governed by a universal wall-crossing formula \cite{ks}. A notable exception is the case of small crepant resolutions (sometimes known as \qq{local curves}),
where the (unframed) quiver is symmetric
and the skew-symmetrized Euler form $\ang{-,-}$ 
(defined in \eqref{defskEuler} below, and also known as the Dirac-Schwinger-Zwanziger pairing)
 vanishes, so that DT invariants
are independent of $\theta$ (indeed, they count D2-D0 brane bound states,
which are generally free from wall-crossing and equal to Gopakumar-Vafa invariants). The framed indices counting D6-D2-D0 branes do depend on the stability parameters $(\te,\te_\infty)$, 
where $\te_\infty$ is the FI parameter for the framing vertex, but they vanish in a particular chamber and can be deduced elsewhere by the semi-primitive wall-crossing formula \cite{Denef:2007vg,Dimofte:2009bv, Mozgovoy:2011zh}, a special case of \cite{ks}. In the non-commutative (NC) chamber, obtained by crossing through all the walls associated
to D2-D0 bound states (see e.g.\ Figure \ref{fig:conifoldray} for the resolved conifold), they have a simple combinatorial interpretation as counting \qq{molten crystals}
in a certain `classical crystal' associated to the brane tiling \cite{Okounkov:2003sp,Iqbal:2003ds,Szendroi:2007nu,Mozgovoy:2008fd,Ooguri:2008yb,Aganagic:2010qr,Eager:2011ns}, or equivalently,  dimer configurations on the brane tiling itself \cite{Mozgovoy:2008fd}. In the large volume chamber,
obtained by crossing only those rays with negative D0-brane charge, they instead reproduce the usual DT invariants
computed by the (refined) topological vertex \cite{Aganagic:2003db,Iqbal:2007ii}.  
Motivic DT invariants for small crepant resolutions have been computed for arbitrary charge and stability condition in a series of mathematical works \cite{behrend_motivic,Morrison:2011rz,Mozgovoy:2011ps,Morrison:2011bc,mozgovoy_invariantsa}, as we shall review below.

\medskip

In contrast, for non-compact CY3-folds admitting compact divisors, such as the 
total space of the canonical bundle $K_S$ of a complex projective surface $S$ (sometimes called \qq{local surface}),
the unframed quiver is not symmetric, and the unframed indices $\Omega_\te(\ga)$ counting D4-D2-D0 bound states have a complicated dependence on the stability parameter $\te$.
One mathematically natural stability condition is the \qq{trivial stability}, where 
any quiver representation is considered to be semistable, but the corresponding moduli space is badly singular and its physical interpretation obscure. Another natural stability condition for the 
framed quiver is the NC chamber, where unrefined invariants can be computed by counting 
molten crystal or dimer configurations, just as for small crepant resolutions. Unfortunately, 
computing the refined invariants in this approach appears to be difficult. 

\medskip

Physics instead suggests a stability
condition $\Z_\gamma$ known as the \qq{attractor point}, which corresponds to the string moduli 
reached at the horizon of a spherically symmetric black hole with charge $\gamma$ \cite{Ferrara:1995ih}. Its key feature is that, in a neighborhood of $\Z_\gamma$ in $\cM(\cX)$, only single-centered black holes (and potentially, special multi-centered configurations 
known as \qq{scaling solutions} \cite{Bena:2012hf}) contribute to the index. 
Mathematically, $\Z_\gamma$ corresponds to
a generic perturbation $\te_\ga$ of the self-stability parameter $\ang{-,\ga}$ for quiver representations \cite{bridgeland2016scattering, Alexandrov:2018iao,Beaujard:2020sgs}.
Following \cite{Beaujard:2020sgs}, we define the attractor invariants (or attractor indices) as 
$\Omstar(\ga,y)= \Om_{\te_\ga}(\ga,y)$.
These attractor invariants
are closely related to the notion of initial data in the theory of  
wall-crossing structures and scattering diagrams \cite{kontsevich_wall,gross_canonical,bridgeland2016scattering}, 
which we use to show that $\Om_{\te_\ga}(\ga,y)$ is independent of the choice of  perturbation.
Provided the attractor indices $\Omstar(\ga,y)$ can be determined for all dimension vectors $\ga$, then the DT invariants $\Omega_\te(\ga,y)$, for any stability parameter $\te$, can be computed
by applying the wall-crossing formula repeatedly, or more efficiently by using the attractor flow
tree formula \cite{Denef:2001xn,Manschot:2010xp,Alexandrov:2018iao}.

\medskip

In \cite{Beaujard:2020sgs}, based on the study of D4-D2-D0 indices on $K_S$ and their relation to 
Vafa-Witten invariants of the surface $S$,  it was conjectured that the attractor indices $\Omstar(\ga,y)$
for quivers $(Q,W)$ associated to local Fano surfaces have a very simple structure:
\begin{conjecture}\label{C1}
$\Omstar(\ga,y)$ vanishes unless $\ga$ is associated to a simple representation of $Q$,
or $\ga$ lies in the kernel of the skew-symmetrized Euler form, that is $\ang{\ga,-}=0$;
\end{conjecture}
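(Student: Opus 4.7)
The plan is to analyse the attractor stability $\te_\ga = \ang{-,\ga} + \eps$ for a generic small perturbation $\eps$, using the skew-symmetry of $\ang{-,-}$ which forces every representation of dimension $\ga$ to have self-slope zero. The two excluded cases in the conjecture are precisely the degenerate situations in which this is uninformative: if $\ang{\ga,-} \equiv 0$ then self-stability is trivial and $\te_\ga$ reduces to the generic $\eps$; if $\ga$ is the dimension vector of a simple representation, stability is automatic. In all other cases, the aim is to prove that no $\te_\ga$-semistable representation of dimension $\ga$ exists.

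First I would perform a Jordan--H\"older analysis under self-stability. For a self-semistable representation $M$ of dimension $\ga$, the JH factors have dimensions $\ga^{(k)}$ satisfying $\ang{\ga^{(k)},\ga} = 0$, and a generic perturbation breaks the equal-slope condition $|\ga|\eps(\ga^{(k)}) = |\ga^{(k)}|\eps(\ga)$, so either $M$ or one of its strata fails $\te_\ga$-semistability. The remaining issue is to rule out a putative self-stable $M$ of dimension $\ga$ itself: here I would exploit the quiver structure of the brane tiling by picking a vertex $i \in Q_0$ with $\ang{\al_i,\ga} \neq 0$, producing a destabilising sub- or quotient-object, and using the cyclicity of the potential $W$ to guarantee that the simple $S_i$ (or its dual) appears as a subfactor of any such $M$. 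As a cross-check, I would also compute $\Omstar(\ga,y)$ via the attractor flow tree formula \cite{Alexandrov:2018iao} by inverting DT invariants in a chamber where they are accessible; in the NC chamber these are given combinatorially by the molten crystal count, and inductive comparison on $|\ga|$ should yield the desired vanishing as a cancellation identity compatible with the consistency of the quantum scattering diagram of \cite{bridgeland2016scattering}, whose initial rays are labelled precisely by the attractor invariants.

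The main obstacle is promoting the heuristic ``generic perturbation destabilises'' to an honest statement that no $\te_\ga$-semistable representation exists outside the conjectured locus. The delicate sub-case is a putative self-stable $M$ of dimension $\ga$ whose every proper subrepresentation has dimension orthogonal to $\ga$ yet which remains stable after perturbation; ruling this out seems to require a detailed understanding of the Jacobi algebra of $(Q,W)$ for the specific brane tilings involved, and is where geometric input from the local surface structure becomes essential. For this reason I expect that a uniform proof is unlikely to be available, and that the conjecture will be established case-by-case for families of local Fano surfaces, which matches the paper's framing of it as a statement supported by small-dimensional computations.
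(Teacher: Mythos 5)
The statement you are addressing is presented in the paper as a conjecture and is \emph{not} proved there. The authors' own support for it consists of (i) two rigorous but partial vanishing results --- the lemma that for acyclic quivers the only self-stable objects are the simples, and the theorem that $\Om_*(d,y)=0$ whenever the support subquiver $Q|_{\supp(d)}$ is not strongly connected --- and (ii) explicit low-dimensional computations, obtained by evaluating the stacky invariants at trivial stability via the double dimensional reduction and then transporting them to the attractor chamber with the Joyce--Reineke formula. Your proposed ``cross-check'' via the flow tree formula against molten-crystal counts is essentially this same verification procedure, not an independent proof, and your closing assessment (that a uniform proof is out of reach and the statement must be checked case by case) coincides with the paper's own framing.

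The destabilization argument in your second paragraph has a genuine gap, which you partly acknowledge but which is worth naming precisely. Since $\ang{\ga,\ga}=0$, every representation of dimension $\ga$ has self-slope zero, so the self-stability $\ang{-,\ga}$ never constrains $M$ through its own class; everything hinges on subobjects. If $M$ is $\ang{-,\ga}$-\emph{stable} (not merely semistable), stability is an open condition and survives any sufficiently small generic perturbation $\te_\ga$, so ``a generic perturbation breaks the equal-slope condition'' accomplishes nothing: perturbation only splits strictly semistable loci, and even there the index can recombine rather than vanish. Your proposed remedy --- pick a vertex $i$ with $\ang{e_i,\ga}\neq 0$ and use cyclicity of $W$ to exhibit $S_i$ as a destabilizing sub- or quotient-object --- conflates Jordan--H\"older subfactors with subobjects. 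The paper's acyclic lemma works precisely because an ordering of $Q_0$ compatible with the arrows forces a simple \emph{sub}representation at a maximal vertex $j$ with $\ang{e_j,d}>0$; for the strongly connected tiling quivers of local Fano surfaces no such ordering exists, and a simple constituent of $M$ need not embed into $M$ nor arise as a quotient. Note also that what you set out to prove (emptiness of the semistable locus) is strictly stronger than the conjectured vanishing of the index. This is exactly where the conjecture becomes hard, and neither your sketch nor the paper closes that gap; the only structural input the paper offers beyond computation is the strong-connectivity criterion, which your argument does not recover.
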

\begin{conjecture}\label{C2}
If $\ga$ is not associated to a simple representation of $Q$, then $\Omstar(\ga,y)$ vanishes
unless $\ga$ is a multiple of the dimension vector $\delta=(1,1,\dots,1)$ 
associated to a single D0-brane. 
\end{conjecture}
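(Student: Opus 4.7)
The statement refines Conjecture \ref{C1} by requiring that $\ga\in K:=\ker\langle-,-\rangle$ is in fact a multiple of $\de$, so the plan is to analyze $K$ carefully and then to rule out all non-$\de$ directions in it. First, I would determine $K$ explicitly from the brane-tiling combinatorics. The D0-charge $\de=(1,\dots,1)$ always lies in $K$ because the D0-brane is central in the BPS charge lattice, pairing trivially with every other brane. For small examples such as local $\IP^2$, where $|Q_0|=3$ forces any nonzero $3\xx3$ antisymmetric form containing $\de$ in its kernel to satisfy $K=\IZ\de$, the conjecture follows from Conjecture \ref{C1} immediately. For larger local Fano quivers, $K$ may a priori be strictly larger than $\IZ\de$, and the problem becomes one of identifying which lattice directions in $K\setminus\IZ\de$ can actually be supported on representations of the Jacobian algebra $J(Q,W)$.

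Second, I would exploit the following simplification: for any $\ga\in K$, the Kontsevich-Soibelman wall-crossing formula is trivial on $\ga$ since $\langle\ga,-\rangle\equiv0$, so $\Om_\te(\ga,y)$ is independent of the generic stability parameter $\te$. In particular $\Om_*(\ga,y)=\Om_\te(\ga,y)$ for any such $\te$, and one may compute the attractor invariant in the most convenient chamber. The natural choice is the non-commutative chamber, where the framed refined DT invariants admit a combinatorial molten-crystal or dimer description, and unframed invariants can be extracted by a plethystic logarithm applied to the framed generating series.

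Third, and here lies the main obstacle, one must show that the resulting unframed count on $K\setminus\IZ\de$ vanishes while correctly reproducing the known D0-brane tower on $\IZ\de$. For $\ga=n\de$ this is classical, yielding a MacMahon-type series from counting pyramid partitions in the crystal. For a hypothetical $\ga\in K\setminus\IZ\de$, one would aim to show that every $\te$-semistable representation of dimension $\ga$ splits as a nontrivial direct sum compatible with $W$, so that its contribution lies in the symmetric-product part of the framed generating series rather than in the primitive attractor piece. A uniform realization of this splitting is the hard part; it would likely require either a tilting-theoretic classification of the simple objects of $J(Q,W)$ that are mutually $\Ext^1$-free, or input from the Vafa-Witten side as in the analysis of $K_S$, identifying the relevant moduli with products of Hilbert schemes of points. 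The explicit computations for local $\IP^1\xx\IP^1$, $\dP_n$ with small $n$, and orbifolds $\IC^3/G$ performed in the body of the paper should then provide the evidential base from which a general pattern can be extracted and, ultimately, proved.
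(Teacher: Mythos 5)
First, be aware that the statement you are proving is a \emph{conjecture}: the paper does not prove it either. What the paper offers is computational corroboration in low degrees for the toric Fano cases ($\IP^2$, $\IF_0$, $\dP_1$, $\dP_2$, and a few multi-divisor examples), obtained by computing the unframed stacky invariants at \emph{trivial} stability via the double dimensional reduction of \S\ref{double reduction} (two disjoint cuts reducing $(Q,W)$ to a cyclic quiver whose indecomposables can be classified) and then transporting to the attractor chamber with the Joyce--Reineke formula (Theorem \ref{JR}). Your plan is therefore not being measured against a proof but against a body of evidence, and as a plan it correctly isolates the genuinely open step --- the vanishing of $\Omstar(\ga,y)$ for $\ga\in\ker\ang{-,-}$ not proportional to $\de$ --- which you leave unproved. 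Your preliminary reductions are mostly sound: for local $\IP^2$ the kernel is indeed $\IZ\de$, so Conjecture \ref{C2} reduces to Conjecture \ref{C1} there, and the independence of $\Om_\te(\ga,y)$ on generic $\te$ for $\ga\in\ker\ang{-,-}$ is asserted in Remark \ref{remarkBBS} and follows from the wall-crossing/tree formulas because every nontrivial contribution carries a factor governed by $\ang{\beta,\ga-\beta}=0$ at the root.

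Two concrete problems remain. (i) Your proposed computation ``in the non-commutative chamber, extracting unframed invariants by a plethystic logarithm of the framed series'' does not work here: the factorization results of Theorems \ref{th:NCDT from unframed} and \ref{th:framed from unframed} hold only for \emph{symmetric} quivers, while for local surfaces the quiver is not symmetric and the relation is the non-commutative product $Z^\rf_{\f,\NC}=S_\f\hOm\circ S_{-\f}\hOm\inv$, which cannot be inverted for $\hOm$ by a plethystic logarithm; moreover the molten-crystal description yields only unrefined invariants, which is precisely the difficulty the paper flags. (ii) The vanishing you hope to establish on $\ker\ang{-,-}\setminus\IZ\de$ is \emph{false} for general brane tilings: the paper's own computations give $\Omstar(e_0+e_2+n\de,y)=-y$ for $K_{\IF_2}$, and analogous nonzero invariants for $\PdP_2$ and $\IC^3/\IZ_6$, where $e_0+e_2$ lies in $\ker\ang{-,-}$ but is not a multiple of $\de$. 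These charges correspond to D2-D0 states on exceptional curves, present exactly when $Q$ has cycles not passing through all the vertices. Any proof of Conjecture \ref{C2} must therefore use the Fano hypothesis in an essential, structural way; your splitting argument, as sketched, makes no use of it and would ``prove'' a false statement in the almost-Fano cases.
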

Note that Conjecture \ref{C2} is stronger than Conjecture \ref{C1}, since $\ang{-,-}$ 
has rank $2$ for local surfaces
and therefore has a large kernel, including the special vector $\delta$. Both conjectures were supported by
an analysis of the expected dimension of the moduli space of quiver representations in the attractor chamber, and by the fact  that the indices $\Omega_\te(\ga,y)$ in suitable chambers agree with known results for Vafa-Witten invariants on Fano surfaces. If correct, these conjectures open a path to compute refined invariants in any chamber, in particular framed DT invariants in the NC chamber. 

\medskip

In this work, after reviewing the results quoted above in more detail, we shall provide additional support for Conjecture \ref{C1}, and refine it as follows:

\begin{itemize}
\item[i)] For $\cX=K_S$ with $S$ one of the toric Fano surfaces $\IP^2, \IF_0, \dP_1, \dP_2$, 
we compute the attractor indices 
$\Omega_*(\ga,y)$ rigorously, for low values of $\ga$. Our method is to first 
evaluate DT invariants for trivial stability, by extending the \qq{double dimensional reduction}
method of \cite{Morrison:2011rz}, and then to  apply the Joyce-Reineke formula to reach
the attractor point; we find that  both Conjectures \ref{C1} and \ref{C2} are corroborated. The same
holds for some cases with more than one compact divisor, such as $Y^{3,2}$ and 
a crepant resolution of $\IC^3/\IZ_5$.

\item[ii)] More generally, for almost Fano surfaces such as $\IF_2$ or $\PdP_2$, or for
crepant resolutions of the orbifold $\IC^3/\IZ_r$, where $\bZ_r$ acts by
\begin{equation*}
1\mto\diag(\om,\om^k,\om^{-k-1}),\qquad
\omega=e^{2\pi \bi/r},
\end{equation*}
for some $0\le k<r/2$, we find that  Conjecture \ref{C1} is corroborated, while  Conjecture \ref{C2} fails when the quiver has cycles which do not pass through all the vertices. 

\item[iii)] In all considered cases, we find that the invariant $\Om_\te(n\delta,y)$ 
corresponding to $n$ D0-branes is independent of a stability parameter 
$\theta$ (provided the latter is generic) and of the integer $n\geq 1$, and is given by
\be
\label{OmBBS}
\Omstar(n\delta, y) = \Om_\te(n\delta, y)  
=-(y^3+(i+b-3)y+i y\inv),
\ee
where $i$ and $b$ are the numbers of internal and boundary lattice points in the toric diagram 
of \cX. Comparing with the virtual Poincar\'e polynomial $P(\cX;y)$, defined for any algebraic variety in 
\S\ref{sec:refined unframed}
and computed for any toric CY3-fold in Lemma \ref{lm:Ptoric}, we see that \eqref{OmBBS} 
agrees  with the motivic invariant $(-y)^{-3}P(\cX;y)$.
This statement can be deduced from the results of \cite{Behrend:2009dc} on the Hilbert scheme of $n$ points on $\cX$ (see Remark \ref{remarkBBS}).
For  other dimension vectors $\ga$ such that $\ang{\ga,-}=0$, corresponding to D2-D0 brane bound states wrapped on an exceptional curve, the index turns out to be either $0$ or $-y$. 

\item[iv)]  We verify these results by computing the framed DT invariants in the NC chamber using the 
attractor tree formula (or the Joyce-Reineke wall-crossing formula), and comparing against 
the unrefined DT invariants obtained by  counting molten crystals 
\end{itemize}

While we are not able to prove these conjectures yet, we note that for $\cX=K_{\IP^2}$, the vanishing of attractor invariants
for dimension vectors outside the kernel of $\ang{-,-}$ is closely related to the 
scattering diagram construction in \cite{Bousseau:2019ift}. We also note that 
some of the invariants above have been computed independently using exponential network 
techniques \cite{Banerjee:2018syt,Banerjee:2019apt,Banerjee:2020moh}, and agree
with our results  in the unrefined limit.

\medskip

From the physics viewpoint, we find it remarkable that the full BPS spectrum of D4-D2-D0 bound states
along with their framed analogues can be reconstructed from such simple data.  In particular, it shows that all
such BPS states  behave similarly to multi-centered black hole bound states, even though gravity is decoupled and none of the constituents carries enough entropy to form a black hole. An interesting
question is to further separate the attractor index $\Omstar(\gamma)$ into a single-centered invariant
$\OmS(\gamma)$ (also known as pure Higgs index) from additional contributions from \qq{multi-centered scaling solutions} \cite{Bena:2012hf,Lee:2012sc,Manschot:2011xc}. A preliminary analysis using the Coulomb branch formula (see \cite{Manschot:2014fua} and references therein) suggests that for D0-branes on $\cX$, 
\be
\label{OmSgen}
\OmS(n\delta,y) = \Omstar(n\delta,y) + i(y+1/y) = -y^3 - (b-3) y 
\ee
which depends only on the number $b$ of boundary points in the toric diagram. Note that 
$i=0$ for small crepant resolutions, where $\OmS(\gamma,y)$ and $\Omstar(\gamma,y)$ 
necessarily coincide for any $\gamma$ due to the vanishing of $\ang{-,-}$.

\medskip

Finally, we note that for most dimension vectors, the  motivic DT invariants fail to be invariant under $y\mapsto 1/y$. This failure can be traced to the D0-brane invariants $\Omstar(n\delta,y)$ in \eqref{OmBBS} (and D2-D0 branes on exceptional curves when applicable), and is a consequence
of the fact that Poincar\'e duality does not hold for cohomology with compact support on a non-compact
algebraic variety.
Physically, the relevant index should count $L^2$-normalizable bound states, and therefore
the $L^2$-cohomology of $\cX$ and related quiver moduli spaces \cite{Yi:1997eg,Sethi:1997pa}. 
It is natural to speculate that this count can be obtained by retaining the common terms in $\Omega_\te(\ga,y)$
and $\Omega_\te(\ga,1/y)$ as argued in \cite{Lee:2016dbm,Duan:2020qjy}. If so, it would follow from
\eqref{OmSgen} that the $L^2$-analogue of $\OmS(n\delta,y)$ vanishes, and that the 
$L^2$-analogue of $\Omstar(n\delta,y)$ equals $-i(y+1/y)$, where $i$ is the rank of the gauge group in M-theory compactified on $\cX$,
as suggested in \cite[\S 5.1]{Duan:2020qjy}. The independence on $n$ is of course the basic property which allows to view D0-branes as Kaluza-Klein gravitons in M-theory \cite{Witten:1995ex}.

\medskip

The remainder of this article is organized as follows. In \S\ref{sec_inv} we recall basic definitions and properties of DT invariants for quivers with relations, and the relation between framed and unframed invariants for symmetric quivers. In \S\ref{sec_wc} we extend these relations to non-symmetric quivers using wall-crossing formulae,  introduce the notion of attractor invariants, and present one version of the attractor tree formula. In \S\ref{sec:tilingcrystal} we review the relation between brane tilings and 
quivers with potential for singular CY 3-folds, and the relation between molten crystals and 
non-commutative DT invariants. In \S\ref{sec_small} we survey some known results about motivic DT invariants for small crepant resolutions, i.e.~toric CY 3-folds without compact divisors. Finally, in \S\ref{sec_surface} we present a method for computing motivic DT invariants for local   toric CY 3-folds with compact divisors, and apply it to compute attractor indices in a variety of examples, collecting evidence for the conjectures mentioned earlier.

\medskip

\noindent {\it Acknowledgments.} 
We are grateful to 
Sergey Alexandrov, Sibasish Banerjee, Guillaume Beaujard, Pierrick Bousseau, Pierre Descombes,
Sebastian Franco, Yang-Hui He, Pietro Longhi, Jan Manschot, Markus Reineke, Olivier Schiffmann,
Piljin Yi for useful discussions.

\vspace*{1cm}

\section{Invariants of quivers with potentials \label{sec_inv}}

In this section we review basic facts about representations of quivers with potential, and the
relations between framed and unframed, numerical and motivic Donaldson-Thomas invariants. 

\subsection{Jacobian algebras of quivers with potentials}
\label{sec:Jacobian}
Let $Q=(Q_0,Q_1,s,t)$ be a quiver with the set of vertices $Q_0$, the set of arrows $Q_1$, and with $s: Q_1\to Q_0$ and $t: Q_1\to Q_0$ being the source and target of the arrows.
The paths in $Q$ form a basis of the \idef{path algebra} $\bC Q$, with the composition defined by concatenation of paths.
We define the source and target of 
a path  $p=a_n\dots a_1$ to be $s(a_1)$ and $t(a_n)$, respectively. The path 
$p$  is called a \idef{cycle} if $t(p)=s(p)$.
We define the (cyclic) \idef{derivative} of $p$ \wrt $a\in Q_1$ to be
\begin{equation}
\frac{\dd p}{\dd a}=\sum_{a_i=a}a_{i+1}\dots a_na_1\dots a_{i-1}.
\end{equation}
 A \idef{quiver with potential}
is a pair $(Q,W)$, where $W$ is a linear combination of cycles in $Q$. We denote by $Q_2$
the set of cycles contributing to $W$.
We define the derivative $\frac{\dd W}{\dd a}$ of $W$ \wrt $a\in Q_1$ by linearity.
Define the \idef{Jacobian algebra}
$$J=J(Q,W)=\bC Q/\rbr{\dd W/\dd a\col a\in Q_1}.$$

\begin{example}\label{ex:C3}
Let $(Q,W)$ be the quiver with one vertex, $3$ loops $x,y,z$, and potential $W=xyz-xzy$:
 \begin{center}
\begin{tikzpicture}[inner sep=2mm,scale=2]
  \node (a) at ( 0,0) [circle,draw] {$1$};
    \draw [->] (a) edge[loop above] node{$x$} (a);
        \draw [->] (a) edge[loop left] node{$y$} (a);
            \draw [->] (a) edge[loop right] node{$z$} (a);
\end{tikzpicture}
\end{center}
Then 
\begin{equation}
\frac{\dd W}{\dd z}=xy-yx=[x,y]
\end{equation}
 and similarly for other derivatives.
The path algebra $\bC Q$ is the free algebra $\bC\ang{x,y,z}$ having non-commuting generators $x,y,z$,
while the Jacobian algebra $J(Q,W)$ is the polynomial algebra $\bC[x,y,z]$.
Note that this algebra is a coordinate ring of a CY3-fold $\bC^3$.
Usually our Jacobian algebras will be non-commutative, although they will still correspond to some CY3-fold.
\end{example}

Define a \idef{cut} of $(Q,W)$ to be a subset $I\sbs Q_1$ such that every nonzero term of $W$ contains exactly one arrow from $I$.
Given a cut, we define a \qq{partial} Jacobian algebra
\begin{equation}
J_I=J_I(Q,W)=\bC Q'/(\dd W/\dd a\col a\in I),\qquad Q'=(Q_0,Q_1\ms I).
\end{equation}
There is a natural monomorphism and a natural epimorphism of algebras
\begin{equation}
i:J_I\emb J,\qquad p:J\twoheadrightarrow J_I
\end{equation}
such that $p\, i=1$.

\begin{example}
Let $(Q,W)$ be as in Example \ref{ex:C3} and let $I=\set z$.
Then $J=\bC[x,y,z]$ and $J_I=\bC[x,y]$.
The monomorphism $i:J_I=\bC[x,y]\emb J=\bC[x,y,z]$ corresponds to the projection $\bC^3\to\bC^2$, $(x,y,z)\mto(x,y)$ and the epimorphism $p:J\to J_I$ corresponds to the inclusion $\bC^2\to\bC^3$, $(x,y)\mto(x,y,0)$.
\end{example}

\subsection{Stability conditions\label{sec_stab}}
A $Q$-representation is a tuple $((M_i)_{i\in Q_0},(M_a)_{a\in Q_1})$ such that $M_i$ is a finite-dimensional vector space for $i\in Q_0$ and $M_a:M_i\to M_j$ is a linear map for all arrows $a:i\to j$ in $Q$.
Given a representation $M$ of $Q$, we define its \idef{dimension vector} 
\begin{equation}
\udim M=(\dim M_i)_{i\in Q_0}\in\bN^{Q_0},\qquad \bN=\bZ_{\ge0}.
\end{equation}
Define a 
\idef{stability function} (or a central charge)
to be a linear map
\begin{equation}
\Z:\bZ^{Q_0}\to\bC
\end{equation}
such that, for all canonical basis vectors $e_i\in\bN^{Q_0}$,
\begin{equation}
\Z(e_i)\in\cH=\sets{re^{\pi \bi\vi}}{r>0,\vi\in(0,1]}\qquad
\forall i\in Q_0.
\end{equation}
We will often write $\Z(M)=\Z(\udim M)$ for a representation $M$.
For any $0\ne z\in\bC$, let $\Arg z$ be the unique $\vi\in(-\pi,\pi]$ such that $z\in\bR_{>0}e^{\bi\vi}$.
For any nonzero representation $M$, define its \idef{phase} to be
\begin{equation}
\vi(M)=\frac1\pi\Arg \Z(M)\in(0,1].
\end{equation}
A representation $M$ is called \idef{$\Z$-semistable} if, for any subrepresentation $0\ne N\sbs M$, we have
\begin{equation}
\vi(N)\le \vi(M).
\end{equation}

\begin{remark}
Let us relate the above definition of a stability function to the notion of a stability condition on a triangulated category \cite{bridgeland_stability}.
A \idef{Bridgeland stability condition} on a triangulated category $\cD$  is a pair $z=(\Z,\cA)$, where 
$\cA$ is the heart of a bounded $t$-structure on $\cD$ (the corresponding $t$-structure $\cD^{\le0}\sbs\cD$ is uniquely determined by \cA as the extension-closed subcategory generated by $\cA[n]$ for $n\ge0$) and $\Z:K(\cA)\to\bC$ is a linear map such that $\Z(F)\in\cH$ for all $0\ne F\in\cA$ and $\Z$ satisfies the Harder-Narasimhan (HN) property
\cite{bridgeland_stability}.
In our earlier discussion the required abelian category is the category of representations $\cA=\Rep Q$ 
(and $\cD=D^b(\cA)$).
The required linear map is the composition $K(\cA)\xto\udim\bZ^{Q_0}\xto\sZ\bC$.
\end{remark}

We can formulate the above semistability condition using slopes as follows.
Let us decompose
\begin{equation}
\label{Zthrho}
\Z=-\te+\bi\rho,\qquad \te,\rho:\bZ^{Q_0}\to\bR.
\end{equation}
where $\theta,\rho$ are real-valued linear forms.
As before, we will write $\te(M)=\te(\udim M)$, $\rho(M)=\rho(\udim M)$.
For any non-zero representation $M$, define its \idef{slope}
\begin{equation}
\mu(M)=\frac{\te(M)}{\rho(M)}
=-\frac{\Re \Z(M)}{\Im \Z(M)}
\in(-\infty,+\infty].
\end{equation}
Then
\begin{equation}
\vi(N)\le\vi(M)\iff \mu(N)\le\mu(M)
\end{equation}
and we can use the latter condition to test semistability.
We can change the stability function without changing the stability condition by considering
\begin{equation}
\Z'=\Z-c\rho=-(\te+c\rho)+\bi\rho,\qquad c\in\bR.
\end{equation}
In particular, assume that we have a representation $M$
with $\rho(M)\ne0$.
Then
$\te'=\te-\frac{\te(M)}{\rho(M)}\rho$
satisfies $\te'(M)=0$.
Therefore $M$ is semistable if and only if, for any subrepresentation $N\sbs M$, we have $\te'(N)\le 0$.
The elements $\te\in\Hom(\bZ^{Q_0},\bR)\iso\bR^{Q_0}$ are sometimes called \idef{Fayet-Iliopoulos parameters} or 
\idef{weights} (note that $\bZ^{Q_0}$ can be interpreted as the root lattice of the root system associated with~$Q$).
A representation $M$ is called \idef{$\te$-semistable} if $\te(M)=0$ and, for any subrepresentation $N\sbs M$, we have 
\begin{equation}
\te(N)\le 0.
\end{equation}
We similarly define $\te$-semistable representations of the Jacobian algebra.
Note that the condition of $\te$-semistability becomes trivial for $\te=0$.

\subsection{Moduli spaces of representations}
Given $d\in\bN^{Q_0}$, define the \idef{space of representations}
\begin{equation}
R(Q,d)=\bop_{a:i\to j}\Hom(\bC^{d_i},\bC^{d_j}).
\end{equation}
It is equipped with an action of the group $G_d=\prod_i\GL_{d_i}(\bC)$ (in physics, the complexified gauge group) so that the orbits correspond to isomorphism classes of representations having dimension vector $d$.

\medskip

Let $R(J,d)\sbs R(Q,d)$ be the closed subvariety of representations of the Jacobian algebra $J=J(Q,W)$, i.e.\ of representations satisfying $\partial W/\partial a=0$ for all $a\in Q_1$.
This subvariety is also equipped with an action of the group $G_d$.
We can interpret $R(J,d)$ as a critical locus of a map on $R(Q,d)$ as follows.
For any representation $M\in R(Q,d)$ and any cycle $p=a_n\dots a_1$, define
\begin{equation}
\tr(p|M)=\tr(M_{a_n}\dots M_{a_1})
\end{equation}
and define $\tr(W|M)$ by linearity.

\begin{lemma}[See \eg\ {\cite{segal_a}}]
Consider the map
\begin{equation}
\om_d =\tr(W|-):R(Q,d)\to\bC,\qquad M\mto\tr(W|M).
\end{equation}
Then $R(J,d)=\crit\om_d$.
\end{lemma}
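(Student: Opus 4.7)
The plan is to verify the equality $R(J,d) = \crit \om_d$ by a direct computation of the differential of $\om_d$ in the linear coordinates $(M_a)_{kl}$ on the affine space $R(Q,d)$, and to match the resulting equations with the defining ideal of $R(J,d)$.

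First I would handle a single cycle $p = a_n a_{n-1} \cdots a_1$ in $Q$. From $\tr(p|M) = \tr(M_{a_n} \cdots M_{a_1})$ and the Leibniz rule, the derivative with respect to entries of $M_a$ receives a contribution from each index $i$ with $a_i = a$. Using the cyclic invariance of the trace to rotate $M_{a_i}$ to one end of the product, the $i$-th contribution to the matrix-valued derivative $\pa \tr(p|M)/\pa M_a$ is precisely the matrix product $M_{a_{i+1}} \cdots M_{a_n} M_{a_1} \cdots M_{a_{i-1}}$, which is the evaluation at $M$ of the term $a_{i+1}\cdots a_n a_1 \cdots a_{i-1}$ appearing in the cyclic derivative $\dd p/\dd a$.

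Summing over $i$ and extending linearly to $W$ would then yield
\[
\frac{\pa \om_d}{\pa M_a}\bigg|_M = \left(\frac{\dd W}{\dd a}\right)\bigg|_M
\]
as an element of $\Hom(\bC^{d_{t(a)}}, \bC^{d_{s(a)}})$, for every $a \in Q_1$. A representation $M$ is therefore a critical point of $\om_d$ if and only if every such matrix vanishes, which is exactly the defining condition of $R(J,d) \subset R(Q,d)$. This gives $\crit \om_d = R(J,d)$.

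The only delicate step is the bookkeeping of conventions — the right-to-left path composition $p = a_n \cdots a_1$, the cyclic rotation bringing $M_{a_i}$ to the end of the product, and the matching of matrix indices with the source and target of each arrow — but this is routine rather than a genuine obstacle. No deeper input is required; the lemma is essentially a transcription of the definition of the cyclic derivative into matrix language via the cyclic invariance of the trace.
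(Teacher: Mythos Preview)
Your proof is correct and is precisely the standard argument: differentiate $\tr(W|M)$ entrywise in the matrix coordinates, use the cyclic invariance of the trace to rotate each occurrence of $M_a$ to an end of the product, and observe that the resulting matrix of partials is exactly the evaluation of the cyclic derivative $\dd W/\dd a$ at $M$ (up to a harmless transpose, depending on how one indexes the matrix derivative). The paper does not actually give a proof of this lemma --- it simply cites \cite{segal_a} --- so there is nothing further to compare against; your writeup is what one would find spelled out in that reference.
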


For any weight $\te\in\bR^{Q_0}$ such that $\te\cdot d=0$, let 
\begin{equation}
R^\te(Q,d)\sbs R(Q,d),\qquad
R^\te(J,d)\sbs R(J,d)
\end{equation}
be open subsets of $\te$-semistable representations.
We consider moduli spaces
\begin{equation}
M^\te(Q,d)=R^\te(Q,d)\GIT G_d,\qquad
M^\te(J,d)=R^\te(J,d)\GIT G_d,
\end{equation}
where $\GIT$ denotes the good quotient \cite[\S4.2]{huybrechts_geometry}, 
and moduli stacks
\begin{equation}
\cM^\te(Q,d)=[R^\te(Q,d)/G_d],\qquad
\cM^\te(J,d)=[R^\te(J,d)/G_d].
\end{equation}
For $\te=0$ (trivial stability), we have $R(J,d)=R^0(J,d)$ and we define
\begin{equation}
M(J,d)=M^0(J,d),\qquad
\cM(J,d)=\cM^0(J,d).
\end{equation}
Similarly, we define $R^\Z(J,d)$, $M^\Z(J,d)$ and  $\cM^\Z(J,d)$, for any stability function $\Z$
using $\te$-semistability for $\te$ given by the decomposition \eqref{Zthrho}.

\subsection{NCDT and other numerical framed invariants}
Given a vector $\f\in\bN^{Q_0}$ (which we will call a \idef{framing vector}), define a \idef{framed representation} of $Q$ to be a representation $M$ of $Q$ equipped with an element
\begin{equation}
s\in\bop_{i\in Q_0}\Hom(\bC^{\f_i},M_i)
\iso\Hom\rbr{P,M},\qquad
P=\bop_{i\in Q_0}P_i^{\oplus \f_i}
,
\end{equation}
where $P_i$ is the indecomposable projective representation corresponding to a vertex $i\in Q_0$.
We can interpret a framed representation as a representation of a different quiver as follows.
Define a quiver $Q^\f$, called a \idef{framed quiver}, to be a quiver obtained from $Q$ 
by adding one vertex labelled by $\infty$ and $\f_i$ arrows $\infty\to i$, for all $i\in Q_0$.
Then a framed representation of $Q$ can be identified with a representation $M^\f=(M,M_\infty,s)$ of $Q^\f$ such that $\dim M_\infty=1$. Physically, the framing vertex corresponds to an infinitely heavy source, also known as defect, as we discuss in \S\ref{sec:tilings}. This motivates the label
$\infty$ for the framing vertex. 

\medskip

For any dimension vector $d\in\bN^{Q_0}$, let $d^\f=(d,1)\in\bN^{Q^\f_0}$.
Define the \idef{space of framed representations}
\begin{equation}
R^\f(Q,d)=\bop_{i\in Q_0}\Hom(\bC^{\f_i},\bC^{d_i})\oplus R(Q,d)=R(Q^\f,d^\f).
\end{equation}
Similarly, define $J^\f=J(Q^\f,W)$ and 
\begin{equation}
R^\f(J,d)
=\bop_{i\in Q_0}\Hom(\bC^{\f_i},\bC^{d_i})\oplus R(J,d)
=R(J^\f,d^\f). 
\end{equation}

A framed representation $M^\f$ having dimension vector $d^\f=(d,1)$ is called \idef{\NCDT-stable} (where  \NCDT stands for \qq{non-commutative}) if it is generated by $M_\infty$.
This means that for any subrepresentation $N\sbs M^\f$ with $N_\infty\ne0$, we have $N=M^\f$.
This stability corresponds to the weights
\begin{equation}
\te^\f=(\te,-\te\cdot d),\qquad \te\in\bR^{Q_0}_{<0},
\end{equation}
of the framed quiver, satisfying $\te^\f\cdot(d,1)=0$.
For convenience, let us choose a specific vector in this chamber
\begin{equation}\label{NCDT stability}
\te^\f_{\NCDT}=\rbr{-\de,\de\cdot d},\qquad
\de=(1,\dots,1)\in\bZ^{Q_0}.
\end{equation}

Let 
\begin{equation}
R^{\f,{\NCDT}}(J,d)=R^{\te^\f_{\NCDT}}(J^\f,d^\f)\sbs R^\f(J,d)
\end{equation}
denote the subspace of \NCDT-stable framed representations
and
\begin{equation}
M^{\f,{\NCDT}}(J,d)=R^{\f,{\NCDT}}(J,d)/G_d
\end{equation}
be the corresponding moduli space (a geometric quotient).
Then we define the partition function of \idef{numerical NCDT invariants}, 
or \idef{unrefined NCDT partition function}, as the formal sum
\begin{equation}\label{part fun1}
Z_{\f,\NCDT}(x)
=\sum_{d\in \IN^{Q_0} }
e(M^{\f,{\NCDT}}(J,d),\nu)x^d,
\end{equation}
where $x^d = \prod_{i\in Q_0} x_i^{d_i}$, and $e(X,\nu)=\sum_{k\in\bZ} k \, e(\nu\inv(k))$ is the 
weighted Euler number of $X$, with $\nu:X\to\bZ$ the Behrend
function \cite{behrend_donaldson-thomas}.
We will discuss later more explicit formulas for this expression.
Note that if $X$ is smooth, then $e(X,\nu)=(-1)^{\dim X}e(X)$.
Usually we will consider the basic framings $\f=e_i$ for $i\in Q_0$, and then denote $Z_{\f,\NCDT}$ by $Z_{i,\NCDT}$.

\medskip

More generally, for any weight $\te\in\bR^{Q_0}$,
consider the weight
\begin{equation}
\label{framed te}
\te^\f=(\te,-\te\cdot d)
\end{equation}
of the framed quiver.
It satisfies $\te^\f\cdot (d,1)=0$.
A framed representation $M^\f$ having dimension vector $(d,1)$ is $\te^\f$-semistable if and only if for any unframed subobject $N\sbs M^\f$ we have $\te(N)\le 0$ and for any unframed quotient $M^\f\twoheadrightarrow N$, we have $\te(N)\ge0$.


\medskip

Let us assume that $\te\cdot d'\ne0$ for all $0< d'\le d$, so that the inequalities above are strict.
Under this assumption the action of $G_d$ on $R^{\f,\te}(J,d)=R^{\te^\f}(J^\f,d^\f)$ is free and we define the moduli space
\begin{equation}
M^{\f,\te}(J,d)=R^{\f,\te}(J,d)/G_d
\end{equation}
and the partition function of \idef{framed numerical 
DT invariants} (assuming that $\te\cdot d\ne0$ for all $d\ne0$)
\begin{equation}\label{part fun2}
Z_{\f,\te}(x)
=\sum_{d\in \IN^{Q_0} }
e(M^{\f,\te}(J,d),\nu)x^d.
\end{equation}
As before, we denote $Z_{e_i,\te}$ for the basic framings by $Z_{i,\te}$.

\begin{example}
Let $Q$ be a quiver with one vertex and no loops.
The potential is necessarily trivial.
Consider a framing vector $\f\in\bN$.
For any $d\in\bN$, the space of framed stable representations
$R^{\f,\NCDT}(J,d)=R^{\f,\NCDT}(Q,d)$ is given by the set of epimorphisms $s:\bC^\f\to\bC^d$,
which is empty unless $d\leq \f$.
Taking the quotient by $G_d=\GL_d(\bC)$, we obtain for $d\leq \f$
the Grassmannian
\begin{equation}
M^{\f,\NCDT}(J,d)=\Gr(\f,d).
\end{equation}
Its dimension is equal to $d(\f-d)$
and its Euler number is the binomial coefficient $\binom \f d$.
Therefore
\begin{equation}\label{ncdt vect spaces}
Z_{\f,\NCDT}(x)=\sum_{d\in\bN}(-1)^{\f d+d}\binom \f dx^d
=(1-(-1)^{\f}x)^\f.
\end{equation}
\end{example}

\subsection{Refined unframed invariants}
\label{sec:refined unframed}
In this section we will define refined invariants of the stacks $\cM(J,d)$ of representations of the Jacobian algebra $J=J(Q,W)$.
As we will see in the next section, these invariants can be used to compute framed numerical invariants defined in \eqref{part fun1} and \eqref{part fun2}.

\medskip

For any algebraic variety $X$, 
let $P(X;y)$ denote its virtual Poincar\'e polynomial (sometimes we will denote it by $[X]$, the motivic class of $X$).  It is additive with respect to complements and, for a smooth projective variety~$X$, equals 
\begin{equation}
\label{defP}
P(X;y)=\sum_{i=0}^{\dim X} \dim H^i(X,\bQ) \,(-y)^i.
\end{equation}
In that case the Laurent polynomial $(-y)^{-\dim X}P(X;y)$ is symmetric under $y\to 1/y$, and can be interpreted as a character of the $\SU(2)$-Lefschetz action on $H^*(X,\bC)$; if $X$ is the moduli
space of some D-brane bound state localized in $\IR^3$, then the Lefschetz action realizes the rotations in $\IR^3$. 
If $X$ is non-projective (but has pure Hodge structure), then $P(X;y)$ is defined using cohomology with compact support,
\begin{equation}
P(X;y)
=\sum_{i=0}^{\dim X} \dim H^i_c(X,\bQ) \,(-y)^i,
\end{equation}
and the Laurent polynomial $(-y)^{-\dim X}P(X;y)$ need no longer be symmetric under $y\to 1/y$.
In either case, the specialization of $P(X;y)$ at $y=1$ gives the Euler number.
Note that for the affine plane  $P(\bA^1;y)=
P(\bP^1;y)-1=y^2$
can be identified with the number $q$ of points of $\bA^1$ over a finite field $\bF_q$.
We shall often omit the second argument and use $q$ and $y^2$ interchangeably.

\medskip

For any global stack $[X/G]$ (with an appropriate group~$G$)
we define $P([X/G])=P(X)/P(G)$.
Note that this class usually no longer specializes to Euler numbers but rather
has a pole at $y=\pm 1$, due to the vanishing of 
\begin{equation}
\label{PGLn}
P(\GL_n)=q^{n^2}(q\inv)_n,\qquad (q)_n=(q;q)_n=\prod_{i=1}^n(1-q^i),
\end{equation} 
at $q=y^2=1$.

\medskip

Let us now assume that the weight $\te$ is  such that $\te\cdot d\ne0$ for all $d\in\bN^{Q_0}\ms\set0$.
We have seen that in this case the action of $G_d$ on $R^{\f,\te}(J,d)$ is free.
Therefore we can replace the corresponding stack by the moduli space $M^{\f,\te}(J,d)$ and define the
partition function of \idef{refined framed DT invariants} informally as 
\begin{equation}
\label{partfun2ref}
Z_{\f,\te}^\rf(x)
=\sum_{d\in \IN^{Q_0}} (-y)^{\hi_Q(d,d)-\f\cdot d}
P(M^{\f,\te}(J,d)) \, x^d.
\end{equation}
where $\hi_Q$ is the Euler form defined in \eqref{defEuler}.
The partition function of numerical framed DT invariants 
\eqref{part fun2} is obtained by specialization at $y=1$.
In particular, we define $Z_{\f,\NCDT}^\rf=Z_{\f,-\de}^\rf$,
\cf\eqref{NCDT stability}.

\begin{remark}
\label{rm:exp motive}
The definition \eqref{partfun2ref} is not precise. More rigorously, one needs to consider the exponential motivic class of the map $\om_d:M^{\f,\te}(Q,d)\to\bA^1$ (note that $M^{\f,\te}(J,d)$ is its critical locus) instead of the invariants used above. This technicality can be circumvented when the quiver has a cut $I\subset Q_1$ (see \S\ref{sec:Jacobian}), which is the case in all cases that we consider.
Note also that
\begin{equation}
\f\cdot d-\hi_Q(d,d)
=\dim R^\f(Q,d)-\dim G_d=\dim M^{\f,\te}(Q,d).
\end{equation}
\end{remark}

Next we define unframed DT invariants.
Due to the fact that $G_d$ on $R(J,d)$ (or $R^\te(J,d)$) generally acts non-freely, 
we can only consider refined invariants for the corresponding stacks, and these invariants may not have a specialization at $y=1$. For trivial stability condition, 
we define the generating function of \idef{unframed refined invariants} (often called \qq{stacky} invariants) informally as 
\begin{equation}
\label{defhOm}
\hOm(x)
=\sum_{d\in \IN^{Q_0}}\hOm_0(d,y)x^d 
=\sum_{d\in \IN^{Q_0}} (-y)^{\hi_Q(d,d)}\frac{P(R(J,d))}{P(G_d)}x^d.
\end{equation}
Remark \ref{rm:exp motive} again applies.
In the presence of a cut $I\sbs Q_1$, we have a rigorous definition \cite{morrison_motivica}
\begin{equation}
\label{stacky inv with a cut}
\hOm(x)
=\sum_{d\in \IN^{Q_0}} (-y)^{\hi_Q(d,d)+2\ga_I(d)}\frac{P(R(J_I,d))}{P(G_d)}x^d,\qquad
\ga_I(d)=\sum_{(a:i\to j)\in I}d_id_j
\end{equation}
which will be useful for practical computations.

\begin{example}[Quantum dilogarithm]
\label{ex:dilog}
Let us consider a quiver $Q$ with one vertex and no loops.
Its category of representations is equivalent to the category of vector spaces.
The corresponding partition function
\begin{equation}
\hOm(x)
=\sum_{d\in\bN}\frac{(-q^\oh)^{d^2}}{P(\GL_d)}x^d
=\sum_{d\in\bN}\frac{(-q^\oh)^{-d^2}}{(q\inv)_d}x^d
=\Exp\rbr{\frac{-q^\oh x}{q-1}}
\end{equation}
is called the \idef{quantum dilogarithm}, denoted by $\bE(x)$. The plethystic exponential $\Exp$ used here is defined in \S\ref{sec:framed-unframed}.
\end{example}

For any stability function $\Z$ and any ray $\ell\sbs\bC$, we similarly define
\begin{equation}
\hOm_{\Z,\ell}(x)=\sum_{\Z(d)\in\ell}\hOm_\Z(d,y)x^d
=\sum_{\Z(d)\in\ell} (-y)^{\hi_Q(d,d)}\frac{P(R^\Z(J,d))}{P(G_d)}x^d.
\end{equation}
For any weight $\te\in\bR^{Q_0}$, we define
\begin{equation}
\hOm_{\te}(x)
=\sum_{\te\cdot d=0}\hOm_\te(d,y)x^d
=\sum_{\te\cdot d=0} (-y)^{\hi_Q(d,d)}\frac{P(R^\te(J,d))}{P(G_d)}x^d.
\end{equation}

\begin{remark}
The series $\hOm_{\Z,\ell}(x)$ and $\hOm_\te(x)$ are related as follows.
Assume that $\Z=-\te+\bi\rho$ and $\ell=\bR_{\ge0}z$ for some $z=x+\bi y\in\bC$.
Then $\Z(d)\in\ell$ if and only if $\Im(\Z(d)\ub z)=0$.
This means that $\te'(d)=0$ for $\te'=x\rho-y\te$ and we obtain 
$\hOm_{\Z,\ell}(x)=\hOm_{\te'}(x)$.
Conversely, given a weight $\te$, consider the stability function $\Z=-\te+i\de$ and the ray $\ell=\bR_{\ge0}\bi$.
Then $\Z(d)\in\ell$ if and only if $\te(d)=0$ and 
we obtain $\hOm_{\Z,\ell}(x)=\hOm_{\te}(x)$.
\end{remark}

\subsection{Relation between framed and unframed invariants}
\label{sec:framed-unframed}
In this section we will assume that the quiver $Q$ is symmetric, meaning that the number of arrows $i\to j$ is equal to the number of arrows $j\to i$ for all $i,j\in Q_0$.
The general case will be considered later in \S\ref{sec:framed-unframed2} after discussing
wall-crossing formulae.

\medskip

First, let us introduce the \idef{plethystic exponential} $\Exp:\AA^+\to 1+\AA^+$, where $\AA^+$ is the maximal ideal in 
\begin{equation}
\AA=\bQ\lser y\pser{x_1,\dots, x_n}
\end{equation}
(here $\bQ\lser y$ is the field of Laurent series).
It is a continuous map satisfying $\Exp(f+g)=\Exp(f)\Exp(g)$
and defined on monomials by 
\begin{equation}\label{Exp}
\Exp(y^kx^d)
=\sum_{m\ge0}y^{mk}x^{md}.
\end{equation}
Generally, one has
\begin{equation}
\Exp(f)= \exp\left( \sum_{m\ge1} \frac{1}{m} f(y^m,x_1^m,\dots,x_n^m) \right).
\end{equation}
Its inverse is the \idef{plethystic logarithm}
\begin{equation}\label{Log}
\Log(f)= \sum_{m\ge1} \frac{\mu(m)}{m} \log\left(
f(y^m,x_1^m,\dots,x_n^m)
\right).
\end{equation}
where $\mu$ is the M\"obius function.

\medskip

Assuming that  $Q$ is symmetric, we define invariants $\Om(d,y)$, called \idef{integer DT invariants}, by factorizing the generating function of stacky invariants \eqref{defhOm} as 
\begin{equation}
\label{hOmfromOm}
\hOm(x)=\Exp\rbr{\sum_{d\in \IN^{Q_0}\ms\set0}  \frac{\Om(d,y)\, x^d}{y\inv-y}}.
\end{equation}
Here $\Omega(d,y)$ is a priori a rational function in $y$, but in all our examples it is a Laurent polynomial in $y$ with integer coefficients. Note that the denominator
inside the plethystic exponential is sometimes chosen to be $y^2-1$ or $y-1/y$  in the literature, 
leading to different conventions for the integer DT invariants.

For any vector $w\in\bZ^n$, define the algebra homomorphisms 
\begin{equation}
S_w:\AA\to \AA,\qquad
x^d\mto (-y)^{w\cdot d}x^d. 
\end{equation}
\begin{equation}
\bar S_w:\AA\to \AA,\qquad
x^d\mto (-1)^{w\cdot d}x^d. 
\end{equation}


\begin{theorem}[See \cite{morrison_motivica,mozgovoy_wall-crossing,morrison_motivic}]
\label{th:NCDT from unframed}
Let $Q$ be a symmetric quiver. Then
\begin{equation}
Z^\rf_{\f,\NCDT}(x)
=S_{-\f}\Exp\rbr{ \sum_{d\in \IN^{Q_0}} 
\frac{(y^{2\f\cdot d}-1)\Om(d,y)\, x^d}{y\inv-y}}.
\end{equation}
In particular, numerical NCDT invariants satisfy
\begin{equation}
\label{eq:num NCDT}
Z_{\f,\NCDT}(x)
=\bar S_{\f}\Exp\rbr{
-\sum_{d\in \IN^{Q_0}} (\f\cdot d) \, \Om(d,1) \, x^d}.
\end{equation}
\end{theorem}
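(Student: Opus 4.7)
The plan is to apply the motivic Kontsevich--Soibelman wall-crossing formula (KSWCF) in the quantum torus $\cT_{Q^\f}$ of the framed quiver $Q^\f$, using the symmetry hypothesis on $Q$ in an essential way. First I construct $\cT_{Q^\f}$ with generators $\hat x^d$ (for $d\in\bN^{Q_0}$) and $\hat x_\infty$: since $Q$ is symmetric, the antisymmetrized Euler form $\ang{-,-}_Q$ vanishes on $\bN^{Q_0}$, so the $\hat x^d$ mutually commute, while $\ang{(d,0),(0,1)}_{Q^\f}=\f\cdot d$ forces $\hat x^d \hat x_\infty = (-y)^{2\f\cdot d}\, \hat x_\infty\, \hat x^d$. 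Both $\hOm(x)$ and the sought-after $Z^\rf_{\f,\NCDT}(x)$, suitably normalized, lift to elements of $\cT_{Q^\f}$.

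Next I compare two HN-type factorizations of the universal motivic generating function $\bA_{Q^\f}=\sum_\ga [R(J^\f,\ga)/G_\ga]\,\hat x^\ga\in\cT_{Q^\f}$. For any generic stability $\te^\f$, KSWCF writes $\bA_{Q^\f}$ as a product (ordered by slope) of slope-fixed semistable contributions, and asserts this factorization is stability-independent. I compare the NCDT chamber $\te^\f_\NCDT=(-\delta,\delta\cdot d)$ with the opposite chamber $(+\delta,-\delta\cdot d)$: in the symmetric case $\hOm_\te$ does not depend on $\te$ (by KSWCF applied to the unframed quiver), so the unframed factor is $\hOm(\hat x)$ in both chambers; in the opposite chamber no dimension $(d,1)$ with $d>0$ is semistable, so the framed factor reduces to $\hat x_\infty$ alone; and in the NCDT chamber the framed factor encodes $Z^\rf_{\f,\NCDT}$. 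Since the slopes of the framed and unframed pieces are swapped between the two chambers, equating the factorizations isolates the framed NCDT factor as a conjugate of $\hat x_\infty$ by $\hOm(\hat x)$.

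The third step is to compute this conjugation explicitly. Conjugation by $\hat x_\infty$ is the ring automorphism $\sigma$ of $\cT_{Q^\f}$ sending $\hat x^d\mapsto y^{2\f\cdot d}\hat x^d$ (the sign disappears since the exponent is even), fixing $y$, and commuting with the Adams operations $(y,\hat x^d)\mapsto(y^m,\hat x^{md})$ --- hence with the plethystic exponential. Applied to $\hOm(\hat x)=\Exp\bigl(\sum_d \tfrac{\Om(d,y)}{y^{-1}-y}\,\hat x^d\bigr)$, this yields the element
\begin{equation*}
\Exp\Bigl(\sum_{d\in\bN^{Q_0}} \frac{(y^{2\f\cdot d}-1)\,\Om(d,y)}{y^{-1}-y}\,\hat x^d\Bigr)\in\cT_{Q^\f}.
\end{equation*}
Translating back from $\cT_{Q^\f}$ to the ordinary power series $Z^\rf_{\f,\NCDT}(x)$ absorbs the $(-y)^{-\f\cdot d}$ normalization of definition \eqref{partfun2ref} into the shift operator $S_{-\f}$, giving the first formula. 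The numerical statement \eqref{eq:num NCDT} then follows by specialization $y\to 1$: the fraction $(y^{2\f\cdot d}-1)/(y^{-1}-y)\to -\f\cdot d$ by L'H\^opital, $\Om(d,y)\to\Om(d,1)$, and $S_{-\f}$ degenerates to $\bar S_\f$ since $(-1)^{-\f\cdot d}=(-1)^{\f\cdot d}$.

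The main obstacle is the rigorous definition of the class $[R(J^\f,\ga)/G_\ga]$ in the presence of a nontrivial superpotential, since one must replace ordinary motivic classes by the exponential/vanishing-cycle motive of $\om_\ga:R(Q^\f,\ga)\to\bA^1$ (cf.\ Remark \ref{rm:exp motive}). This obstruction vanishes under the mild hypothesis---satisfied throughout the examples of the paper---that $(Q,W)$ admits a cut $I$: then $J_I$ has no potential, $R(J_I,\ga)$ is an honest variety, and the wall-crossing argument above reduces to a purely algebraic manipulation in the commutative part of $\cT_{Q^\f}$ via \eqref{stacky inv with a cut}. Sign and orientation bookkeeping in the slope ordering is tedious but mechanical.
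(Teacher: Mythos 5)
Your argument is correct and is essentially the route the paper itself indicates: the two-chamber comparison in the framed quantum torus is precisely the semi-primitive wall-crossing argument mentioned in the remark following the theorem, and your conjugation computation reproduces the general identity $Z^\rf_{\f,\NCDT}=S_\f \hOm\circ S_{-\f}\hOm\inv$ of \S\ref{sec:framed-unframed2} specialized to the symmetric case, where the unframed variables commute and the automorphism $x^d\mto y^{2\f\cdot d}x^d$ passes through $\Exp$. The bookkeeping of the $(-y)^{-\f\cdot d}$ normalization via $S_{-\f}$ and the $y\to 1$ specialization are handled correctly.
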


\begin{remark}
As we explain in \S\ref{sec:framed-unframed2}, these formulae follow by applying the semi-primitive wall-crossing formula, 
since $Z_{\f,\theta}^\rf$ is equal to $1$ for a stability parameter $\te$ with $\theta_i>0$ for $i\in Q_0$.
\end{remark}

\begin{theorem}[See \cite{morrison_motivica,mozgovoy_wall-crossing}]
\label{th:framed from unframed}
Let $Q$ be a symmetric quiver and let weight $\te$ be such that $\te\cdot d\ne0$ for all $d\in\bN^{Q_0}\ms\set0$.
Then the partition function of refined framed DT invariants is given  by
\begin{equation}
Z^\rf_{\f,\te}(x)=S_{-\f}\Exp\rbr{\sum
_{\ov{d\in\IN^{Q_0}}{\te\cdot d<0}}
\frac{(y^{2\f\cdot d}-1)\Om(d,y)\, x^d}{y\inv-y}}.
\end{equation}
In particular, numerical framed DT invariants satisfy
\begin{equation}
Z_{\f,\te}(x)
=\bar S_{\f}\Exp\rbr{-\sum
_{\ov{d\in\IN^{Q_0}}{\te\cdot d<0}}
(\f\cdot d)\Om(d,1)\, x^d}.
\end{equation}
\end{theorem}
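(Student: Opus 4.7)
The plan is to deduce the formula from the Kontsevich--Soibelman wall-crossing identity applied to the framed quiver $Q^{\f}$, interpolating from a chamber in which the framed partition function is trivial to the chamber corresponding to $\te$. This mirrors the strategy indicated for NCDT invariants in the remark following Theorem~\ref{th:NCDT from unframed}, and uses the symmetry of $Q$ in an essential way.

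First I would identify a convenient starting chamber. Take $\te^+ \in \bR^{Q_0}$ with $\te^+_i > 0$ for all $i \in Q_0$. For a framed representation $M^{\f}$ of dimension $(d,1)$ with $d \neq 0$, the underlying unframed subrepresentation $M \subset M^{\f}$ (obtained by setting $M_\infty = 0$) would have to satisfy $\te^+ \cdot d \le 0$ by the subobject condition \eqref{framed te}, contradicting positivity. Hence $M^{\f,\te^+}(J,d) = \es$ for $d \neq 0$, so $Z^{\rf}_{\f,\te^+}(x) = 1$.

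Next, interpolate a generic path in $\bR^{Q_0}$ from $\te^+$ to the target $\te$ and apply wall-crossing step by step. Because $Q$ is symmetric, the skew-symmetrized Euler form vanishes on $\bN^{Q_0}$, so the unframed invariants $\Om(d,y)$ are independent of the stability parameter and the only walls encountered are framed ones: these occur whenever, along the path, some unframed dimension $d' \in \bN^{Q_0}$ becomes numerically parallel to the framing charge $(d,1)$ in $Q^{\f}$. The framing component of the charge is primitive, so the wall-crossing formula reduces to its semi-primitive form. A direct computation of the Euler form of $Q^{\f}$ (using the $\f_i$ extra arrows $\infty \to i$ together with the symmetry of $Q$) gives
\begin{equation*}
\ang{(d,1),(d',0)}_{Q^{\f}} = -\f \cdot d'.
\end{equation*}
Applying the semi-primitive formula of \cite{mozgovoy_wall-crossing} (equivalently \cite{ks, morrison_motivica}) at each wall produces a multiplicative contribution of the form
\begin{equation*}
\Exp\!\rbr{\frac{(y^{2\f \cdot d'} - 1)\,\Om(d',y)\, x^{d'}}{y^{-1} - y}}
\end{equation*}
to the generating function. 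The walls crossed are precisely those indexed by $\sets{d' \in \bN^{Q_0} \ms \set{0}}{\te \cdot d' < 0}$, namely the $d'$ for which $\te' \cdot d'$ changes sign from $+$ to $-$ along the path. Taking the product over all such $d'$ and prepending the twist $S_{-\f}$, which accounts for the $(-y)^{-\f \cdot d}$ normalization in \eqref{partfun2ref} relative to the bare motivic class, yields the refined formula. The unrefined version follows by specializing at $y=1$, since $(y^{2\f \cdot d}-1)/(y^{-1}-y) \to -\f \cdot d$ and $S_{-\f} \to \bar S_{\f}$.

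The main obstacle is the rigorous motivic framework: the informal generating functions in \eqref{partfun2ref} and \eqref{defhOm} should properly be interpreted as exponential motivic classes of the regular functions $\om_d$ from Remark~\ref{rm:exp motive}, and the semi-primitive wall-crossing identity must hold at that level. As observed in \S\ref{sec:refined unframed}, the presence of a cut $I \sbs Q_1$ reduces the question to ordinary motivic wall-crossing for the cut algebra $J_I$ via \eqref{stacky inv with a cut}, which has been carried out in \cite{mozgovoy_wall-crossing, morrison_motivica}; bookkeeping the exponents $\hi_Q(d,d)$, $\f \cdot d$ and $2\ga_I(d)$ through this reduction is the only delicate part of the argument.
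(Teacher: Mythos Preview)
Your proposal is correct and follows essentially the same approach as the paper indicates: the paper does not give a proof in the text but cites \cite{morrison_motivica,mozgovoy_wall-crossing} and explains in the surrounding remarks (after Theorem~\ref{th:NCDT from unframed} and after \eqref{ZffromA} in \S\ref{sec:framed-unframed2}) that the formula follows from the semi-primitive wall-crossing formula starting from the trivial chamber $\te^+_i>0$ where $Z^{\rf}_{\f,\te^+}=1$. Your identification of the pairing $\ang{(d,1),(d',0)}_{Q^\f}=-\f\cdot d'$ (using the symmetry of $Q$), the description of which walls are crossed, and the specialization $y\to1$ are all accurate.
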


\begin{remark}
The above formula implies that for $\f=\sum_{i\in Q_0} \f_i e_i$,
\begin{equation}
\bar S_\f Z_{\f,\te}(x)=\prod_{i\in Q_0}
(\bar S_{e_i}Z_{e_i,\te}(x))^{\f_i}.
\end{equation}
Therefore it is enough to determine the partition functions
$Z_{i,\te}(x)=Z_{e_i,\te}(x)$ for $i\in Q_0$. 
\end{remark}

\begin{example}
Let $Q$ be a quiver with one vertex and no loops.
The corresponding partition function of unframed invariants was computed in Example \ref{ex:dilog},
namely $\hOm(x)
=\Exp\rbr{\frac x{y\inv-y}}$.
Therefore
\begin{equation}
Z_{\f,\NCDT}(x)
=\bar S_\f\Exp(-\f\, x)
=\bar S_\f(1-x)^\f=(1-(-1)^\f x)^\f.
\end{equation}
This formula coincides with \eqref{ncdt vect spaces}.
\end{example}

\vspace*{1cm}

\section{Wall-crossing formulas and attractor indices \label{sec_wc}}

For a symmetric quiver $(Q,W)$, the generating functions of framed invariants defined
 in \eqref{part fun2} (or their refined counterpart in \eqref{partfun2ref}) can be deduced, 
 for any stability parameter $\te$, from universal  unframed (refined) invariants 
 $\Omega(d,y)$ obtained by factorizing the generating function of stacky invariants 
 as in \eqref{hOmfromOm}.
Physically, this can be understood as the appearance or disappearance of BPS states bound to an infinitely heavy defect (also known as framed BPS states), as the phase of the central charge of the defect is varied \cite{Gaiotto:2010be,Andriyash:2010qv}. In that case, the  unframed invariants $\Omega(d,y)$ count
BPS states with charge $d$ far away from the defect, and do not experience wall-crossing by themselves. In contrast, for non-symmetric quivers, the unframed (refined) invariants do
depend on the stability parameter $\te$, and their contributions to framed partition functions 
no longer commute. In this section we review
the wall-crossing formulae which govern this dependence, and we explain how 
framed and unframed invariants in any chamber can all be deduced from invariants in the
attractor chamber.

\subsection{Quantum affine space}
Let $Q$ be a quiver and $\hi_Q$ be its Euler form,
defined on the lattice $\Ga=\bZ^{Q_0}$ by
\begin{equation}
\label{defEuler}
\hi_Q(d,d')=\sum_{i}d_i d'_i-\sum_{a:i\to j}d_i d'_j.
\end{equation}
Consider the skew-symmetric form on $\Ga$
\begin{equation}
\label{defskEuler}
\ang{d,d'}=\hi_Q(d,d')-\hi_Q(d',d).
\end{equation}
Define the corresponding \idef{quantum affine space} (or more precisely, its coordinate ring) to be
\begin{equation}
\bA=\bop_{d\in\bN^{Q_0}}\bQ(y)x^d
\end{equation}
equipped with the multiplication
\begin{equation}
x^d\circ x^{d'}=(-y)^{\ang{d,d'}}x^{d+d'}.
\end{equation}
This algebra has a decreasing filtration according to the \qq{height} $\delta \cdot d$,
\begin{equation}
F_i\bA=\bop_{\de\cdot d\ge i}\bQ(y)x^d,\qquad \de=(1,\dots,1)\in\bZ^{Q_0},
\end{equation}
and we consider the corresponding completion
\begin{equation}
\label{defhbA}
\hat\bA=\ilim_i \bA/F_i\bA\iso\prod_{d\in\bN^{Q_0}}\bQ(y)x^d
\end{equation}
which we will also refer to as the quantum affine space.
All partition functions defined in the previous sections will be considered as elements of this algebra.

\subsection{Basic wall-crossing formulas}
In  section \S\ref{sec:refined unframed} we constructed, for any stability function $\Z:\Ga\to\bC$ and any ray $\ell\sbs\bC$, the  generating function of stacky invariants
\begin{equation}
\hOm_{\Z,\ell}(x)=\sum_{\Z(d)\in\ell}\hOm_\Z(d,y)x^d\in\hat\bA.
\end{equation}
depending on the stability function.
In particular, for any weight $\te\in\bR^{Q_0}$, we consider 
\begin{equation}
\Z=-\te+\bi \de,\qquad
\de=(1,\dots,1)\in\bZ^{Q_0},
\end{equation}
and $\ell=\bR_{\ge0}\bi$.
Then $\Z(d)\in\ell$ if and only if $\te\cdot d=0$ and we denote $\hOm_{\Z,\ell}(x)$ by $\hOm_\te(x)$.
For $\te=0$ we obtain the trivial stability condition as all objects are automatically semistable.
We denote the series $\hOm_0(x)$ by $\hOm(x)$.


\begin{theorem}[Wall-crossing formula]
For any stability function $\Z$, we have
\begin{equation}
\hOm(x)=\oprod_{\ell}\hOm_{\Z,\ell}(x),
\end{equation}
where the product runs over rays $\ell$ in the upper half-plane ordered clockwise.
In particular, the right hand side is independent of the stability function $\Z$.
\end{theorem}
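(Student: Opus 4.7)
The plan is to derive the identity from the Harder--Narasimhan (HN) filtration of representations of $J$. For any stability function $\Z$ and any $M\in R(J,d)$, there is a unique filtration $0 = M_0 \sbs M_1 \sbs \cdots \sbs M_n = M$ whose subquotients $F_k = M_k/M_{k-1}$ are $\Z$-semistable with strictly decreasing phases $\vi(F_1) > \cdots > \vi(F_n)$. Setting $d_k = \udim F_k$, this stratifies $R(J,d)$ into locally closed HN strata indexed by ordered tuples $(d_1,\ldots,d_n)$ satisfying $\sum d_k = d$ and $\vi(d_1) > \cdots > \vi(d_n)$.

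The first step is to compute the stacky motivic class of each HN stratum. The stack of HN-filtered representations maps to the product $\prod_k \cM^\Z(J,d_k)$ by taking subquotients, and the fiber over $(F_1,\ldots,F_n)$ parametrizes iterated extensions. Working with the motivic class of the map $\om_d$ (through the cut formula \eqref{stacky inv with a cut} when a cut $I\sbs Q_1$ is available, and via motivic vanishing cycles in general), the class of this fiber is controlled by $\dim\Hom-\dim\Ext^1$ between HN pieces, and assembles into factors of $(-y)^{-2\hi_Q(d_j,d_i)}$ for $i<j$. Combined with the twist $(-y)^{\hi_Q(d,d)}$ from \eqref{defhOm} and the identity $\hi_Q(d,d) = \sum_k \hi_Q(d_k,d_k) + \sum_{i<j}(\hi_Q(d_i,d_j)+\hi_Q(d_j,d_i))$, the contribution of the stratum is exactly $\hOm_\Z(d_1,y)\cdots\hOm_\Z(d_n,y)\, x^{d_1}\circ\cdots\circ x^{d_n}$ in the quantum affine space $\hat\bA$, whose multiplication $x^{d}\circ x^{d'}=(-y)^{\ang{d,d'}}x^{d+d'}$ absorbs precisely the asymmetric part of the Euler form.

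Summing over HN types for all $d$, the resulting series factorizes as the product of contributions $\hOm_{\Z,\ell}$ over the rays $\ell\sbs\bC$ in the upper half-plane, in order of decreasing phase (i.e.\ clockwise), yielding
\[
\hOm(x) = \oprod_\ell \hOm_{\Z,\ell}(x).
\]
Independence from $\Z$ is then immediate, as the left-hand side is defined without reference to any stability. Convergence in $\hat\bA$ holds because, for any fixed height $N$, only finitely many HN types with $\de\cdot d\le N$ contribute.

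The main obstacle is the rigorous treatment of the potential $W$, as noted in Remark \ref{rm:exp motive}: the definition \eqref{defhOm} is only informal, and one should genuinely work with the exponential motivic class of $\om_d$. When a cut exists, the alternative formula \eqref{stacky inv with a cut} provides a concrete lift and the HN argument above descends to the subquiver $(Q_0, Q_1\ms I)$ essentially unchanged, since $\te$-semistability only involves unframed subobjects. Without a cut, one must invoke the full machinery of motivic vanishing cycles together with the $3$-Calabi--Yau integration map to justify that the Euler-form bookkeeping collapses to the quantum multiplication of $\hat\bA$; this step is the technical heart of the wall-crossing formula and is established in the references cited earlier in this paper.
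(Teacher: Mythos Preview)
The paper does not supply its own proof of this theorem: it is stated as an established result, followed immediately by the explicit expansion
\[
\hOm(\ga,y)
=\sum_{\substack{\ga=\al_1+\dots+\al_n\\ \al_1>_\Z\dots>_\Z\al_n}}
(-y)^{\sum_{i<j}\ang{\al_i,\al_j}}\prod_i\hOm_\Z(\al_i,y),
\]
and the Joyce--Reineke formula is then quoted with references. So there is nothing in the paper to compare against at the level of argument.

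That said, your HN-stratification proof is exactly the standard one behind this result. The decomposition of $R(J,d)$ by HN type, the computation of the motivic class of each stratum via the extension fiber, and the observation that the Euler-form bookkeeping collapses to the twisted product in $\hat\bA$ is precisely Reineke's argument for quivers without potential, extended to the CY3 setting by Kontsevich--Soibelman and Joyce. Your caveat about the potential is well placed: the informal definition \eqref{defhOm} does not literally support the stratum computation, and one must either pass through a cut as in \eqref{stacky inv with a cut} or invoke the motivic vanishing-cycle formalism, exactly as you say. One minor point: in the cut case the HN argument does not ``descend to the subquiver $(Q_0,Q_1\setminus I)$'' in the sense of semistability for $J_I$-representations; rather, one stratifies $R(J,d)$ itself and uses the cut only to replace the exponential motive by an honest motivic class, so the phrasing there could be tightened.
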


We can write the above formula more explicitly as follows.
For any $\ga\in\bN^{Q_0}\ms\set0$, we have
\begin{equation}
\hOm(\ga,y)
=\sum_{\ov{\ga=\al_1+\dots+\al_n}{\al_1>_Z\dots>_Z\al_n}}
(-y)^{\sum_{i<j}\ang{\al_i,\al_j}}\prod_i\hOm_\Z(\al_i,y).
\end{equation}
where we write $\al<_Z\beta$ if $\mu_Z(\al)<\mu_Z(\beta)$.
This formula implies that we can express recursively invariants $\hOm_\Z(\ga,y)$ in terms of invariants $\hOm(\ga,y)$.
Therefore, for two stability functions $\Z,\Z'$, we can express
invariants $\hOm_{\Z'}(\ga,y)$ in terms of invariants $\hOm_\Z(\ga,y)$.
More precisely, we have

\begin{theorem}[Joyce-Reineke formula]
\label{JR}
Let $\Z$ and $\Z'$ be two stability functions.
Given a tuple $\al=(\al_1,\dots,\al_n)$ of vectors in $\bN^{Q_0}\ms\set0$ and $1\le k<n$, define $\al_{\le k}=\al_1+\dots+\al_k$, $\al_{>k}=\al_{k+1}+\dots+\al_n$ and
$$s_k(\al)=
\begin{cases}
-1& \al_k\le_\Z \al_{k+1}\text{ and } \al_{\le k}>_{\Z'}\al_{>k},\\
1&\al_k>_\Z\al_{k+1}\text{ and } \al_{\le k}\le_{\Z'}\al_{>k},\\
0&\text{otherwise}.
\end{cases}$$
Then
\begin{equation}
\hOm_{\Z'}(\ga,y)
=\sum_{\ga=\al_1+\dots+\al_n}
\prod_{k=1}^{n-1}s_k(\al)\cdot
(-y)^{\sum_{i<j}\ang{\al_i,\al_j}}\cdot
\prod_i\hOm_\Z(\al_i,y).
\end{equation}
\end{theorem}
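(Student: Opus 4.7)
The plan is to derive Theorem \ref{JR} from two applications of the wall-crossing identity $\hOm(x)=\oprod_\ell \hOm_{\Z,\ell}(x)$, once for the stability function $\Z$ and once for $\Z'$. Both factorisations equal the same universal series $\hOm(x)$ in the completed quantum affine space $\hat\bA$, so inverting the $\Z'$-factorisation and re-expanding in the $\Z$-factorisation yields an explicit formula for $\hOm_{\Z'}(\ga,y)$ in terms of the $\hOm_\Z(\al_i,y)$.

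Concretely, I would first apply Reineke's inversion formula, obtained from the wall-crossing identity by induction on the height $\de\cdot\ga$, to express $\hOm_{\Z'}(\ga,y)$ as a signed sum over ordered decompositions $\ga=\beta_1+\dots+\beta_m$ of products $\prod_i\hOm(\beta_i,y)$ weighted by an appropriate power of $-y$. Then I would substitute into each factor $\hOm(\beta_i,y)$ the wall-crossing formula for $\Z$, expanding it as a sum over $\Z$-HN refinements $\beta_i=\al_{i,1}+\dots+\al_{i,n_i}$. Concatenating these refinements produces a composite decomposition $\ga=\al_1+\dots+\al_n$ together with the data of a block structure recording the $\beta_i$-grouping, and the cross terms of the skew form combine with the within-block contributions to yield the expected overall prefactor $(-y)^{\sum_{i<j}\ang{\al_i,\al_j}}$ by bilinearity.

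The remaining task is purely combinatorial: for a fixed tuple $\al_\bullet=(\al_1,\dots,\al_n)$, summing over all admissible block partitions subject to the $\Z$-HN condition inside each block and the $\Z'$-condition on the blocks should reduce to $\prod_{k=1}^{n-1}s_k(\al)$. This factorises into local contributions at each adjacent position $k\in\{1,\dots,n-1\}$, where one must balance the two possibilities that $k$ lies inside a $\beta$-block (forcing $\al_k>_\Z\al_{k+1}$) versus $k$ being a block boundary (forcing $\al_{\le k}>_{\Z'}\al_{>k}$). The sign $s_k$ is designed precisely to record this alternation: $+1$ when exactly the $\Z$-inequality holds, $-1$ when exactly the $\Z'$-inequality holds, and $0$ otherwise, producing the required pairwise cancellations.

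The main obstacle will be verifying this local-to-global factorisation rigorously, since adjacent positions $k-1$ and $k$ are not entirely independent: whether $k$ is a block boundary constrains the permitted block structures also at $k-1$. A careful induction on $n$, or equivalently a direct computation in Joyce's Ringel-Hall-type algebra $\hat\bA$, resolves this by organising the sum according to the maximal block length starting at each position. A more geometric alternative is to interpolate along a generic one-parameter path of stability functions from $\Z$ to $\Z'$, applying the primitive wall-crossing formula at each traversed wall; this avoids the explicit combinatorics but requires an inductive argument on the number of walls, and the resulting expression must still be matched against $\prod_k s_k(\al)$ to recover the stated closed form.
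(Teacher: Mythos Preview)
The paper does not give its own proof of Theorem \ref{JR}; it simply cites \cite{reineke_harder-narasimhan} for the case of trivial $\Z$ and \cite{joyce_configurations} for the general statement. Your proposed route (invert the $\Z'$-HN factorisation via Reineke's formula, then re-expand each $\hOm(\beta_j,y)$ using the $\Z$-HN factorisation) is exactly the standard argument in those references, so there is nothing to compare on the level of strategy.

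There is, however, one point where you make the task look harder than it is. The ``main obstacle'' you identify---that adjacent break positions $k-1$ and $k$ are not independent---does not in fact arise. A block partition of $(\al_1,\dots,\al_n)$ is precisely a choice of subset $S\sbs\{1,\dots,n-1\}$ of break positions, with no constraint linking adjacent elements of $S$. The combined contribution at position $k$ is the indicator $[\al_k>_\Z\al_{k+1}]$ if $k\notin S$ and $-[\al_{\le k}>_{\Z'}\al_{>k}]$ if $k\in S$ (the sign $(-1)^{|S|}$ distributing one $-1$ to each break). Summing over all $S$ therefore factorises on the nose:
\[
\sum_{S}(-1)^{|S|}\prod_{k\notin S}[\al_k>_\Z\al_{k+1}]\prod_{k\in S}[\al_{\le k}>_{\Z'}\al_{>k}]
=\prod_{k=1}^{n-1}\Big([\al_k>_\Z\al_{k+1}]-[\al_{\le k}>_{\Z'}\al_{>k}]\Big),
\]
and the $k$-th factor on the right is visibly $s_k(\al)$. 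No induction on $n$, no interpolation through a path of stabilities, and no further cancellation argument is needed. So your plan is correct but can be completed in one line once the double expansion is written down.
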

The above formula  was proved for the trivial stability $\Z$ in \cite{reineke_harder-narasimhan} and 
in full generality  in \cite{joyce_configurations}.

\subsection{Stacky, Rational, Integer DT}
We say that a stability function $\Z:\Ga\to\bC$ is \idef{generic} if $\Arg\Z(d)=\Arg\Z(d')$ implies $d\parallel d'$ (hence $\ang{d,d'}=0$).
Similarly, a weight $\te:\Ga\to\bR$ is called \idef{generic} if
$\Ker\te$ has rank $1$ (hence $\ang{-,-}$ vanishes on $\Ker\te\sbs\Ga$).

\begin{remark}
Let $\Z=-\te+\bi\rho$ be such that $\te^\perp\cap\Ga=0$, where $\te^\perp=\sets{\ga\in\Ga_\bR}{\te(\ga)=0}$ (this means that $\te(e_i)\in\bR$ are linearly independent over $\bZ$) and $\rho(e_i)\in\bQ_{>0}$ for all $i\in Q_0$.
If $d,d'$ have equal phases, then $\te(d)/\rho(d)=\te(d')/\rho(d')$, hence $\te(md)=\te(nd')$ for $m=\rho(d')$ and $n=\rho(d)$.
This implies that $md=nd'$, hence $d\parallel d'$.
Therefore $\Z$ is generic.
On the other hand, for any indivisible vector $d\in\Ga$, consider the free abelian group $\Ga/\bZ d$ and an injective homomorphism $\te:\Ga/\bZ d\to\bR$.
It induces a homomorphism $\te:\Ga\to\bR$ such that $\te^\perp\cap\Ga=\bZ d$.
Therefore $\te:\Ga\to\bR$ is generic.
\end{remark}

Given a generic weight \te, we define invariants $\Om_\te(d,y)$ for $d\in \Ker\te$,
called \idef{integer DT invariants}, by the formula,
\cf \eqref{hOmfromOm}
\begin{equation}
\sum_{\te(d)=0}\hOm_\te(d,y)x^d
=\Exp\rbr{\frac{\sum_{\te(d)=0}\Om_\te(d,y)x^d}{y\inv-y}}.
\end{equation}
Similarly, we define invariants $\bOm_\te(d,y)$ for $d\in \Ker\te$,
called \idef{rational DT invariants}, by the formula
\begin{equation}
\sum_{\te(d)=0}\hOm_\te(d,y)x^d
=\exp\rbr{\frac{\sum_{\te(d)=0}\bOm_\te(d,y)x^d}{y\inv-y}}
\end{equation}
or more explicitly
\begin{equation}
\bOm_\te(d,y)
=(y\inv-y)\sum_{k\mid d}\frac{(-1)^{k-1}}{k}\hOm_\te(d/k,y).
\end{equation}
These families of invariants are related by the formula
\begin{equation}
\bOm_\te(d,y)
=\sum_{k\mid d}\frac 1k\frac{y\inv-y}{y^{-k}-y^k}
\Om_\te(d/k,y^k).
\end{equation}
In \cite{Manschot:2010qz}, this relation was interpreted physically as the effect of replacing Bose-Fermi statistics by Boltzmann statistics. 
Assuming that specializations at $y=1$ exist, we obtain, in the unrefined limit,
\begin{equation}
\bOm_\te(d,1)
=\sum_{k\mid d}\frac 1{k^2}
\Om_\te(d/k,1).
\end{equation}

Similarly, for a generic stability function $\Z=-\te+\bi\rho$ and any dimension vector $d$, we define \idef{integer DT invariants} and \idef{rational DT invariants} by
\begin{equation}
\Om_\Z(d,y)=\Om_{\te'}(d,y),\qquad
\bOm_\Z(d,y)=\bOm_{\te'}(d,y)
\end{equation}
where the stability parameters $\te'$ are chosen such that $\te'(d')=\Im(\Z(d')\bar\Z(d))$.
Equivalently, for any ray $\ell\sbs\bC$, we have
\begin{equation}
\sum_{\Z(d)\in\ell}\hOm_\Z(d,y)x^d
=\Exp\rbr{\frac{\sum_{\Z(d)\in\ell}\Om_\Z(d,y)x^d}{y\inv-y}}
=\exp\rbr{\frac{\sum_{\Z(d)\in\ell}\bOm_\Z(d,y)x^d}{y\inv-y}}.
\end{equation}

\subsection{Expressing framed invariants in terms of unframed (general case)}
\label{sec:framed-unframed2}
In \S\ref{sec:framed-unframed} we described how generating functions of framed invariants (refined and numerical) can be obtained from the generating functions of unframed refined invariants in the case of symmetric quivers.
A similar result, albeit more complicated, can be formulated for arbitrary quivers.

\begin{theorem}[See \cite{morrison_motivica,mozgovoy_wall-crossing,morrison_motivic}]
For any framing vector $\f\in\bN^{Q_0}$, we have
\begin{equation}
Z^\rf_{\f,\NCDT}(x)=S_\f \hOm(x)\circ S_{-\f}\hOm(x)\inv,
\end{equation}
where the inverse $(S_{-\f}\hOm)\inv=S_{-\f}(\hOm\inv)$
is taken in the quantum affine plane $\hat\bA$.
\end{theorem}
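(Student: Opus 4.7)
The plan is to apply the Kontsevich-Soibelman wall-crossing identity on the framed quiver $Q^\f$ and to extract its content at first order in the formal variable $x_\infty$ attached to the framing vertex. First I pass to the quantum affine space $\AA^\f$ of $Q^\f$, defined as in \eqref{defhbA} but with $Q$ replaced by $Q^\f$. A direct calculation gives $\hi_{Q^\f}((0,1),(d,0)) = -\f\cdot d$, coming from the $\f_i$ arrows $\infty\to i$, so the skew-symmetric form satisfies $\ang{(d,0),(0,1)} = \f\cdot d$, and hence in $\AA^\f$
\[
x^d\circ x_\infty = (-y)^{\f\cdot d}\,x^{(d,1)},\qquad x_\infty\circ x^d = (-y)^{-\f\cdot d}\,x^{(d,1)}.
\]
Identifying the $x_\infty^1$-subspace of $\AA^\f$ with $\AA$ via $x^{(d,1)} \leftrightarrow x^d$, these relations say that $\circ$-multiplying a framed element by a factor of $\hOm(x)$ from the left (resp.\ right) corresponds to unframed $\circ$-multiplication twisted by $S_\f$ (resp.\ $S_{-\f}$).

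The next step is to apply the wall-crossing identity $\hOm^\f(x,x_\infty) = \oprod_\ell \hOm^\f_{\Z,\ell}(x,x_\infty)$ for two stability functions on $Q^\f$ which restrict to a common generic stability on $Q$ but place the framing vertex $e_\infty$ at phases at opposite extremes relative to the unframed phases. In the chamber where $e_\infty$ has the smaller phase, semistability rules out all framed reps with $d>0$, because any nonzero unframed subrepresentation would sit on the higher-phase unframed ray and thus violate semistability. The framing ray then contributes only $1+x_\infty$, the unframed rays reassemble to $\hOm(x)$, and the factorization reads $\hOm^\f \equiv \hOm(x)\circ(1+x_\infty)$ modulo $x_\infty^2$. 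In the opposite chamber (the NC stability of \eqref{NCDT stability}), the unframed-subobject condition becomes automatic while the unframed-quotient condition is equivalent to NC-stability, so the framing and framed rays collect at $x_\infty^1$-order into $Z^\rf_{\f,\NCDT}(x)\cdot x_\infty$, yielding $\hOm^\f \equiv (1+Z^\rf_{\f,\NCDT}(x)\cdot x_\infty)\circ \hOm(x)$ modulo $x_\infty^2$. Equating these two factorizations and using the commutation rules above to extract the $x_\infty^1$-coefficient produces $S_\f\hOm(x) = Z^\rf_{\f,\NCDT}(x)\circ S_{-\f}\hOm(x)$, from which the stated formula follows by $\circ$-multiplying on the right by $(S_{-\f}\hOm(x))\inv$.

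The main technical point is to check, in the NC chamber, that the combined contribution of the many framed rays (on which framed reps of different $|d|_1$ live at distinct phases, interleaved between the framing ray and the unframed ray) organizes itself at $x_\infty^1$-order into exactly $Z^\rf_{\f,\NCDT}(x)\cdot x_\infty$. This is the content of the semi-primitive structure: since the framing dimension equals $1$, every HN-factor has $d_\infty\in\{0,1\}$ with exactly one factor having $d_\infty=1$, so at $x_\infty^1$-order each ray contributes either $1$ or a single $x^{(d,1)}$-term, and the sum of these contributions over all framed rays collects to $Z^\rf_{\f,\NCDT}(x)$ by its definition. Once this is verified, the remainder of the proof is a purely algebraic rearrangement in $\AA^\f$.
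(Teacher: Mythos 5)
Your argument is correct and coincides with the route the paper itself indicates (the theorem is stated with citations only, and the accompanying remarks attribute it to the semi-primitive wall-crossing formula together with the vanishing of framed invariants in the chamber $\te_i>0$): one compares the ordered ray factorizations of the stacky series of the framed quiver in the two extreme positions of the phase of $e_\infty$ and extracts the coefficient of $x_\infty$, exactly as you do. The only imprecision is normalization: the stacky contribution of a framed ray of class $(d,1)$ is $\tfrac{-y}{q-1}$ times the corresponding term of $Z^\rf_{\f,\NCDT}(x)$ (coming from $P(\GL_1)=q-1$ and $\hi_{Q^\f}(d^\f,d^\f)=\hi_Q(d,d)+1-\f\cdot d$) rather than the bare term, but this common factor cancels between the two factorizations and leaves the identity $S_\f\hOm(x)=Z^\rf_{\f,\NCDT}(x)\circ S_{-\f}\hOm(x)$ intact.
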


A more general result can be proved for any (generic) stability parameter $\te\in\bR^{Q_0}$.
One can uniquely decompose the series $\hOm=\hOm(x)\in\hat\bA$ in the form (\cf \eqref{decomp2})
\begin{equation}
\hOm=\hOm_{\te,+}\circ\hOm_\te\circ\hOm_{\te,-},
\end{equation}
where $\hOm_{\te,+},\hOm_\te,\hOm_{\te,-}$ are supported at $\ga$ satisfying respectively $\te(\ga)>0$, $\te(\ga)=0$, $\te(\ga)<0$.
More precisely, consider any stability function $\Z=-\te+\bi\rho$.
Then $\Arg\Z(\ga)=\frac\pi2$ if and only if $\te(\ga)=0$ and $\Arg\Z(\ga)<\frac\pi2$ if and only if $\te(\ga)<0$.
We have
\begin{equation}
\hOm_{\te,+}=\oprod_{\Arg\ell>\frac\pi2}\hOm_{\Z,\ell},\qquad
\hOm_{\te}=\hOm_{\Z,\ell},\quad \ell=\bR_{\ge0}\bi,\qquad
\hOm_{\te,-}=\oprod_{\Arg\ell<\frac\pi2}\hOm_{\Z,\ell}.
\end{equation}
The series $\hOm_{\te,-}$ counts objects with HN filtrations having slopes $<0$.
Equivalently, these are objects $M$ 
such that $\te(N)<0$ for all subobjects $N\sbs M$.

\begin{theorem}[See \cite{morrison_motivica,mozgovoy_wall-crossing}]
Let $\te\in\bR^{Q_0}$ be such that $\te\cdot \ga\ne0$ for all $\ga\in\bN^{Q_0}\ms\set0$.
Then for any framing vector $\f$ we have
\begin{equation}
\label{ZffromA}
Z^\rf_{\f,\te}(x)
=S_\f \hOm_{\te,-}(x)\circ \rbr{S_{-\f}\hOm_{\te,-}(x)}\inv.
\end{equation}
\end{theorem}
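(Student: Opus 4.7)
The plan is to adapt the argument that establishes the NCDT special case $Z^{\rf}_{\f,\NCDT} = S_\f\hOm\circ(S_{-\f}\hOm)^{-1}$, observing that when the stability $\te$ is not the NC one, the unframed series $\hOm$ gets replaced by its ``negative slope'' piece $\hOm_{\te,-}$. For $\te=-\de$ one has $\te\cdot\ga<0$ for every $\ga\ne 0$, so $\hOm_{\te,-}=\hOm$ and the new formula specializes correctly to the established NCDT identity. For general generic $\te$, a $\te^\f$-semistable framed representation of dimension $(d,1)$ can only contain unframed subobjects $N\sbs M$ with $\te(N)\le 0$, hence (by genericity) with strictly negative $\te$-slope; this is what forces $\hOm_{\te,-}$ rather than $\hOm$ into the answer.

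First I would work inside the quantum affine space $\hat\bA^\f$ of the framed quiver $Q^\f$, truncated modulo $x_\infty^2$ since $\dim M_\infty\in\set{0,1}$. Genericity of $\te$ ensures that $G_d$ acts freely on $R^{\f,\te}(J,d)$, so the motivic class of $M^{\f,\te}(J,d)$ appears as the $x_\infty$-linear coefficient of the stacky invariant $\hOm^\f_{\te^\f}$ of $Q^\f$; the shift $(-y)^{\hi_Q(d,d)-\f\cdot d}$ in the definition \eqref{partfun2ref} of $Z^{\rf}_{\f,\te}$ matches the $Q^\f$-Euler form evaluated on $(d,1)$. Applying the wall-crossing decomposition to $\hOm^\f$ with the framed stability $\te^\f$ then produces a factorization that collapses on the $x_\infty$-linear sector: by the argument above, the ``negative slope'' unframed factor is exactly $\hOm_{\te,-}$ and $\hOm_{\te,+}$ appears on the opposite side of the framing contribution.

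Second, I would exploit the skew-symmetric Euler form on $Q^\f$: from the $\f_i$ added arrows $\infty\to i$ one computes $\ang{e_\infty,d}_{Q^\f}=-\f\cdot d$, giving in $\hat\bA^\f$ the commutation $x^d\circ x_\infty=(-y)^{2\f\cdot d}\,x_\infty\circ x^d$. Hence moving $x_\infty$ across an unframed expression implements the twist $S_{-2\f}$, which after the stacky-to-motivic conversion $(-y)^{-\f\cdot d}$ splits symmetrically into $S_\f$ on the left and $S_{-\f}$ on the right. Combining this with the wall-crossing factorization of $\hOm^\f$ rearranges the $x_\infty$-linear coefficient into exactly $S_\f\hOm_{\te,-}\circ(S_{-\f}\hOm_{\te,-})^{-1}$, as required.

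The hardest part will be bookkeeping: one must verify that moving $x_\infty$ past the factors $\hOm_{\te,\pm}$, combined with the $(-y)^{-\f\cdot d}$ stacky-to-motivic shift, places the $S_{\pm\f}$ twists acting on $\hOm_{\te,-}$ in the correct positions rather than in some other twisted variant. The NCDT limit (with $\hOm_{\te,+}=1$ and $\hOm_{\te,-}=\hOm$) is a useful sanity check, reducing to the previously established formula for $Z^{\rf}_{\f,\NCDT}$. Everything else is a direct application of the wall-crossing machinery encoded in Theorem \ref{JR} to the framed quiver $Q^\f$.
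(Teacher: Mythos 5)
The paper does not actually prove this theorem: it is imported verbatim from \cite{morrison_motivica,mozgovoy_wall-crossing}, and the only commentary offered is the physical remark following the statement (semi-primitive wall-crossing applied to halo bound states). So there is no in-paper proof to compare against; I can only assess your proposal against the arguments in those references, whose skeleton you have essentially reconstructed. Your bookkeeping is sound where you make it explicit: $\ang{e_\infty,d}_{Q^\f}=-\f\cdot d$ does give $x^d\circ x_\infty=(-y)^{2\f\cdot d}x_\infty\circ x^d$, the $x_\infty$-linear truncation is the right ambient algebra, the twist does split as $S_\f$ on the framed-stable factor and $S_{-\f}$ on the trailing unframed factor, and the NCDT specialization is a valid sanity check (as is the one-vertex example, which I verified reproduces $1+x_0$).

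The genuine gap is in the sentence ``Applying the wall-crossing decomposition to $\hOm^\f$ with the framed stability $\te^\f$ then produces a factorization.'' The weight $\te^\f=(\te,-\te\cdot d)$ depends on the dimension vector $d$ of the representation being tested, so it is \emph{not} a single linear functional on $\bZ^{Q^\f_0}$, and the ray/HN decomposition of Theorem \ref{JR} (or the decomposition \eqref{decomp2}) cannot be invoked for it: the HN factors of a non-semistable framed object with respect to $(\te,-\te\cdot d)$ are not themselves $\te^{\f}$-semistable for their own dimension vectors. What the cited references actually prove, by hand, is the existence and uniqueness of a canonical filtration of each framed representation $(M,\bC,s)$ of dimension $(d,1)$ into a maximal unframed subobject all of whose HN slopes are positive, a framed $\te^\f$-stable middle subquotient, and an unframed quotient lying in $\cA_{\te,-}$ (equivalently, for a pair $(M,s)$ with $M\in\cA_{\te,-}$, a unique subobject $M_1$ through which $s$ factors with $(M_1,s)$ framed stable and $M/M_1\in\cA_{\te,-}$). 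That lemma is the entire mathematical content of the theorem; once it is in place, your torus-algebra manipulations do yield $S_\f\hOm=S_\f\hOm_{\te,+}\circ Z^\rf_{\f,\te}\circ S_{-\f}\hOm_{\te,-}$ and hence, after cancelling $S_\f\hOm_{\te,+}$ against the decomposition $\hOm=\hOm_{\te,+}\circ\hOm_{\te,-}$ (note $\hOm_{\te,0}=1$ by the hypothesis $\te\cdot\ga\neq0$), exactly \eqref{ZffromA}. You should either prove this filtration lemma directly or cite it explicitly rather than presenting it as an instance of the generic wall-crossing machinery.
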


\begin{remark}
The relation \eqref{ZffromA} can be understood physically as follows: the spectrum of framed BPS states with stability $\te$ and charge $\gamma$ is the union of all bound states between the infinitely
heavy defect defining the framing, and the unframed BPS states with charge $\gamma'$ such that 
$\theta(\gamma')<\theta(\gamma)$. The formula \eqref{ZffromA} comes by applying the 
semi-primitive wall-crossing formula to all such bound states \cite{Andriyash:2010qv}. 
\end{remark}

\subsection{Relation to wall-crossing structures}
In this subsection we explain the relation to the wall-crossing structures studied in 
\cite{kontsevich_wall,gross_canonical}. 
Consider the complete quantum affine space~$\hat\bA$ defined in \eqref{defhbA}.
Let $\fg$ 
be the maximal ideal
$\fg=\cbop_{d\in\bN^{Q_0}\ms\set0}\bQ(y)x^d\sbs\hat\bA$ 
and let $G=1+\fg\sbs\hat\bA$.
Then $\fg$ is a pro-nilpotent Lie algebra and $G$ is the associated pro-unipotent algebraic group such that $\exp:\fg\to G$ is a bijection.
The Lie algebra $\fg$ is graded by $\Ga=\IZ^{Q_0}$, hence $\fg=\cbop_{\ga\in\Ga}\fg_\ga$
with $\fg_\ga=\bQ(y)x^\ga$ for $\ga\in\bN^{Q_0}\ms\set0$ and zero otherwise, such that $[\fg_\ga,\fg_{\ga'}]\sbs\fg_{\ga+\ga'}$.
Importantly, we have
\begin{equation}
\ang{\ga,\ga'}=0\imp[\fg_\ga,\fg_{\ga'}]=0
\end{equation}
The generating series $\hOm=\hOm(x)$ of stacky invariants corresponding to the trivial stability condition is an element  $\hOm\in G$. Since a \idef{wall-crossing structure} 
can be defined as  an element of the pro-unipotent group $G$ \cite{kontsevich_wall,gross_canonical}, we interpret the generating series $\cA$ as such a wall-crossing structure.

\medskip

Similarly, the generating series $\cA_\te$, for a weight $\te\in\Ga_\bR\dual$, defines an element in $G$,
and  the series $\bOm_\te$ (up to the factor $y\inv-y$) can be interpreted as an element of the Lie algebra $\bOm_\te\in\fg$ such that $\exp(\bOm_\te)=\cA_\te$.
We have seen that $\cA_\te$ can be obtained from $\cA$ by a wall-crossing formula.
An alternative way to formulate this property is as follows.
We have a decomposition
\begin{equation}
\fg=\fg^\te_+\oplus\fg^\te_0\oplus\fg^\te_-, \nn
\end{equation}
\begin{equation}
\label{decomp1}
\fg^\te_+=\bop_{\te(\ga)>0}\fg_\ga,\qquad
\fg^\te_0=\bop_{\te(\ga)=0}\fg_\ga,\qquad
\fg^\te_-=\bop_{\te(\ga)<0}\fg_\ga.
\end{equation}
These are Lie subalgebras of $\fg$, which in turn define Lie subgroups $G_\star^\te\sbs G$ for $\star=\set{+,0,-}$.
The map induced by multiplication 
\begin{equation}
\label{decomp2}
G^\te_+\xx G^\te_0\xx G^\te_-\to G
\end{equation}
is a bijection.
Therefore we have a projection $p^\te:G\to G^\te_0$.
The wall-crossing formula implies
\begin{equation}
\hOm_\te=p^\te(\hOm).
\end{equation}

\subsection{Attractor DT invariants and the attractor tree formula}
As reviewed above, the wall-crossing formulae allow to compute DT invariants $\Omega_\Z(d,y)$
for any stability function $\Z$ once they are known for a particular stability function $\Z_0$ (for all dimension vectors). 
However, no choice of $\Z_0$ appears to lead to simple answers (with the exception of the  trivial stability for quivers without relations \cite{1043.17010}).
As explained in the introduction, physics suggests that simplicity can be found by instead 
adapting the stability condition to the dimension vector $d$, namely choosing the self-stability
$\ang{-,d}$ \cite{MPSunpublished,bridgeland2016scattering,Alexandrov:2018iao}. More precisely,
we define the \idef{attractor DT invariants}
as 
\be
 \Omstar(d,y)=\Om_{\te_d}(d,y)
\ee
where $\te_d$, called an \idef{attractor stability},
is a generic perturbation of the weight $\ang{-,d}$ inside $d^\perp=\sets{\te\in\Ga_\bR\dual}{\te(d)=0}$.
We define similarly the \idef{rational attractor DT invariants} $\bOm_*(d,y)$ and invariants $\hOm_*(d,y)$.
The collection of invariants $(\hOm_*(d,y))_d$ is called the \idef{initial data} of the wall-crossing structure $\hOm$ \cite{kontsevich_wall,gross_canonical}.
It encodes the same information as the collection of attractor DT invariants $(\Om_*(d,y))_d$.
More precisely, for any indivisible dimension vector $d$, we have
\begin{equation}
\sum_{n\ge0}\hOm_*(nd,y)x^n
=\exp\rbr{
\sum_{n\ge1}\frac{\bOm_*(nd,y)x^n}{y\inv-y}}
=\Exp\rbr{
\sum_{n\ge1}\frac{\Om_*(nd,y)x^n}{y\inv-y}}.
\end{equation}

\begin{theorem}[see {\cite[Theorem 1.21]{gross_canonical}}]
Invariants 
$\Om_*(d,y),\bOm_*(d,y),\hOm_*(d,y)$ are independent of the choice of a generic perturbation of $\ang{-,d}$.
\end{theorem}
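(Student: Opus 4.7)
The three flavors $\Om_*(d,y)$, $\bOm_*(d,y)$, $\hOm_*(d,y)$ determine one another through the (plethystic and ordinary) exponential identities just above the statement, so it suffices to prove that the stacky attractor invariant $\hOm_{\te_d}(kd,y)$ is independent of the generic perturbation $\te_d = \ang{-,d}+\eta\in d^\perp$, for each $k\ge 1$.

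Given two such perturbations coming from small generic vectors $\eta,\eta'\in d^\perp$, I would connect the corresponding weights $\te,\te'$ by the straight-line homotopy $\te_t = \ang{-,d}+((1-t)\eta+t\eta')$ and, after a small isotopy if necessary, assume $\te_t$ is generic for all but finitely many times $t_1<\dots<t_N\in(0,1)$. Since
\[
\te_t(\alpha) = \ang{\alpha,d} + ((1-t)\eta(\alpha) + t\eta'(\alpha)),
\]
smallness of $\eta,\eta'$ forces $\te_t(\alpha)$ to keep the same sign as $\ang{\alpha,d}$ throughout $[0,1]$ whenever $\ang{\alpha,d}\ne 0$; hence the only hyperplanes $\alpha^\perp=\sets{\te}{\te(\alpha)=0}$ that the path can meet are those with $\ang{\alpha,d}=0$.

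The algebraic heart of the argument is that, for any such $\alpha$, the Lie subalgebra
\[
\fh_\alpha = \bop_{\gamma\in(\bR\alpha+\bR d)\cap\bN^{Q_0}\ms 0}\fg_\gamma\;\subset\; \fg
\]
is abelian. Indeed, taking $\gamma=a\alpha+bd$ and $\gamma'=a'\alpha+b'd$, skew-symmetry, bilinearity, and $\ang{\alpha,\alpha}=\ang{d,d}=0$ combine to give
\[
\ang{\gamma,\gamma'} = (ab'-a'b)\,\ang{\alpha,d} = 0,
\]
so that by the implication $\ang{\gamma,\gamma'}=0\Rightarrow[\fg_\gamma,\fg_{\gamma'}]=0$ recalled in \S\ref{sec:framed-unframed2}, the corresponding subgroup $\exp(\fh_\alpha)\subset G$ is commutative.

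At each crossing time $t=t_j$, the ray through $\Z_{t_j}(d)$ absorbs or releases finitely many adjacent rays, all of which are supported on charges in $(\bR\alpha+\bR d)\cap\bN^{Q_0}$; accordingly every ray factor $\hOm_{\Z_{t_j\pm\veps},\ell}$ participating in the transition lies in the commutative subgroup $\exp(\fh_\alpha)$. Equating the two wall-crossing-formula factorizations of the fixed element $\hOm\in G$ on either side of $t_j$ and using commutativity, the factor attached to the $d$-ray itself is preserved across the event, so $\hOm_{\te_{t_j-\veps}}(kd,y)=\hOm_{\te_{t_j+\veps}}(kd,y)$ for all $k\ge 1$; iterating over $j$ yields the desired equality. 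The main technical point I expect to require the most care is the bookkeeping of multi-part Joyce-Reineke decompositions: one must verify that any nontrivial contribution to $\hOm_{\te'}(kd,y)-\hOm_\te(kd,y)$ comes from decompositions $kd=\alpha_1+\dots+\alpha_n$ whose parts all lie in $\bR\alpha+\bR d$, so that $(-y)^{\sum_{i<j}\ang{\alpha_i,\alpha_j}}=1$ and the sign factors $s_k$ of Theorem \ref{JR} reorganize to zero.
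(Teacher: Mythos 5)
Your argument is correct, but it takes a genuinely different route from the paper. The paper does not move along a path at all: it works with the single weight $\te=\ang{-,d}$, factorizes $\hOm=g^\te_+\, g^\te_0\, g^\te_-$, observes that inside $\fg^\te_0=\bop_{\ang{\ga,d}=0}\fg_\ga$ the part $\fg^\perp_0$ not supported on $\bN d_0$ is a Lie \emph{ideal}, and uses the resulting quotient homomorphism $\Psi:G^\te_0\to G^\parallel_0$ to show that for \emph{any} generic perturbation $\te'$ one has $\cA_{\te'}(d,y)=\Psi(\hOm_\te)_d$ --- a quantity that visibly does not involve $\te'$. That argument needs no genericity of a homotopy and identifies the common value canonically. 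Your proof instead connects two perturbations by a generic path in $d^\perp$ and treats each codimension-one wall separately; its key input is that the degenerating charges at such a wall lie in a rank-two sublattice $(\bR\al+\bR d)\cap\bN^{Q_0}$ on which $\ang{-,-}$ vanishes identically, so the cluster lives in an abelian subgroup and the $d$-ray factor cannot change. This is sound: the cluster product is pinned down by the uniqueness of the ray factorization of $\hOm$ at the collision time, and inside the abelian subgroup $\exp(\fh_\al)$ the further decomposition into sub-ray factors is unique because $\log$ is additive there and distinct rays have disjoint supports --- this last step is the one place where you should say a word rather than just invoke "commutativity". Two smaller remarks: the genericity of the path (so that only rank-two degenerations occur, and only charges in $\bR\al+\bR d$ hit the vertical ray at each $t_j$) should be arranged degree by degree, since only finitely many hyperplanes are relevant below any fixed dimension vector $kd$; and your closing worry about reorganizing the Joyce--Reineke sign factors is unnecessary --- the group-theoretic cluster argument already closes the proof, and redoing it via Theorem \ref{JR} would only be a less transparent rephrasing. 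The trade-off is that the paper's proof is shorter and globally canonical, while yours is the more classical "nothing happens at tame walls" argument and makes the geometric mechanism (vanishing of $\ang{-,-}$ on the degenerating plane) explicit.
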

\begin{proof}
Let $d_0\in\bR_{\ge0}d\cap\Ga$ be the indivisible vector and let $\te=\ang{-,d}$.
Then the decomposition \eqref{decomp1} has the form
\begin{equation}
\fg^\te_+=\bop_{\ang{\ga,d}>0}\fg_\ga,\qquad
\fg^\te_0=\bop_{\ang{\ga,d}=0}\fg_\ga,\qquad
\fg^\te_-=\bop_{\ang{\ga,d}<0}\fg_\ga.
\end{equation}
We can further decompose
\begin{equation}
\fg^\te_0=\fg^\parallel_0\oplus\fg_0^\perp,
\end{equation}
where $\fg^\parallel_0=\bop_{\ga\in\bN d_0}\fg_{\ga}$ and $\fg^\perp_0$ involves the rest of the summands $\fg_\ga$ in $\fg^\te_0$.
One can show that $\fg^\perp_0$ is an ideal in $\fg_0^\te$.
Therefore we have a group homomorphism $\Psi:G^\te_0\to G^\parallel_0=\exp(\fg^\parallel_0)$
with kernel $G^\perp_0=\exp(\fg^\perp_0)$.
We claim that if we  apply \Psi to $p^\te(\cA)=\cA_\te\in G^\te_0$, the corresponding component $\Psi(\hOm_\te)_d$ in degree $d$ is equal to $\cA_{\te'}(d,y)$, where $\te'$ is a generic perturbation of $\te$ inside $d^\perp$. Since $\te'$ is a generic perturbation of $\te$, 
we can assume that $\te(d')>0$ implies $\te'(d')>0$ for $0\le d'\le d$.
Applying the decomposition \eqref{decomp2} we can write
$g=\cA=g^\te_+\cdot g^\te_0\cdot g^\te_-$
and then $h=g^\te_0
=h^{\te'}_+\cdot h^{\te'}_0\cdot h^{\te'}_-$.
The support of $h^{\te'}_\pm$ is contained in $\te^\perp$ (in degrees $0\le d'\le d$).
Indeed, if $0\le d'\le d$ is in the support of $h^{\te'}_+$, then $\te'(d')>0$, hence $\te(d')\ge0$.
Similarly for the support of $h^{\te'}_-$. Now by the uniqueness of the decomposition of $h$ \wrt $\te$, we conclude that $\te(d')=0$ in both cases.
The support of $h^{\te'}_\pm$ also doesn't intersect $(\te')^\perp\cap\Ga=\bZ d_0$.
Therefore $h^{\te'}_\pm\in G^\perp_0$ and we obtain
$\Psi(\hOm_\te)=\Psi(g^\te_0)=h^{\te'}_0$.
On the other hand $g=(g^\te_+h^{\te'}_+)h^{\te'}_0(h^{\te'}_-g^\te_-)$.
By our assumption on $\te'$, we have $g^\te_+\in G^{\te'}_+$ and $g^\te_-\in G^{\te'}_-$ (in degrees $0\le d'\le d$), therefore $p_{\te'}(g)=h^{\te'}_0=\Psi(\hOm_\te)$ (in degree $d$).
We conclude that 
$\cA_{\te'}(d,y)=p_{\te'}(g)_d=\Psi(\hOm_\te)_d$.
This proves that $\cA_{\te'}(d,y)$ is independent of the choice of a perturbation.
\end{proof}


\begin{lemma}
If $Q$ is an acyclic quiver, then the only self-stable objects are simple.
Moreover,
\begin{equation}
\Omstar(d,y)=\begin{cases}
1&d=e_i\text{ for some }i\in Q_0,\\
0&\text{otherwise}.
\end{cases}
\end{equation}
\end{lemma}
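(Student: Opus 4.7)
The strategy is to exploit acyclicity to locate, for each non-simple dimension vector $d$, a sink in the support-restricted quiver that obstructs $\te_d$-semistability, and then handle the single-vertex case via the quantum dilogarithm of Example \ref{ex:dilog}. Since $Q$ is acyclic we have $W=0$ and $J(Q,W)=\bC Q$; the simple representations of $Q$ are precisely the one-dimensional $S_i$ with $\udim S_i=e_i$.

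For $d\in\bN^{Q_0}$ whose support contains at least two vertices, I first pick a sink $j$ of the full subquiver of $Q$ on $\supp(d)$, which exists by acyclicity. Because every arrow of $Q$ leaving $j$ targets a vertex outside $\supp(d)$, where any representation of dimension $d$ vanishes, the subspace $N\subset M$ with $N_j=M_j$ and $N_i=0$ for $i\ne j$ is automatically a subrepresentation with $\udim N=d_j e_j$. A direct computation yields $\ang{e_j,d}=\sum_{k\to j,\,k\in\supp(d)}d_k\ge 0$. If $j$ has an incoming arrow from $\supp(d)$ the inequality is strict, so for any sufficiently small generic perturbation $\te_d$ of $\ang{-,d}$ inside $d^\perp$ one has $\te_d(N)>0$, violating semistability. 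If instead $j$ is isolated in the restricted quiver, then $M$ decomposes as a direct sum $A\oplus B$ of subrepresentations with $\udim A=d_j e_j$ and $\udim B=d-d_j e_j\ne 0$; since $\ker\te_d=\bR d$ by genericity and $d_j e_j\notin\bR d$, one of $\te_d(A),\te_d(B)$ is nonzero, so either $A$ or $B$ destabilizes $M$. In either case $R^{\te_d}(J,d)=\varnothing$ and $\hOm_*(d,y)=0$.

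It remains to treat $d=n e_j$ supported at a single vertex. Since $Q$ has no loops at $j$ (by acyclicity), $R(J,ne_j)$ is a point; moreover every subrepresentation $N$ satisfies $\udim N=ke_j$ so $\te_d(N)=k\te_d(e_j)=0$, whence every such representation is $\te_d$-semistable and $\cM^{\te_d}(J,ne_j)=[\pt/\GL_n]$. This gives $\hOm_*(ne_j,y)=(-y)^{n^2}/P(\GL_n)$, and summing against $x^n$ reproduces exactly the quantum dilogarithm $\bE(x)$ of Example \ref{ex:dilog}. Comparing with
\begin{equation*}
\sum_{n\ge 0}\hOm_*(ne_j,y)\,x^n=\Exp\!\left(\sum_{n\ge 1}\frac{\Omstar(ne_j,y)\,x^n}{y\inv-y}\right)
\end{equation*}
then extracts $\Omstar(e_j,y)=1$ and $\Omstar(ne_j,y)=0$ for $n\ge 2$. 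The main subtlety lies in the genericity argument of the previous paragraph: the perturbation of $\ang{-,d}$ inside $d^\perp$ must simultaneously be small enough that $\te_d(e_j)$ retains the sign of the integer $\ang{e_j,d}$ in the non-isolated case, and generic enough to force $\ker\te_d=\bR d$ in the isolated case. Both requirements cut out a non-empty open subset of $d^\perp$, so such a perturbation exists and the invariants $\Omstar(d,y)$ are well-defined and independent of the choice of perturbation by the theorem of Gross-Hacking-Keel preceding this lemma.
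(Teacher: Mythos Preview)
Your proof is correct and follows essentially the same approach as the paper's: locate a sink $j$ in the support of $d$ (the paper does this via a topological ordering, you do it directly), exhibit a destabilizing subrepresentation concentrated at $j$ using $\ang{e_j,d}>0$, and handle the single-vertex case via the quantum dilogarithm. The only notable difference is that you treat the ``isolated sink'' (decomposable) case explicitly using genericity of $\te_d$, whereas the paper subsumes it under ``by the same argument as above''; your treatment is slightly more careful on this point but the content is the same.
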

\begin{proof}
Choose an order on $Q_0$ such that $i\le j$ for any arrow $a:i\to j$.
Let $M$ be a self-stable representation
and $j\in Q_0$ be a maximal vertex for this ordering such that $M_j\ne 0$.
Then $M$ has a simple subrepresentation $S_j$.
Assume that $d=\udim M$ is supported not only at $j$.
Then there exists an arrow $a:i\to j$ such that $d_i\ne0$ (otherwise $M$ would be decomposable).
Therefore $\ang{e_j,d}=\sum_{a:i\to j}d_i>0$
and this contradicts to the assumption that $M$ is self-stable.
Therefore $M\iso S_j^{\oplus n}$ for some $n\ge1$.
By stability of $M$, we actually have $M\iso S_j$ hence $d=e_j$.

Now, consider a dimension vector $d$
and a stability function $\Z=-\te+\bi\de$, where $\te$ is a generic perturbation of $\ang{-,d}$ and $\de=(1,\dots,1)$.
If $M$ is a \Z-semistable representation such that $\mu_Z(M)=\mu_Z(d)$, then $M$ is supported at only one vertex $j\in Q_0$ (and therefore $M\iso S_j^{\oplus n}$ for some $n\ge1$) by the same argument as above.
This implies, for $\ell=\bR_{\ge0}\Z(d)$ (\cf Example \ref{ex:dilog})
\begin{equation}
\hOm_{\Z,\ell}(x)=\sum_{n\ge0}\frac{(-y)^{n^2} }{P(\GL_n)}x_j^n=\Exp\rbr{\frac{x_j}{y\inv-y}}.
\end{equation}
Therefore $\Omstar(d,y)=1$ for $d=e_j$ and zero otherwise.
\end{proof}

The above result can be generalized as follows.
Let us say that a quiver $Q$ is \idef{strongly connected}
if, for any two vertices $i,j\in Q_0$, there exists a path from $i$ to $j$.
For any $d\in\bZ^{Q_0}$, let $\supp (d)=\sets{i\in Q_0}{d_i\ne0}$ and let $Q|_{\supp(d)}$ be the corresponding subquiver of $Q$.

\begin{theorem}
Let $(Q,W)$ be a quiver with potential,
$d\in\bN^{Q_0}$, and $\Om_*(d,y)$ be the corresponding attractor invariant.
If $Q|_{\supp(d)}$ is not strongly connected,
then $\Om_*(d,y)=0$.
\end{theorem}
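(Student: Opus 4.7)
The strategy is to show that under the hypothesis $R^{\te_d}(J,d) = \es$, so that $\hOm_*(d, y) = 0$. Applied verbatim to $nd$ for every $n \ge 1$ (which is legal since $\supp(nd) = \supp(d)$), the same vanishing gives $\hOm_*(nd, y) = 0$, and inverting the plethystic exponential relation displayed just before the independence theorem then yields $\Omstar(d, y) = 0$. The core claim is therefore: for every representation $M$ of $J$ with $\udim M = d$, there exists a subrepresentation $N \sbs M$ with $\te_d(\udim N) > 0$.

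Choose $A \subsetneq \supp(d)$ to be any sink strongly connected component of the condensation of $Q|_{\supp(d)}$; this is a proper nonempty subset, and by construction $Q|_{\supp(d)}$ has no arrows from $A$ to $B := \supp(d) \ms A$. Writing $d_A$ for the vector obtained from $d$ by zeroing the entries outside $A$, the absence of such arrows (together with the vanishing of $M$ at vertices outside $\supp(d)$) implies that $(M_i)_{i \in A}$ with the original arrow maps defines a $J$-subrepresentation $N_A \sbs M$ of dimension $d_A$; the Jacobian relations are inherited automatically. A direct expansion of \eqref{defskEuler} gives
\[
\ang{d_A, d} \;=\; \sum_{\ov{a\col i \to j \text{ in } Q|_{\supp(d)}}{i \in B,\; j \in A}} d_i d_j \;\ge\; 0.
\]

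Write the attractor stability as $\te_d = \ang{-, d} + \eps$ with $\eps \in d^\perp$ a sufficiently small generic perturbation. If $\ang{d_A, d} > 0$, the inequality $\te_d(d_A) > 0$ persists for small $\eps$, and $N_A$ witnesses the failure of $\te_d$-semistability of $M$. Otherwise $A$ has no incoming arrows from $B$ in $Q|_{\supp(d)}$ either, so $A$ is in fact a union of weakly connected components; then $d_B$ is likewise the dimension vector of a subrepresentation (the complementary summand, with $M \iso N_A \oplus N_B$), while $\ang{d_A, d} = 0$. Here one invokes the genericity of $\te_d$, namely $\Ker \te_d \cap \Ga = \bZ d_0$ with $d_0$ the indivisible vector parallel to $d$: since $d_A \notin \bQ d$ (because $d_B \ne 0$), we have $\te_d(d_A) \ne 0$, and one of $\te_d(d_A)$ or $\te_d(d_B) = -\te_d(d_A)$ is strictly positive, supplying the required subrepresentation.

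The only real subtlety is this disconnected case, where the desired positivity is invisible at the level of the Euler pairing and must be extracted from the genericity of the perturbation $\eps$. Given this, the same construction applied to every positive multiple $nd$ delivers $\hOm_*(nd, y) = 0$ for all $n \ge 1$, and inverting the plethystic Exp yields $\Omstar(d, y) = 0$ as claimed.
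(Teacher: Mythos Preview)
Your argument is correct and essentially identical to the paper's: decompose $\supp(d)=A\sqcup B$ with no arrows $A\to B$, take the subrepresentation supported on $A$, and handle the two cases (arrows $B\to A$ present or absent) exactly as you do, invoking genericity of the perturbation in the latter case. One small wrinkle: when $d=md_0$ with $d_0$ indivisible and $m>1$, inverting the plethystic exponential to conclude $\Om_*(d,y)=0$ requires $\hOm_*(kd_0,y)=0$ for $1\le k\le m$, not just for multiples of $d$; this follows immediately from your argument since $\supp(kd_0)=\supp(d)$, so ``$nd$'' should read ``$nd_0$'' in your reduction step.
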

\begin{proof}
It is enough to show that $\cA_*(d,y)=0$.
Let $M$ be a representation having dimension vector $d$.
According to our assumption, there exists a decomposition $\supp(d)=A\sqcup B$ such that $A,B\ne\es$ and there are no arrows from $A$ to $B$.
Let $d'=\sum_{i\in A}d_ie_i$ and $d''=\sum_{i\in B}d_ie_i=d-d'$.
The subspace $M'=\bop_{i\in A}M_i\sbs M$ is a subrepresentation having dimension vector $d'$.

Assume that there exist arrows from $B$ to~$A$.
Then $\ang{d',d''}>0$, hence $\ang{d',d}>0$.
Given a generic perturbation $\te\in d^\perp$ of $\ang{-,d}$, we have $\te(M')=\te(d')>0$
and $M$ is not $\te$-semistable.

Assume that there are no arrows from $B$ to $A$.
Then $M''=\bop_{i\in B}M_i\sbs M$ is also a subrepresentation of $M$.
Given a generic perturbation $\te\in d^\perp$ of $\ang{-,d}$, we have $\te(d')+\te(d'')=\te(d)=0$ and $\te(d'),\te(d'')\ne0$ as otherwise $d'$ and $d''$ would be proportional which is not possible.
We conclude that either $\te(M')=\te(d')>0$ or $\te(M'')=\te(d'')>0$, hence $M$ is not $\te$-semistable.

In both cases $M$ is not semistable, hence $\cA_*(d,y)=0$.
\end{proof}

\medskip

As explained in \cite{Alexandrov:2018iao}, the DT invariants $\Om_\Z(d,y)$ can then be computed for any dimension vector $d$ using the flow tree formula. There are actually several (conjecturally equivalent) versions of this formula, three of them stated explicitly in \cite{Alexandrov:2018iao}, and 
two more which are implicit in \cite{Alexandrov:2019rth}. We shall present a formula which can 
be extracted from  \cite{Alexandrov:2019rth}, and which has the advantage of not requiring any perturbation of the dimension vectors, unlike the earlier versions in \cite{Alexandrov:2018iao}. 
This formula was proven in \cite{Mozgovoy:2021iwz}, after the first version of this article
was released.

\begin{conjecture}[Attractor tree formula]
For any generic stability function $\Z$, we have
\begin{equation}
\label{AFTform}
\bOm_{\Z}(\ga,y)
=\sum_{\ga=\al_1+\dots+\al_n}
F(\al,\Z)\cdot
\frac{(-y)^{\sum_{i<j}\ang{\al_i,\al_j}}}
{(y\inv-y)^{n-1}}
\cdot
\prod_i\bOm_*(\al_i,y).
\end{equation}
where, for a tuple $\al=(\al_1,\dots,\al_n)$ of vectors in $\bN^{Q_0}\ms\set0$,
we have $F(\al,\Z)=1$ for $n=1$ and
\begin{equation}
\label{AFTdefF}
F(\al,\Z)
=\sum_{T\in\bT_n} \left[ (-1)^{|V_T|-1}
\rbr{g_\Z(v_0)-g_*(v_0)}
\prod_{v\in V_T\ms\set{v_0}}
g_*(v)
\right],\qquad n\ge2,
\end{equation}
is a sum over all rooted plane trees with leaves labeled by $1,\dots,n$ (and all internal vertices having at least two children).
Here $V_T$ is the set of internal vertices of $T$ (including the root) and $v_0$ is the root.
The symbol $g_\Z(v)$
is a shorthand for 
$g_\Z\rbr{(\al_u)_{u\in\Ch v}}$,
where $\Ch v$ denotes the (ordered) set of children of $v\in V_T$,
$\al_u=\sum_{i<u}\al_i$ is the sum over leaves $i$ descendant of $u$ and
\begin{equation}
g_\Z(\al_1,\dots,\al_r)
=\frac{(-1)^{m}}{2^r}\cdot
\frac{(-1)^{m_0}+1}{m_0+1},
\end{equation}
\begin{equation}
m=\#{\sets{1\le k<r}{\mu_\Z(\al_{\le k})>\mu_\Z(\al_{>k}}},
\
m_0=\#{\sets{1\le k<r}{\mu_\Z(\al_{\le k})=\mu_\Z(\al_{>k})}}
\end{equation}
Similarly, $g_*(v)$ is a shorthand for 
$g_*\rbr{(\al_u)_{u\in\Ch v}}$, where $g_*(\al_1,\dots,\al_r)$ is defined as above for the self-stability $\te_*=\ang{-,\ga}$ with $\ga=\sum_{i=1}^r\al_i$.
\end{conjecture}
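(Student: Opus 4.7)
The plan is to prove the Attractor Tree Formula by iterating the Joyce-Reineke wall-crossing identity (Theorem \ref{JR}) to bridge $\Z$ to successively finer attractor stabilities, then showing that the resulting nested expansion reorganizes as a sum over rooted plane trees.

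First, I would recast Joyce-Reineke in the rational (plethystic-logarithmic) normalization. Starting from Theorem \ref{JR} and applying $\log$ to both sides using the definition of $\bOm_\Z$, one obtains that for any pair of generic stability functions $\Z, \Z'$,
\begin{equation*}
\bOm_{\Z'}(\gamma,y) = \sum_{\gamma=\alpha_1+\dots+\alpha_r} G(\alpha;\Z,\Z') \cdot \frac{(-y)^{\sum_{i<j}\ang{\alpha_i,\alpha_j}}}{(y^{-1}-y)^{r-1}} \prod_i \bOm_\Z(\alpha_i,y),
\end{equation*}
where the coefficient $G(\alpha;\Z,\Z')$ depends only on the slope orderings of partial sums under $\Z$ and $\Z'$. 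A direct computation identifies $G(\alpha;\Z,\te_\gamma)$ with exactly $g_\Z(\alpha) - g_*(\alpha)$, the weight of the root in \eqref{AFTdefF}.

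The induction then runs on the height $\delta \cdot \gamma$: the single-step identity with $\Z' = \te_\gamma$ expresses $\bOm_\Z(\gamma,y)$ as a sum over decompositions $\gamma = \sum \alpha_i$ weighted by $g_\Z - g_*$ and multiplied by factors $\bOm_{\te_\gamma}(\alpha_i,y)$. Since $\te_\gamma$ generally fails to be a perturbation of $\te_{\alpha_i}$, I apply the identity recursively with target stability $\te_{\alpha_i}$ on each factor, generating a subdecomposition at each leaf that is not yet self-attractor. The $\te_\gamma$-dependent part of each subsequent coefficient should combine into a net contribution $-g_*(v)$ at each newly introduced internal vertex, so that after all iterations the only surviving stability-function dependence sits at the root. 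The result organizes naturally over rooted plane trees: internal vertices correspond to successive wall-crossing applications, leaves correspond to terminal factors $\bOm_*(\alpha_i,y)$, and the sign $(-1)^{|V_T|-1}$ arises from the $|V_T|-1$ post-root iterations.

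The main obstacle will be the combinatorial identity showing that this iterated wall-crossing expansion equals precisely the sum over rooted plane trees weighted as in \eqref{AFTdefF}. Concretely, one must verify that different orderings of recursive wall-crossings correspond bijectively to plane trees, that the intermediate-stability coefficients at interior vertices telescope to $-g_*$ up to lower-height contributions already absorbed into the tree, and that the prefactors $\frac{(-y)^{\sum \ang{\alpha_i,\alpha_j}}}{(y^{-1}-y)^{r-1}}$ from each level concatenate into the single global prefactor of \eqref{AFTform} via the cocycle property of $\ang{-,-}$. Termination is guaranteed since $\delta \cdot \alpha_i < \delta \cdot \gamma$ at each non-trivial recursion step, and independence of the auxiliary perturbations of each $\te_{\alpha_i}$ follows from the perturbation-independence of attractor invariants established in the theorem above.
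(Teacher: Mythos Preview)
The paper does not prove this statement: it is explicitly labeled a \emph{Conjecture}, with a remark that it was subsequently established in \cite{Mozgovoy:2021iwz} after the first version of the paper appeared. There is therefore no in-paper proof to compare your proposal against.

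On its own merits, your outline follows a natural strategy, but two steps you present as routine are precisely the substance of the result. First, the assertion that the rational wall-crossing coefficient $G(\alpha;\Z,\te_\gamma)$ equals $g_\Z(\alpha)-g_*(\alpha)$ is nontrivial: the Joyce--Reineke factors $s_k(\alpha)$ in Theorem~\ref{JR} mix both stabilities in each term, and it is not at all obvious that after taking the logarithm to pass to rational invariants the answer decouples into a difference of two pieces each depending on a single stability. Second, and more seriously, your recursion at a non-root internal vertex $v$ (with total charge $\alpha_v$ and parent charge $\alpha_{p(v)}$) wall-crosses from $\te_{\alpha_{p(v)}}$ to $\te_{\alpha_v}$; by your own first step this would produce a coefficient $g_{\te_{\alpha_{p(v)}}}(v)-g_{\te_{\alpha_v}}(v)$, whereas the formula \eqref{AFTdefF} demands simply $-g_*(v)=-g_{\te_{\alpha_v}}(v)$. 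The spurious term $g_{\te_{\alpha_{p(v)}}}(v)$ must therefore cancel against contributions from other tree shapes (e.g.\ collapsing $v$ into its parent), and establishing that cancellation is the genuine combinatorial content of the theorem. You have correctly flagged this as the ``main obstacle'', but you have not indicated any mechanism by which it is overcome; without it, the proposal remains a plan rather than a proof.
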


\begin{figure}[ht]
\begin{center}
\begin{tikzpicture}[inner sep=2mm,scale=1]
 \node (0) at (0,4) [circle,draw] {$v_0$};
\node (1) at (-2,3) [circle,draw] {$v_1$};
\node (2) at (4,3) [circle,draw] {$v_2$};
\node (5) at (-1.3,1.5) [circle,draw] {$v_3$};
\node (10) at (-2,0) [circle,draw] {$2$};
 \node (11) at (-.5,0) [circle,draw] {$3$};
 \node (4) at (-3.5,0) [circle,draw] {$1$};
\node (3) at (.9,0) [circle,draw] {$4$};
\node (6) at (2.2,0) [circle,draw] {$5$};
\node (7) at (3.5,0) [circle,draw] {$6$};
\node (8) at (4.8,0) [circle,draw] {$7$};
\node (9) at (6.1,0) [circle,draw] {$8$};
\draw[->] (0) to (1);
\draw[->] (0) to (2);
\draw[->] (0) to (3);
\draw[->] (1) to (4);
\draw[->] (1) to (5);
\draw[->] (2) to (6);
\draw[->] (2) to (7);
\draw[->] (2) to (8);
\draw[->] (2) to (9);
\draw[->] (5) to (10);
\draw[->] (5) to (11);
\end{tikzpicture}
\end{center}
\caption{Example of a rooted planar tree with $8$ leaves.
The vertices $i=1,\dots,8$ are decorated with dimension vectors $\al_i$.
The vertices $v_0,v_1,v_2,v_3$ are decorated with dimension vectors $\al_1+\dots+\al_8$, $\al_1+\al_2+\al_3$, $\al_5+\al_6+\al_7+\al_8$, $\al_2+\al_3$ respectively.}
\label{Schtree}
\end{figure}

\begin{remark}
The sum in \eqref{AFTform} runs over all \idef{ordered} decompositions $\gamma=\sum _i \alpha_i$
with $\alpha_i\in \IN^{Q_0}\backslash \{0\}$. It may be written equivalently as a sum over  \idef{unordered} decompositions, by inserting a sum over permutations in \eqref{AFTdefF}
and inserting a  \qq{Boltzmann statistics} factor $1/\n{\Aut((\alpha_i)_i)}$ as in  
\cite{Manschot:2010qz,Alexandrov:2018iao}. At the attractor point $\te_\ga$ (an infinitesimal perturbation of $\ang{-,\ga}$), the formula \eqref{AFTform} collapses to 
$\bOm_{\te_\ga}(\ga,y)=\bOm_*(\gamma,y)$ as expected. 
Computer experiments indicate
that for non-generic stability parameters, in particular trivial stability, the formula \eqref{AFTform} produces a Laurent polynomial with rational coefficients, whose interpretation is currently 
unclear~\cite{Alexandrov:2020dyy}. 
\end{remark}

\begin{example}
Consider the generalized Kronecker quiver with $m>0$ arrows $1\to 2$.
Let $e_1=(1,0)$, $e_2=(0,1)$
and $\gamma=(1,1)$.
Note that $\ang{e_1,e_2}=-m$.
Given a stability parameter $\te\in\bR^2$, we consider $\Z=-\te+\bi\de$, $\de=(1,1)$, and the corresponding slope function $\mu_\te(\ga)=\te(\ga)/\de(\ga)$.
The self-stability $\ang{-,\ga}$ corresponds to $\te_*=(-m,m)$, equivalent to the stability parameter $(0,1)$.
There are two decompositions of $\gamma$ with $n\geq 2$ elements, namely
 $(\alpha_1,\alpha_2)=(e_1,e_2)$ and $(\alpha_1,\alpha_2)=(e_2,e_1)$,
and a single rooted plane tree in both cases.
Since $\Omstar(\ga)=0$ and $\Omstar(e_i)=1$, 
the formula gives
\be
\Om_{\Z}(\ga,y) = \frac12 \frac{(-y)^{-m}-(-y)^{m}}{y^{-1}-y}
\left[ -\sgn\left(\mu_\te(e_1)- \mu_\te(e_2) \right)
+\sgn\left(\mu_*(e_1)- \mu_*(e_2) \right) \right].
\ee
For the stability parameter $\te=(1,0)$, where
the slopes $\mu_\te(e_i)$ have opposite order to the attractor slopes $\mu_*(e_i)$, we obtain
\begin{equation}
\Om_{\Z}(\ga,y) 
=\frac{(-y)^{m}-(-y)^{-m}}{y^{-1}-y}
=(-y)^{1-m}\cdot P(\bP^{m-1};y)
\end{equation}
which is the (virtual) motive of $\bP^{m-1}$, the moduli space $M^\te(Q,\ga)$ of \te-semistable representations.
For the stability parameter $\te=(0,1)$, where
the slopes $\mu_\te(e_i)$ have the same order to the 
attractor slopes $\mu_*(e_i)$, we obtain
$\Om_{\Z}(\ga,y)=0$, corresponding to the fact 
that the moduli space $M^\te(Q,\ga)$ is empty.
For dimension vector $\gamma>(1,1)$, the number of contributing trees grows rapidly, but
there are many cancellations. We have checked that the result agrees with the Reineke formula
up to $\gamma=(4,4)$.
\end{example} 

\begin{remark}
\label{remarkOmS}
For a compact CY 3-fold, the attractor index $\Omstar(\gamma,y)$ is expected to include contributions both from single-centered BPS black holes with charge $\gamma$, as well as 
from  \qq{scaling configurations} of BPS black holes with charge $\alpha_i$. These
scaling configurations correspond to boundaries of the phase space of multicentered
black hole solutions where the centers come arbitrarily close \cite{Bena:2012hf} and become
indistinguishable from single-centered black holes. In contrast, the r.h.s.\ of \eqref{AFTform} incorporates compact components of the phase space which are free from such boundaries.
The Coulomb branch formula developed in \cite{Manschot:2011xc,Manschot:2013sya} gives a prescription for disentangling these
scaling contributions from genuine single-centered black holes, counted by the `single-centered' invariant $\OmS(\gamma,y)$ (also known as pure Higgs  \cite{Bena:2012hf}  or quiver invariant \cite{Lee:2012naa}). For symmetric quivers, the invariants $\OmS(\gamma,y)$ coincide with the
DT invariants $\Omega_{\Z}(\gamma,y)$, which are independent of the stability condition $\Z$.
We refer to \cite{Manschot:2014fua} for a concise review of the
Coulomb branch formula, which plays a marginal role in the present work.
\end{remark}

\begin{remark}
\label{mathematica}
The attractor tree formula and its variants, the Joyce-Reineke formula and the Coulomb branch formula, as well as several tools for dealing with brane tilings, are implemented in the Mathematica package {\tt CoulombHiggs.m} maintained by the second-named author and available from 
\url{www.lpthe.jussieu.fr/~pioline/computing.html}.
See also the SageMath package 
\url{https://github.com/smzg/msinvar}
by the first-named author.
\end{remark}

\subsection{Mutations}
Upon varying the central charge $\Z$, it may happen that $\Z(e_k)$ for $k\in Q_0$ no longer lies in the upper half-plane $\cH$ (see \S\ref{sec_stab}). Across the locus where $\Z(e_k)$ becomes real,
which is known as a wall of the second kind \cite{ks}, the heart of the $t$-structure $\cA$ 
turns into the category of representations $\cA'=\Rep Q'$ of a different quiver $(Q',W')$
related to $(Q,W)$ by a mutation (or, in physics parlance, by Seiberg duality) 
\cite{Berenstein:2002fi,zbMATH05848698,Aganagic:2010qr,Alim:2011ae}. The quiver $Q'$
has the same vertex set $Q'_0=Q_0$, but the set of arrows $Q'_1$ is obtained from $Q_1$
by adding arrows $a'_{ij}:i\to j$ for each pair of arrows $a:i\to k$ and $b:k\to j$ in $Q_1$, 
and then reversing
the orientation of all arrows starting or ending at the vertex $k$ \cite{2001math4151F}. 
The potential $W'$ is obtained from $W$ by adding cubic terms $a'_{ij} a'_{ki} a'_{jk}$
and eliminating two-cycles in $W'$ which have a quadratic term  \cite{zbMATH05573998}
(see \cite[\S2.5]{Alim:2011ae} for a more detailed prescription). 
Depending on whether the central charge $\Z(e_k)$ exits $\cH$
through the positive $(\varepsilon=1)$ or negative $(\varepsilon=-1)$ real axis, 
the dimension vectors associated to the vertices of $Q'$  are given by
\be
\label{mutDSZgen0}
e'_i=\begin{cases}
-e_k & \text{if }i=k\\
e_i + \max(0,\varepsilon\, \gamma_{ik}) \, e_k&  \text{if }i\neq k
\end{cases}
\ee
where $\gamma_{ij}=\ang{e_i,e_j}$ is the skew-symmetric Euler form. 
Under this transformation, the skew-symmetric Euler form 
$\gamma'_{ij}=\ang{e'_i,e'_j}$ becomes
\be
\gamma'_{ij}=
\begin{cases}
-\gamma_{ij} &\text{if }i=k\text{ or }j=k \\
 \ \gamma_{ij} + \max(0, 
 \gamma_{ik} \gamma_{kj})\,\sgn(\gamma_{kj}) & 
 \text{if }i,j\neq k
 \end{cases}
\ee
in line with the above prescription for constructing the arrows of $Q'$.
The dimension vector $d=\sum_{i\in Q_0} d_i e_i$ can be expressed as a linear combination 
$\sum_{i\in Q'_0} d'_i e'_i$
of the basis vectors $e'_i$ associated to the vertices of $Q_0'$ via
\be
d'_i=\begin{cases}
-d_k+ \sum_{j\neq k} d_j \, 
\max(0,\varepsilon \gamma_{jk}) &\text{if }i=k\\
d_i &\text{if }i\neq k
\end{cases}
\ee
Let $\te\in \IR^{Q_0}$ be the stability weight on $Q$ corresponding to the stability condition 
$\Z$ before $\Z(e_k)$ has crossed into the lower half-plane. The condition that $\Z(e_k)$ exits $\cH$
through the positive $(\varepsilon=1)$ or negative $(\varepsilon=-1)$ real axis
implies that $\sgn \te_k=\varepsilon$. We
define the stability weight $\te'$ on $Q'$ by 
\be
\te'_i= \begin{cases} 
-\te_k & \hbox{if $i=k$} \\
\te_i+ \text{max}(0, \varepsilon \gamma_{ik}) \, \te_k & \hbox{if  $i\ne k$}
\end{cases}
\ee   
such that $\sum_{i\in Q_0} d_i \te_i =\sum_{i\in Q'_0} d'_i \te'_i$. Note that the self-stability
condition $\te=\ang{-,d}$ is invariant under this assignment.

\begin{conjecture}[\cite{Manschot:2013dua}]
Let $\Om_\te(d,y)$ and $\Om'_{\te'} (d',y)$ be the DT invariants  associated to the quivers $(Q,W)$ and $(Q',W')$, with dimension vectors and stability weights related as above. Suppose that 
$d$ is not collinear to $e_k$, or equivalently that $d'$ is not collinear to $e'_k$. Then
\begin{enumerate}
\item If $d\in \IN^{Q_0}$ and $d'\in \IN^{Q'_0}$, then $\Om_\te(d,y)=\Om'_{\te'} (d',y)$ 
\item If $d\in \IN^{Q_0}$ but $d'\notin \IN^{Q'_0}$, then $\Om_\te(d,y)=0$
\item if $d'\in \IN^{Q'_0}$ but $d\notin \IN^{Q_0}$, then $\Om'_{\te'}(d',y)=0$
\end{enumerate}
In particular, the attractor invariants coincide or vanish under the same conditions.
\end{conjecture}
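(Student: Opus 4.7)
The strategy I would pursue is to lift the combinatorial mutation $(Q,W)\rightsquigarrow(Q',W')$ to a derived equivalence of 3-Calabi-Yau categories and then transport the stability condition together with its associated DT invariants. Concretely, the mutation at vertex $k$ is induced by the Keller--Yang tilting equivalence $\mu_k^\pm:\mathcal D_{fd}(\Gamma_{Q',W'})\isoto\mathcal D_{fd}(\Gamma_{Q,W})$ between derived categories of finite-dimensional modules over the Ginzburg dg-algebras, with $\varepsilon=\pm1$ governing which of the two mutations is used. Under $\mu_k^\varepsilon$ the canonical heart $\mathrm{mod}\,J(Q',W')$ is mapped to a tilted heart of $\mathrm{mod}\,J(Q,W)$, and the simple $S'_i$ of the tilted heart corresponds to $e'_i$ in \eqref{mutDSZgen0}; one can verify this directly from the structure of the approximation triangles defining $\mu_k^\varepsilon$. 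The first step is to check that the dictionary $(d,\theta)\mapsto(d',\theta')$ written just above the conjecture exactly matches the induced isomorphism on Grothendieck groups and on the dual space of central charges, so that $\theta(d)=\theta'(d')$ and, more importantly, the self-stability weight $\ang{-,\gamma}$ on one side corresponds to the self-stability weight $\ang{-,\gamma'}$ on the other.

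The second step is to match moduli. Given $M\in\mathrm{mod}\,J(Q,W)$ with $\udim M=d$ not proportional to $e_k$, I would look at the cohomology of $\mu_k^\varepsilon(M)$ with respect to the canonical heart on the $(Q',W')$ side. A standard calculation with the approximation triangle shows that $\mu_k^\varepsilon(M)$ is concentrated in one degree precisely when $d'\in\IN^{Q'_0}$, the choice of degree being fixed by $\mathrm{sgn}\,\theta_k=\varepsilon$. This gives a bijection between $\theta$-semistable representations with dimension vector $d$ and $\theta'$-semistable representations with dimension vector $d'$ whenever $d'\in\IN^{Q'_0}$, and identifies the corresponding moduli stacks as algebraic stacks over $\mathrm{Spec}\,\IC$. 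In the other cases, either no representation of $(Q,W)$ corresponds to an object in the canonical $t$-structure of $(Q',W')$, or vice versa, which forces the moduli space on the ``wrong'' side to be empty and hence yields the vanishing statements (2) and (3).

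The third step is to promote this bijection of moduli to an equality of refined DT invariants. Because $\mu_k^\varepsilon$ is a derived equivalence of 3-CY categories, Kontsevich--Soibelman's framework implies that it preserves the motivic Donaldson--Thomas weight, so the contribution of each stratum of $M^\theta(J,d)$ to $\hOm_\theta(d,y)$ matches the contribution of the image stratum of $M^{\theta'}(J',d')$ to $\hOm'_{\theta'}(d',y)$; the integer invariant $\Omega$ then agrees via its defining plethystic factorization. The statement about attractor invariants follows by applying the same correspondence to an $\ang{-,d}$-generic perturbation: since the map $\theta\mapsto\theta'$ sends the self-stability $\ang{-,d}$ to $\ang{-,d'}$ and restricts to an isomorphism on a neighborhood of it inside $d^\perp$, a generic perturbation maps to a generic perturbation, and the independence of $\Omstar$ on the choice of perturbation (already proved earlier in this section) then gives $\Omstar(d,y)=\Omstar'(d',y)$, or vanishing on the appropriate side.

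The main obstacle I anticipate is the rigorous handling of the motivic weights under the derived equivalence. While a bijection of moduli stacks is straightforward from tilting, the compatibility of the vanishing-cycle sheaves (or the Behrend functions in the unrefined limit) attached to $W$ and $W'$ requires a careful analysis of how the quadratic terms killed in the definition of $W'$ contribute to the critical locus, and of the associated Thom--Sebastiani-type factorizations. This is exactly the point where the existence of cuts $I\subset Q_1$ and $I'\subset Q'_1$ (needed to make \eqref{stacky inv with a cut} rigorous) would have to be upgraded to a comparison of the two critical-locus presentations, and where the known invariance results for motivic DT under mutation in the literature would have to be invoked or extended. The collinearity exclusion $d\not\in\IN e_k$ is the easy part: if $d=n e_k$ then $d'=-n e'_k$, so the two cases trivially cannot both satisfy the positivity hypothesis.
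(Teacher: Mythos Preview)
The statement you are attempting to prove is labelled as a \emph{Conjecture} in the paper and is not proved there. The only accompanying text is a remark recording its origin in \cite{Manschot:2013dua}, noting that it was checked in examples in \cite{Manschot:2013dua,Kim:2015fba}, and explaining how the displayed formulae arise from that reference after a change of conventions. So there is no ``paper's own proof'' to compare against: the paper treats this as an open statement.

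Your proposal is a coherent strategy for how one might \emph{attempt} to prove such a result, and the ingredients you invoke (the Keller--Yang tilting equivalence between the Ginzburg dg-algebras, the resulting identification of hearts, the transport of stability data) are the natural ones. Your observation that the dictionary $(d,\theta)\mapsto(d',\theta')$ intertwines self-stabilities is correct and is indeed why the attractor clause follows formally from the main clause. However, you have not produced a proof: the ``main obstacle'' you flag in your last paragraph is exactly the gap that keeps this a conjecture rather than a theorem. The compatibility of vanishing-cycle/motivic weights under the Keller--Yang equivalence, the handling of the quadratic pieces removed from $W'$, and the existence and comparison of cuts on both sides are all genuine open issues in this generality, not routine verifications. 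Invoking ``Kontsevich--Soibelman's framework implies that it preserves the motivic Donaldson--Thomas weight'' is precisely the step that requires justification and is not available as a black box for arbitrary $(Q,W)$. So your write-up is a reasonable research plan, but it should not be presented as a proof; the paper is right to record the statement as a conjecture.
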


\begin{remark}
This conjecture was put forward in \cite{Manschot:2013dua} and checked for several examples
in  \cite{Manschot:2013dua,Kim:2015fba}.
Attractor invariants were not discussed in \cite{Manschot:2013dua}, but a similar property was
stated for single-centered invariants $\OmS(d,y)$, allowing for the possibility that 
$\OmS(e_k,y)\neq 1$ for simple representations. The formulae above follow from Eqs.~(1.7)-(1.13) in \cite{Manschot:2013dua} upon setting $M=1$, and flipping the sign of $\gamma_{ij}$ due to our
choice of conventions.
\end{remark}
   
\vspace*{1cm}

\section{Brane tilings and crystals \label{sec:tilingcrystal}}

For a toric CY3-fold $\cX$, the derived category of coherent sheaves $D^b(\coh \cX)$ is isomorphic
to the derived category of representations of a quiver $(Q,W)$. In physical parlance, the dynamics of D-branes wrapped on $\cX$ is described by a gauge theory with fields encoded in the quiver $Q$ and superpotential $W$. For a certain class of toric CY3-folds, $(Q,W)$ can be read off from a brane tiling,
and framed BPS states in the non-commutative chamber can be represented as molten configurations
of a crystal constructed from the brane tiling. In this section we review aspects of these well-known
facts. 

\subsection{Basics of toric geometry}
\label{sec:toric}
A $d$-dimensional toric variety $\cX$ is a complex manifold with an action of $(\IC^\times)^d$ having a dense open orbit. It may be decomposed into a set of complex tori $(\IC^\times)^{d-k}$ (the orbits)
with $0\leq k\leq d$, associated to $k$-dimensional cones in a fan $\Sigma$ in $\IZ^d$.
Let $(v_1,\dots, v_{r})$ be the set of indivisible vectors
in $\IZ^d$ generating one-dimensional cones (corresponding to toric divisors in $\cX$). For each of the $r-d$ linear relations $q_1 v_1+\dots +q_{r} v_r=0$
among the $v_i$'s, consider the $\IC^\times$ action on $\IC^{r}$
\begin{equation}
\la\cdot (z_1,\dots z_r)= 
( \lambda^{q_1} z_1,\dots, \lambda^{q_r} z_r)
\end{equation}

\begin{remark}
Consider an exact sequence (we assume that $v$ is surjective)
\begin{equation}
0\to L\to\bZ^{r}\xto v N=\bZ^d\to0,\qquad
v(e_i)=v_i.
\end{equation}
where $L$ is the kernel of the matrix with columns $v_i$.
It induces a homomorphism of tori 
$L\ts_\bZ\bC^\xx\iso(\bC^\xx)^{r-d}\emb (\bC^\xx)^{r}$.
Therefore we have an action of $(\bC^\xx)^{r-d}$ on $\bC^{r}$.
\end{remark}

The variety $\cX$
is then isomorphic to $(\IC^{r}\backslash F_\Sigma)/(\IC^\times)^{r-d}$, where $F_\Sigma$ is 
the intersection of unions $\bigcup_{i\in I}\set{z_i=0}$ for all subsets $I\sbs\set{1,\dots,r}$ such that $\sets{v_i}{i\notin I}$ spans a cone in \Si. More explicitly,  $F_\Si$ is the zero locus
\begin{equation}
F_\Si=Z(z^{\hat\si}\col \si\in\Si),\qquad
z^{\hat\si}=\prod_{i\col v_i\notin\si}z_i.
\end{equation}
The toric divisors $D_i=\{z_i=0\}$ in $\cX$
satisfy the linear equivalence relations $\sum_i a_iD_i=0$ whenever $
a\cdot q:=\sum_ia_iq_i=0$ for $q\in L$.
The canonical line bundle is equal to $K_\cX=\cO(-\sum_iD_i)$.
If $\cX$ is  Calabi-Yau, then the canonical divisor $-\sum_iD_i$ is trivial, hence $\sum_i q_i=0$ for all $q\in L$.
This means that the vector $\rho=(1,\dots,1)\in\bZ^{r}$ satisfies $\rho\cdot q=0$ for $q\in L$, hence induces a linear map $\rho:N\to \bZ$ such that $\rho(v_i)=1$ for all $i$.
Hence all vectors $v_i$ lie in the same hyperplane.
We project this hyperplane to $\IZ^{d-1}$ (for example, taking the first $d-1$ coordinates)
and consider the corresponding vectors $\hat v_i\in\bZ^{d-1}$.
The toric diagram (defined up to $\GL(d-1,\IZ)$-transformations) is the convex hull $P$ of the vectors $\sets{\hat v_i}{i=1\dots r}$, triangulated by the restriction of the fan to $\IZ^{d-1}$. Internal lattice points correspond to compact divisors in $\cX$, while lattice points on the boundary correspond to non-compact divisors (subject to linear equivalences).
The virtual Poincar\'e polynomial of $\cX$ is obtained by summing up the Poincar\'e polynomials of the toric orbits,
\be
P(\cX) = \sum_{k=0}^d n_{k} (q-1)^{d-k} 
\ee
where $n_k$ is the number of $k$-dimensional cones in $\Sigma$ (hence $n_0=1, n_1=r$).

\begin{lemma}
\label{lm:Ptoric}
Let \cX be a smooth toric CY3-fold given by a convex toric diagram having $i$ internal vertices and $b$ boundary vertices. Then
\begin{equation}
\label{Ptoric}
P(\cX)=q\,(q^2+(i+b-3)q+i).
\end{equation}
In particular, the virtual Poincar\'e polynomial is independent of the triangulation. 
\end{lemma}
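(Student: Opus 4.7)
The plan is to apply the formula
\[
P(\cX) = \sum_{k=0}^{3} n_k\, (q-1)^{3-k}
\]
from the preceding discussion, where $n_k$ denotes the number of $k$-dimensional cones in the fan $\Sigma$, and then to express each $n_k$ purely in terms of $i$ and $b$ using lattice geometry.

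The cones of $\Sigma$ are in bijection with the faces of the triangulation of the toric diagram $P$ (plus the zero cone), so $n_0=1$, $n_1 = i+b$ (one ray per lattice point of $P$), $n_2$ equals the number of edges $E$ of the triangulation, and $n_3$ equals the number of triangles $T$. To determine $T$ I would invoke smoothness: every maximal cone is generated by a $\IZ$-basis of the lattice, which in view of the Calabi-Yau hyperplane condition $\rho(v_i)=1$ translates into the statement that every triangle of the triangulation is unimodular, i.e.\ has lattice area $1/2$. Pick's theorem $\operatorname{Area}(P) = i + b/2 - 1$ then yields
\[
T = 2\,\operatorname{Area}(P) = 2i + b - 2.
\]
For the number of edges I would use that the triangulation realises $P$ as a topological disk, so Euler's formula gives $(i+b) - E + T = 1$, hence
\[
E = (i+b) + T - 1 = 3i + 2b - 3.
\]

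The last step is a routine expansion: substituting $n_0=1$, $n_1=i+b$, $n_2=3i+2b-3$, $n_3=2i+b-2$ into $P(\cX) = (q-1)^3 + (i+b)(q-1)^2 + (3i+2b-3)(q-1) + (2i+b-2)$ and collecting powers of $q$ gives coefficients $1$, $i+b-3$, $i$, $0$ in degrees $3,2,1,0$ respectively, which is exactly $q\bigl(q^2 + (i+b-3)q + i\bigr)$. Independence from the triangulation is then manifest, since both $T$ and $E$ depend only on $i$ and $b$.

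I do not foresee a genuine obstacle: the only non-formal input is Pick's theorem combined with unimodularity of smooth cones, both of which are standard. The most delicate bookkeeping is ensuring that the bijection between cones and triangulation faces is the correct one on the boundary (where boundary edges bound a single triangle rather than two), but this is automatically handled by Euler's formula for a disk rather than a sphere.
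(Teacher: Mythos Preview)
Your proposal is correct and follows essentially the same approach as the paper: both use Pick's theorem (with unimodularity from smoothness) to count triangles as $2i+b-2$, Euler's formula on the disk to count edges as $3i+2b-3$, and then substitute into the expansion in powers of $q-1$. Your write-up is slightly more explicit about why smoothness forces each triangle to be unimodular, but the argument is otherwise identical.
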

\begin{proof}
By Pick's theorem, the number of triangles in the diagram (all having area $\oh$) is equal to $2i+b-2$.
By Euler's formula, the number of edges in the diagram is equal to $(i+b)+(2i+b-2)-1=3i+2b-3$.
Therefore
\begin{equation}
P(\cX)=(q-1)^3+(i+b)(q-1)^2
+(3i+2b-3)(q-1)+(2i+b-2)
\end{equation}
which simplifies to \eqref{Ptoric}.
\end{proof}

\begin{example}
The toric fan for the conifold is spanned by the vectors 
\be
v_0 = \begin{pmatrix} 0\\0\\1 \end{pmatrix}, \quad 
v_1 = \begin{pmatrix} 1\\0\\1 \end{pmatrix},\quad 
v_2 = \begin{pmatrix} 1\\1\\1 \end{pmatrix},\quad 
v_3 = \begin{pmatrix} 0\\1\\1 \end{pmatrix},
\ee
Its crepant resolution $\cX=\cO(-1)\oplus \cO(-1)\rightarrow \IP^1$ corresponds to a triangulation of the toric diagram.
Different triangulations are related by a flop transition.
The toric diagram encloses the vectors (see Figure \ref{extoric}, left)
\be
\hat v_0 = \begin{pmatrix} 0\\0 \end{pmatrix}, \quad 
\hat v_1 = \begin{pmatrix} 1\\0 \end{pmatrix}, \quad 
\hat v_2 = \begin{pmatrix} 1\\1 \end{pmatrix}, \quad 
\hat v_3 = \begin{pmatrix} 0\\1 \end{pmatrix}
\ee
There is no internal lattice point so the resolution of the conifold does not support any compact divisor.
The four corners correspond to linearly equivalent non-compact divisors.
The diagonal edge corresponds to the basis $\IP^1$ of the fibration.  
Since $i=0$, $b=4$,
the virtual Poincar\'e polynomial is $P(\cX)=q^2(q+1)$, as expected for 
a rank two vector bundle over $\bP^1$.
\end{example}

\begin{example}
\medskip The toric fan for the orbifold $\IC^3/\IZ_3$ is spanned by the vectors 
\be
v_0 = \begin{pmatrix} 0\\0\\1 \end{pmatrix}, \quad 
v_1 = \begin{pmatrix} 1\\0\\1 \end{pmatrix},\quad 
v_2 = \begin{pmatrix} 0\\1\\1 \end{pmatrix},\quad 
v_3 = \begin{pmatrix} -1\\-1\\1 \end{pmatrix}.
\ee
The toric diagram encloses the vectors (see Figure \ref{extoric}, right)
\be
\hat v_0 = \begin{pmatrix} 0\\0 \end{pmatrix}, \quad 
\hat v_1 = \begin{pmatrix} 1\\0 \end{pmatrix}, \quad 
\hat v_2 = \begin{pmatrix} 0\\1 \end{pmatrix}, \quad 
\hat v_3 = \begin{pmatrix} -1\\-1 \end{pmatrix}
\ee
The non-zero vectors $(\hat v_1,\hat v_2, \hat v_3)$ span the toric fan for the surface $S=\IP^2$.
The crepant resolution $\cX$ of the orbifold is isomorphic to the canonical bundle $K_{\bP^2}$.
It is given by the triangulation of the toric diagram.
The divisor $D_0$ associated to the internal point $v_0$ is the exceptional divisor 
in the crepant resolution, while $D_1=D_2=D_3$ is the non-compact divisor obtained by restricting $K_{\IP^2}$ to 
a line in $\IP^2$.
Since $i=1$, $b=3$,
the virtual Poincar\'e polynomial is
$P(\cX)=q(q^2+q+1)$, as expected for a line bundle over $\bP^2$.
\end{example}

\begin{figure}[ht]
\begin{center}
\begin{tikzpicture}[inner sep=2mm,scale=1]
\begin{scope}[shift={(-3,-.4)}]  
\draw[step=1cm,black!20,thin] (-.4,-.4) grid (1.4,1.4); 
\draw (0,0) -- (1,1);
\draw (0,0) -- (0,1) -- (1,1) -- (1,0) -- (0,0) ;
\filldraw [black]  (0,0) ellipse (0.1 and 0.1);
\filldraw [black]  (0,1) ellipse (0.1 and 0.1);
\filldraw [black]  (1,1) ellipse (0.1 and 0.1);
\filldraw [black]  (1,0) ellipse (0.1 and 0.1);
\end{scope}
\begin{scope}[shift={(3,0)}]  
\draw[step=1cm,black!20,thin] (-1.4,-1.4) grid (1.4,1.4); 
\draw (0,0) -- (1,0);
\draw (0,0) -- (0,1);
\draw (0,0) -- (-1,-1);
\draw (1,0) -- (0,1) -- (-1,-1) -- (1,0) ;
\filldraw [black]  (1,0) ellipse (0.1 and 0.1);
\filldraw [black]  (0,1) ellipse (0.1 and 0.1);
\filldraw [black]  (-1,-1) ellipse (0.1 and 0.1);
\filldraw [black]  (0,0) ellipse (0.1 and 0.1);
\end{scope}
\end{tikzpicture}
\end{center}
\caption{Toric diagrams for the resolved conifold and $\IC^3/\IZ_3$ (a.k.a. local $\IP^2$)
\label{extoric}}
\end{figure}


\subsection{Quivers and brane tilings}
\label{sec:tilings}
The problem of determining the dynamics of D-branes on a local CY3-fold $\cX$  has a long history, going back to the seminal works \cite{Douglas:1996sw,Douglas:1997de}. In general, the dynamics
at low energies is governed by a supersymmetric gauge theory with 8 supercharges, living in the non-compact directions of the D-brane worldvolume. For BPS states localized in the three spatial dimensions, this is  a supersymmetric quiver quantum mechanics of the type considered in \cite{Denef:2002ru},
with product gauge group $G=\prod_{i=1}^r U(N_i)$, chiral fields
$\Phi_{ij}^\alpha$ in bifundamental representations $(N_i,\overline{N_j})$, a gauge-invariant
holomorphic superpotential $W(\{\Phi_{ij}^\alpha\})$, and real Fayet-Iliopoulos parameters $\theta_i$ for each $U(N_i)$ factor in $G$. The ranks $N_i$ are the coefficients of the D-brane charge 
$\gamma=\sum_{i=1}^r N_i \gamma_i$ on a basis of charges $\gamma_i\in H^*(\cX)$ 
associated to a set of \qq{elementary D-branes}, 
and the net number of chiral fields $| \{\Phi_{ij}^\alpha \}|
-|\{\Phi_{ji}^\alpha\}|$ going from $i$ to $j$ is given by (minus) the skew-symmetrized Euler form 
$-\langle \gamma_i,\gamma_j\rangle$. 
 The full BPS spectrum, for given stability parameters 
$\theta_i$, is then obtained as supersymmetric bound states of these elementary constituents, represented by BPS ground states of the quiver quantum mechanics.
In the presence of an infinitely heavy defect of charge $\gamma_\f$, such as a D6-brane wrapping $\cX$ or D4-branes wrapping non-compact divisors in $\cX$, the quiver quantum mechanics obtains an additional gauge group $U(N_\infty)$ and arrows $\Phi_{\infty,i}^{\alpha}, \Phi_{i,\infty}^{\alpha}$, and computes the
number of framed BPS states.

\medskip

Mathematically, BPS grounds states are cohomology classes on the moduli space of \te-semistable representations  of the quiver with potential $(Q,W)$. The \qq{elementary D-branes},
or \qq{fractional branes} in the context of orbifolds, correspond to a \idef{tilting sequence} 
$T=\bop_{i=1}^rT_i$ in the derived category of coherent sheaves $D^b(\coh \cX)$, such that $T_i$  generate
$D^b(\coh \cX)$ and $\Ext^k(T,T)=0$ for $k\ne0$. 
When $\cX$ is the total space of the canonical bundle on a complex surface $S$, 
a tilting sequence $T$ can be constructed by lifting a strong exceptional collection of line bundles
on $S$ \cite{Herzog:2003zc,Aspinwall:2004vm}. Note however that the lifted sequence need not be exceptional, in particular $\End(T_i)=\Gamma(\cX,\cO_\cX)$ may have dimension $>1$. 
The triangulated category $D^b(\coh \cX)$ is then equivalent to the category of representations of the Jacobian algebra $J(Q,W)$ for a quiver with potential $(Q,W)$ associated to $T$ \cite{Aspinwall:2004bs, Aspinwall:2005ur}.

\medskip

For a wide class of toric CY threefolds, the construction of the tilting sequence 
$T$ can be by-passed and the  
quiver $(Q,W)$ can be read off from a  \idef{brane tiling}  \cite{Hanany:2005ve,Franco:2005rj}. The latter is a bipartite graph $G$ embedded in a $2$-dimensional (real) torus $\cT$, or equivalently
a periodic bipartite graph $\tilde G$ on $\IR^2$. Each vertex carries a color, black or white,
such that edges connect only vertices with different colors.  The quiver $Q$ is then the dual graph of $G$: the vertices $i\in Q_0$ correspond to faces of $G$ (\ie the connected components of $\cT\ms G$) and the arrows $a:i\to j \in Q_1$ to edges common to 
faces $i$ and $j$. The arrows are   oriented so that they go clockwise around white vertices of $G$ and go anti-clockwise around black vertices of~$G$.

\begin{figure}[ht]
\begin{ctikz}
\path(0,0)pic{triang};
\path(2,0)pic{triang};
\path(1,2)pic{triang};
\path(3,2)pic{triang};
\end{ctikz}
\caption{A bipartite graph (in black and white) and the dual quiver (in red and blue)}
\end{figure}

Let $Q_2$ be the set of connected components of $\cT\ms Q$, or equivalently the set of vertices of $G$.
Let $Q_2^+$ and $Q_2^-$ correspond to the sets of white and black vertices of $G$.
For any face $F\in Q_2$, let $w_F$ be the cycle obtained by going along the arrows of $F$ (defined up to a cyclic shift). The potential  $W$ is then 
\begin{equation}
W=\sum_{F\in Q_2^+}w_F-\sum_{F\in Q_2^-}w_F.
\end{equation}
Note that $|Q_0|-|Q_1|+|Q_2|=0$, where $|Q_i|$ denotes the cardinality of the set $Q_i$, 
since the Euler number of the two-dimensional  torus vanishes.

\medskip

Conversely, starting with a quiver $(Q,W)$ with $|Q_0|-|Q_1|+|Q_2|=0$, it is straightforward to reconstruct the brane tiling: for every term $w$ in $W$, we construct a polygon with edges corresponding to the arrows of $w$ and orientation depending on the sign of $w$.
Then we glue these polygons along equal edges and obtain a torus with an embedded quiver.
Considering the dual graph, we obtain the required bipartite graph embedded in a torus. 
However, not all quivers corresponding to toric CY3 singularities satisfy the
torus condition  $|Q_0|-|Q_1|+|Q_2|=0$, see e.g. \cite{Aspinwall:2010mw} for some
counter-examples.

\begin{example}
\label{ex:bt of quotient}
Let $\Ga\sbs\SL_3(\bC)$ be an abelian group.
The corresponding $3$-dimensional representation of $\Ga$ can be decomposed as $\bC^3=\rho_1\oplus\rho_2\oplus\rho_3$, where $\rho_i$ are $1$-dimensional representations of \Ga.
The set of \Ga-representations can be identified with the group of characters $\hat\Ga=\Hom(\Ga,\bC^\xx)$ (non-canonically isomorphic to $\Ga$).
Then $\rho_i\in\Ga$ satisfy $\rho_1\rho_2\rho_3=1$.
The corresponding Mckay quiver has the set of vertices $Q_0=\Ga$ and arrows $a_{i,\rho}:\rho\rho_i\to\rho$ for all $\rho\in\hat\Ga$ and $i=1,2,3$.
For any $\rho\in\hat\Ga$ and a permutation $\pi\in S_3$, consider the cycle
\begin{equation}
\rho =\rho \rho_{\pi(1)}\rho_{\pi(2)} \rho_{\pi(3)} \to
\rho\rho_{\pi(1)}\rho_{\pi(2)}\to\rho\rho_{\pi(1)}\to\rho
\end{equation}
The potential $W$ is the sum over all such cycles (up to a cyclic shift), weighted with
 $\sgn(\pi)$.
This implies that $\n{Q_0}-\n{Q_1}+\n{Q_2}=\n{\Ga}-3\n{\Ga}+2\n{\Ga}=0$, hence by applying the gluing algorithm described earlier we obtain a torus and a brane tiling.
\end{example}

Conversely, starting from the  brane tiling, one may recover both the toric diagram of $\cX$ and
the tilting sequence \cite{Franco:2005rj,Hanany:2006nm,Mozgovoy:2009fi,Bender:2009fh}. In particular,
periodic perfect matchings are in correspondence with integer points on the toric diagram $P$. The 
correspondence is many-to-one except for corner points of $P$, which correspond to a single perfect
matching \cite{Franco:2005rj}. More directly, one reconstruct the full toric CY3-fold $\cX$ from the moduli space of D0-branes, as we discuss next.

\subsection{\tpdf{$\Lambda$}{L}-grading and torus action} 

In order to analyze Jacobian algebras associated to brane tilings, 
consider the chain complex $\cC$
of abelian groups \cite{Mozgovoy:2008fd}
\begin{equation}
\dots\to 0\to\bZ Q_2\xto{d_2}\bZ Q_1\xto{d_1}\bZ Q_0\to 0\to\dots,
\end{equation}
where $d_2(F)=\sum_{a\in F}a$ for $F\in Q_2$ and $d_1(a)=t(a)-s(a)$ for $a\in Q_1$.
Its homology $H_*(\cC)$ is isomorphic to $H_*(\cT,\IZ)$.
The quotient 
\begin{equation}
\La=\bZ Q_1/(d_2(F)-d_2(F')\col F,F'\in Q_2)
\end{equation}
is a free abelian group of rank $|Q_0|+2$.
The projection 
\begin{equation}\label{weight}
\wt:\bZ Q_1\to\La,
\end{equation}
which we will call the \idef{weight function},
induces a $\La$-grading on the path algebra~$\bC Q$.
For any arrow $a\in Q_1$, we have $\frac{\dd W}{\dd a}=\frac{\dd w_F}{\dd a}-\frac{\dd w_{F'}}{\dd a}$, where $F,F'$ are two faces that contain~$a$.
Note that these derivatives have the same weight in $\La$.
Therefore the ideal generated by $\frac{\dd W}{\dd a}$, for $a\in Q_1$, is homogeneous \wrt the \La-grading and the Jacobian algebra $J=J(Q,W)$ inherits the \La-grading.
This implies that the space of representations $R(J,d)$ is equipped with an action of the complex torus $T_\La=\Hom(\La,\bC^\xx)$
\begin{equation}
t\cdot (M_a)_{a\in Q_1}=(t^a M_a)_{a\in Q_1},\qquad
t^a=t(\wt(a)).
\end{equation}

Starting from a brane tiling and the corresponding Jacobian algebra $J=J(Q,W)$,
one may obtain the singular toric CY3 $\cX$ as the moduli space $M(J,\de)=R(J,\de)\GIT G_\de$,
for the dimension vector $\de=(1,\dots,1)\in\bZ^{Q_0}$ 
(physically, this is the moduli space for a single D0-brane). 
This is an affine variety with the coordinate ring $\bC[M(J,\de)]=\bC[R(J,\de)]^{G_\de}$ isomorphic to the center of $J$.
Note that $G_\de=(\bC^\xx)^{Q_0}$ and the moduli space $M(J,\de)$ is equipped with an action of the torus $T_\La/G_\de$.
This torus is isomorphic to $T_M=\Hom(M,\bC^\xx)$,
where $M=\Ker(d_1:\La\to\bZ Q_1)$ is a rank $3$ lattice.

\medskip

To see that $M(J,\de)$ is a 3-dimensional toric variety
with an action of the torus $T_M$, we need to describe an open toric orbit in $M(J,\de)$.
Consider the open subset $T'=R(J,\de)\cap(\bC^\xx)^{Q_1}\sbs R(J,\de)$ (which is a torus).
Its points can be identified with maps $x:Q_1\to\bC^\xx$ such that $\prod_{a\in F}x(a)$ is independent of $F\in Q_2$.
We obtain a group homomorphism $x:\bZ Q_1\to\bC^\xx$ such that $x(d_2(F))=x(d_2(F'))$ for all $F,F'\in Q_2$.
Hence we can identify $x$ with a homomorphism $x:\La\to\bC^\xx$ and therefore we obtain $T'=T_\La$.
Taking the quotient by $G_\de=(\bC^\xx)^{Q_0}$ we obtain the 3-dimensional torus
$T_\La/G_\de\iso T_M$ inside $M(J,\de)$.

\medskip

Note that $M(J,\de)$ is generally a singular $3$-dimensional toric CY variety. By considering moduli spaces 
\begin{equation}
M^\te(J,\de)=R^\te(J,\de)/G_\de
\end{equation}
for generic stability parameters $\te$, one obtains crepant resolutions of the variety $M(J,\de)$,
which are still toric since the torus $T_M$ is also contained in these moduli spaces.
One can construct the tilting bundle $T=\bop_{i\in Q_0}T_i$ on $X=M^\te(J,\de)$ as follows
(see \eg \cite{Bender:2009fh}).
The diagonal $\bC^\xx\sbs G_\de=(\bC^\xx)^{Q_0}$ acts trivially on $R^\te=R^\te(J,\de)$
and we have a free action of $G_\de/\bC^\xx$ on $R^\te$.
Let us fix a vertex $i_0\in Q_0$.
For any vertex $i\in Q_0$, we equip the trivial line bundle $L_i=R^\te\xx\bC$ with an action of $G_\de$ given by
\begin{equation}
t\cdot (M,c)=(t\cdot M,t_it_{i_0}\inv c),
\end{equation}
where $t\cdot M$ denotes the standard action of $G_\de$ on $R^\te(J,\de)$.
Then the diagonal $\bC^\xx\sbs G_\de$ acts trivially on $L_i$ and we have a free action of $G_\de/\bC^\xx$ on $L_i$.
Taking the quotients, we obtain a line bundle
\begin{equation}
T_i=L_i/G_\de\to R^\te/G_\de=M^\te(J,\de)
\end{equation}
over $M^\te(J,\de)$.
Then $T=\bop_{i\in Q_0}T_i$ is the required tilting bundle (it is the universal family over $M^\te(J,\de)$ \cite{king_moduli})
and induces an equivalence of derived categories \cite{vandenbergh_non}
\begin{equation}
D^b_c(\coh X)\iso 
D^b(\mmod J),\qquad F\mto\RHom(T,F).
\end{equation}
For this reason the Jacobian algebra $J$ is
called a non-commutative crepant resolution of $M(J,\de)$.

\sec[Crystals]
In the previous section we introduced the weight function
$\wt:\bZ Q_1\to\La$ \eqref{weight} which equips the Jacobian algebra $J=J(Q,W)$ with a \La-grading.
Under certain consistency condition on the brane tiling (satisfied in all our examples), this grading can be used to equip framed moduli spaces of $J$-representations with a torus action and parametrize torus-fixed points (and therefore, as we recall below, framed BPS states in the non-commutative chamber) as molten crystals or pyramids \cite{Mozgovoy:2008fd,Ooguri:2008yb}.

\medskip

Under the above-mentioned consistency conditions, 
it was proved in \MR that two paths $u,v:i\to j$ in $Q$ induce equal elements in $J$ if and only if $\wt(u)=\wt(v)$ (we will say in this case that $u,v$ are equivalent).
Moreover, any nontrivial path has a nonzero weight.
For any path $u:i\to j$, we define $s(u)=i$, the source of $u$, and $t(u)=j$, the target of $u$.
Note that if $s(u)=s(v)$ and their weights are equal, then $t(u)=t(v)$.

\medskip

Let $\De_i$ be the set of equivalence classes of paths
that start at the vertex $i$.
It is a basis of the projective $J$-module $P_i=Je_i$.
We define a partial order on $\De_i$ by the condition that $u\le v$ if there exists a path $w$ such that $wu\sim v$.
A subset $\cI\sbs\De_i$ is called an ideal (or a lower set) if $u\le v$ and $v\in\cI$ imply $u\in\cI$.
In physics literature the poset $\De_i$ is called  a \idef{crystal}, an element $u\in \De_i$ is called an \idef{atom} and the target $t(u)\in Q_0$ is called the \idef{color} of~$u$.
The complement $\De_i\ms\cI$ of a (finite) ideal $\cI$ is called a \idef{molten crystal}.

\medskip

Note that we have an embedding 
\begin{equation}
\wt:\De_i\hookrightarrow\La.
\end{equation}
Therefore we can parametrize atoms by their weights in $\La$ and interpret the crystal $\De_i$ as a subset of the lattice $\La\iso \bZ^{\n{Q_0}+2}$.

\begin{example}
\label{crystal for C3}
Consider the brane tiling
\begin{ctikz}
\path(0,0)pic{triang};
\end{ctikz}
where we identify parallel sides of the parallelogram.
The corresponding quiver has one vertex $1$, three loops $x,y,z$, and potential $W=xyz-xzy$.
Since $d_2(F)=d_2(F')$ for the two terms in $W$,
we have $\La=\bZ Q_1\iso\bZ^3$.
The poset $\De_1$ can be identified with $\bN^3$, where a triple $(k,l,m)$ corresponds to the monomial $x^ky^lz^m\in\bC[x,y,z]\iso J(Q,W)$.
A finite ideal $\cI\sbs\De_1\iso\bN^3$ is a finite subset such that if $(k,l,m)\in\cI$ and $0\le k'\le k$, $0\le l'\le l$, $0\le m'\le m$, then $(k',l',m')\in\cI$.
Such subsets are also called \idef{plane partitions}.
\end{example}

When $|Q_0|>1$, we can still embed $\De_i$ in lower dimensional spaces to make it more visual.
The first embedding was introduced in \MR.
Let $\tl Q$ be the universal cover of~$Q$, embedded in $\bR^2$.
Consider a lift of $i$ in $\tl Q$ which we continue to denote by $i$.
Any path in $Q$ starting at $i$ can be identified with a path in $\tl Q$ that starts at $i$.
It was proved in \MR that for any two vertices $i,j$ in $\tl Q$, there exists a path $v_{ij}:i\to j$ in $\tl Q$ (unique up to an equivalence; called the shortest path from $i$ to $j$) such that any other path $u:i\to j$ in $\tl Q$ is equivalent to $\om^n_j v_{ij}\sim v_{ij}\om_i^n$, where $\om_i$ is a cycle that starts at $i$ and goes along some face (it is unique up to an equivalence) and $n\ge0$.
Usually we omit the starting point of $\om_i$ and write $\om$.
Therefore we obtain a bijection
\begin{equation}
\De_i\xto\sim\tl Q_0\xx\bN\sbs\bR^3,\qquad \om^n v_{ij}\mto (j,n).
\end{equation}
The corresponding partial order on $\tl Q_0\xx\bN$ is generated by the following relations.
For any arrow $a:j\to k$ we have either $av_{ij}\sim v_{ik}$ or $av_{ij}\sim \om v_{ik}$.
Therefore
\begin{equation}
(j,n)<(k,n)\qquad\text{or}\qquad (k,n)< (j,n)<(k,n+1).
\end{equation}

There is an alternative embedding, called \idef{pyramid embedding},
\begin{equation}
\De_i\emb\tl Q_0\xx\bN\sbs\bR^3,
\end{equation}
where the first map is not necessarily a bijection.
For any poset $P$ and any element $u\in P$, define its height $\ell(u)$
to be the length $k$ of the maximal chain $u_0<u_1<\dots<u_k=u$.
Note that if  all cycles in $W$ have equal length, then equivalent paths also have equal length and therefore the height of $u\in\De_i$ is just the length of the path $u$.
Every $u\in\De_i$ can be interpreted as a path $u:i\to j$ in $\tl Q$ (up to equivalence).
The pyramid embedding maps it to
\begin{eqnarray}
u\mto (j,\ell(u))\in\tl Q_0\xx \bN.
\end{eqnarray}
Note that $u$ is mapped to level zero (meaning that $\ell(u)=0$) only if $u$ is the trivial path $e_i$ at $i$.
In this way we depict $\De_i$ as an (upside down) pyramid with the apex $e_i$.

\begin{figure}[ht]
\begin{ctikz}[scale=2]
\draw[dashed](.5,-.5)--(1.5,.5)--(.5,1.5)
--(-.5,.5)--(.5,-.5);
\begin{scope}
	\draw(0,-1)node[crc3]{}--(0,0)--(0,1)node[crc3]{};
	\draw(-1,0)node[crc3]{}--(0,0)node[crc4]{}
	--(1,0)node[crc3]{};
\end{scope}[shift={(-5,0)}]
\begin{scope}[shift={(1,1)}]
	\draw(0,-1)node[crc3]{}--(0,0)--(0,1)node[crc3]{};
	\draw(-1,0)node[crc3]{}--(0,0)node[crc4]{}
	--(1,0)node[crc3]{};
\end{scope}
\begin{scope}[shift={(0,1)}]
	\draw(0,-1)node[crc4]{}--(0,0)--(0,1)node[crc4]{};
	\draw(-1,0)node[crc4]{}--(0,0)node[crc3]{}
	--(1,0)node[crc4]{};
\end{scope}
\begin{scope}[shift={(1,0)}]
	\draw(0,-1)node[crc4]{}--(0,0)--(0,1)node[crc4]{};
	\draw(-1,0)node[crc4]{}--(0,0)node[crc3]{}
	--(1,0)node[crc4]{};
\end{scope}
\begin{scope}[shift={(-.5,-.5)}]
	\draw[ar1](0,0)--(0,1)node[rcrc]{};
	\draw[ar1](0,1)--(1,1)node[rcrc]{};
	\draw[ar1](1,1)--(1,0)node[rcrc]{};
	\draw[ar1](1,0)--(0,0)node[rcrc]{};
\end{scope}
\begin{scope}[shift={(.5,.5)}]
	\draw[ar1](0,0)--(0,1)node[rcrc]{};
	\draw[ar1](0,1)--(1,1)node[rcrc]{};
	\draw[ar1](1,1)--(1,0)node[rcrc]{};
	\draw[ar1](1,0)--(0,0)node[rcrc]{};
\end{scope}
\begin{scope}[shift={(-.5,.5)}]
	\draw[ar1](0,0)--(1,0)node[rcrc]{};
	\draw[ar1](1,0)--(1,1)node[rcrc]{};
	\draw[ar1](1,1)--(0,1)node[rcrc]{};
	\draw[ar1](0,1)--(0,0)node[rcrc]{};
\end{scope}
\begin{scope}[shift={(.5,-.5)}]
	\draw[ar1](0,0)--(1,0)node[rcrc]{};
	\draw[ar1](1,0)--(1,1)node[rcrc]{};
	\draw[ar1](1,1)--(0,1)node[rcrc]{};
	\draw[ar1](0,1)--(0,0)node[rcrc]{};
\end{scope}
\node at(.5,.8){$a_1$};
\node at(.5,.1){$a_2$};
\node at(.15,.65){$b_1$};
\node at(.85,.65){$b_2$};
\end{ctikz}
\caption{The bipartite graph and the dual quiver for the conifold \label{conifoldgraph}}
\end{figure}

\begin{example}[Conifold]
Consider the brane tiling in Figure \ref{conifoldgraph}.
The corresponding quiver has vertices $1,2$, arrows $a_1,a_2:1\to2$, $b_1,b_2:2\to 1$, and the potential
\begin{equation}
\label{Wconifold}
W=a_1b_1a_2b_2-a_1b_2a_2b_1.
\end{equation}
We draw the crystal $\De_1$ as a pyramid in Figure \ref{pyramid}.
We denote paths $u$ with $t(u)=1$ by yellow stones and paths $u$ with $t(u)=2$ by red stones.

\begin{figure}[ht]
\begin{ctikz}[scale=.5]
\foreach \x in {-3,-1,1,3}
	\foreach \y in {-3,-1,1,3}
		\draw[crc1](\x,\y)circle;
\foreach \x in {-2,0,2}
	\foreach \y in {-3,-1,1,3}
		\draw[crc2](\x,\y)circle;
\foreach \x in {-2,0,2}
	\foreach \y in{-2,0,2}
		\draw[crc1](\x,\y)circle;
\foreach \x in {-1,1}
	\foreach \y in{-2,0,2}
		\draw[crc2](\x,\y)circle;
\foreach \x in {-1,1}
	\foreach \y in{-1,1}
		\draw[crc1](\x,\y)circle;
\draw[crc2](0,1)circle;
\draw[crc2](0,-1)circle;
\draw[crc1](0,0)circle;
\end{ctikz}
\caption{Visualization of the crystal $\De_1$ as a pyramid}
\label{pyramid}
\end{figure}

\end{example}

\sec[NCDT and molten crystals]
\label{NCDT and moltent crystals}
In \eqref{part fun1} we defined the partition function $Z_{\f,\NCDT}(x)$ of NCDT invariants as a generating function of invariants of the moduli spaces $M^{\f,\NCDT}(J,d)$ of $\NCDT$-stable framed representations of the Jacobian algebra. 
The moduli space $M^{\f,\NCDT}(J,d)$ is equipped with an action of the torus $T_\La=\Hom(\La,\bC^\xx)$.
For the framing vector $\f=e_i$, the torus fixed points
are parametrized by finite ideals $\cI\sbs\De_i$ and the partition function has a simple form \MR
\begin{equation}
\label{NC from crystals}
Z_{i,\NCDT}(x)=\sum_{\cI\sbs\De_i}(-1)^{d_i+\hi(d,d)}x^{\udim\cI},
\end{equation}
where $\udim\cI=\sum_{u\in\cI}e_{t(u)}\in\bZ^{Q_0}$ is the dimension vector of the representation corresponding to the ideal $\cI$.
In this way we obtain an interpretation of the partition function of NCDT invariants as counting molten crystals (with signs).
By Theorem \ref{th:NCDT from unframed} we have (for symmetric quivers)
\begin{equation}
\label{eq NC from DT}
Z_{i,\NC}(x)
=\bar S_{e_i}
\Exp\rbr{-\sum_{d}d_i\,\Om(d,1)\ x^d}.
\end{equation}
Let us define the operator
\begin{equation}
\bar T:x^d\mto(-1)^{\hi(d,d)}x^d.
\end{equation}
Then the generating function counting melting crystals in $\De_i$ can be written as
\begin{equation}
\label{eq crystals from DT}
Z_{\De_i}(x)
=\sum_{\cI\sbs\De_i}x^{\udim\cI}
=\bar S_{e_i}\bar T Z_{i,\NC}(x)
=\bar T
\Exp\rbr{-\sum_{d}d_i\,\Om(d,1)\ x^d}.
\end{equation}

\begin{remark}[Algorithms]
\label{algo1}
The above partition function can be computed using the following algorithm,
which works for any subposet (or subcrystal) $P\sbs\De_i$.
For simplicity we assume that all cycles in $W$ have the same length.
We start with the set $P_0=\set{u_1,\dots,u_n}$ of minimal elements in $P$.
Then we apply all admissible arrows to all elements in $P_0$ and obtain the set $P_1$ of elements having height $\le 1$.
Assuming that we constructed the set $P_k$ of elements having height $\le k$, we apply all admissible arrows to the elements of $P_{k}\ms P_{k-1}$ and obtain the set $P_{k+1}$ of all elements having height $\le k+1$ in $P$.
In order to find the terms of the partition function up to degree $n$, we need only ideals $\cI\sbs P$ having $\le n$ elements.
Any element of such semi-ideal has height $\le n-1$, hence it is contained in $P_n$.
This means that we just need to find all ideals (of size $\le n$) in the finite poset $P_n\sbs P$.
This is done using standard algorithms for finding all ideals in finite posets.
See \url{https://github.com/smzg/msinvar} for an implementation.
An alternative algorithm based on the Quiver Yangian of \cite{Li:2020rij} has been
developed by the second-named author \cite{BPinprogress} and is included in the Mathematica package {\tt CoulombHiggs.m}, see Remark
\ref{mathematica}. 
\end{remark}

The above partition function of (numerical) NCDT invariants is itself an important object associated with the Jacobian algebra.
Later we will study partition functions of unframed refined invariants of Jacobian algebra.
Having these refined partition functions, we can determine $Z_{\f,\NCDT}$ using the results of \S\ref{sec:framed-unframed} and \S\ref{sec:framed-unframed2}.
Then we can compare the result with the direct computation obtained by applying the algorithms in Remark \ref{algo1}.

\vspace*{1cm}

\section{BPS indices for small crepant resolutions}
\label{sec_small}
In this section we will consider small crepant resolutions of affine singular CY3 varieties, meaning that the exceptional locus of a resolution has dimension $\le1$.
This implies that compactly supported sheaves on a resolution have support of dimension $\le1$.
In all our examples, we will actually study Jacobian algebras $J(Q,W)$ that are non-commutative crepant resolutions of their centers. Moreover, an important feature of all small crepant resolutions
is that the quiver $Q$ is symmetric.
Therefore the wall-crossing formula translates to the fact that integer DT invariants $\Om(d,y)$ are independent of the stability parameter and satisfy
(see \cf \eqref{hOmfromOm})
\begin{equation}
\hOm(x)=\Exp\rbr{\frac{\sum_d\Om(d,y)x^d}{y\inv-y}},
\end{equation}
where $\hOm(x)$ is the generating function of stacky unframed  invariants corresponding to the trivial stability. Our goal will be to compute this generating function explicitly (or review existing results 
in the literature)
and determine 
the DT invariants $\Om(d,y)=\Om_*(d,y)=\OmS(d,y)$. 
In this way we also shall get access
 to the  partition functions of framed invariants (refined and numerical) as explained in \S\ref{sec:framed-unframed}.
Examples of affine CY3 varieties admitting small crepant resolutions are
\begin{enumerate}
\item Quotients $\bC^2/G\xx \bC$, where $G\sbs\SL_2(\bC)$ is finite.
\item Quotients $\bC^3/G$, where $G\sbs\SO(3)$ is finite.
\item Affine toric CY3 varieties with toric diagrams that don't contain internal nodes.
They are of the form $\set{xy-z^{N_0}w^{N_1}=0}$ for $0\le N_1\le N_0$ or $\bC^3/(\bZ_2\xx\bZ_2)$ with an action of the group given by $(1,0)\mto\diag(-1,-1,1)$,
$(0,1)\mto\diag(1,-1,-1)$.
See Figure \ref{fig:small toric} for the corresponding toric diagrams.
\end{enumerate}

\begin{figure}[ht]
\begin{center}
\tikz{
\draw[step=1cm,black!20,thin] (-.4,-.4) grid (4.4,2.4); 
\draw(0,0)--(4,0)--(2,1)--(0,1)--(0,0);
\filldraw [black]  (0,0) ellipse (0.1 and 0.1);
\filldraw [black]  (1,0) ellipse (0.1 and 0.1);
\filldraw [black]  (2,0) ellipse (0.1 and 0.1);
\filldraw [black]  (3,0) ellipse (0.1 and 0.1);
\filldraw [black]  (4,0) ellipse (0.1 and 0.1);
\filldraw [black]  (2,1) ellipse (0.1 and 0.1);
\filldraw [black]  (1,1) ellipse (0.1 and 0.1);
\filldraw [black]  (0,1) ellipse (0.1 and 0.1);}
\qquad
\tikz{
\draw[step=1cm,black!20,thin] (-.4,-.4) grid (2.4,2.4); 
\draw(0,0)--(2,0)--(0,2)--(0,0);
\filldraw [black]  (0,0) ellipse (0.1 and 0.1);
\filldraw [black]  (1,0) ellipse (0.1 and 0.1);
\filldraw [black]  (2,0) ellipse (0.1 and 0.1);
\filldraw [black]  (0,1) ellipse (0.1 and 0.1);
\filldraw [black]  (1,1) ellipse (0.1 and 0.1);
\filldraw [black]  (0,2) ellipse (0.1 and 0.1);
}
\end{center}
\caption{Toric diagram for $\set{xy-z^{N_0}w^{N_1}=0}$, with $N_0=4$ edges at the bottom and $N_1=2$ edges at the top (left).
Toric diagram for $\bC^3/(\bZ_2\xx\bZ_2)$ (right).}
\label{fig:small toric}
\end{figure}

Note that for the action of $\bZ_N$ on $\bC^3$ given by $1\mto\diag(1,\om,\om\inv)$, $\om=e^{2\pi\bi/N}$, the corresponding quotient $\bC^3/\bZ_N$ is contained in all three families (for the second family one needs to consider a subgroup of $\SO(3)$ conjugate to $\bZ_N\sbs\SL_3(\bC)$; for the third family one considers $N_0=N$, $N_1=0$).
Note also that the quotient $\bC^3/(\bZ_2\xx\bZ_2)$ is contained in the second and the third family.

\subsection{Invariants of \tpdf{$\bC^3$}{C3}}
We consider the quiver $Q$ with one vertex, three loops $x,y,z$ and potential $W=xyz-xzy$.
The corresponding Jacobian algebra is $J(Q,W)=\bC[x,y,z]$.

The following result for unframed stacky invariants of $\bC^3$ was obtained in \cite{behrend_motivic}.
The authors actually study the refined NCDT invariants, but the refined unframed invariants written below can be also obtained from their proof.
For another proof of the formula see \cite{mozgovoy_motivicb}.

\begin{theorem}
\label{thm:BBS}
The generating function of (unframed) stacky invariants for $\bC^3$ is
\begin{equation}
\hOm(x)
=\Exp\rbr{\frac{q^2\sum_{n\ge1} x^n}{q-1}}
=\Exp\rbr{\frac{-y^3\sum_{n\ge1} x^n}{y\inv-y}}.
\end{equation}
\end{theorem}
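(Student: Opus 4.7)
The plan is to exploit the cut $I=\{z\}\subset Q_1$, which is allowed because each of the two monomials of $W=xyz-xzy$ contains $z$ exactly once. The partial Jacobian algebra is
\[
J_I = \bC\ang{x,y}/\rbr{\dd W/\dd z} = \bC\ang{x,y}/([x,y]) = \bC[x,y],
\]
so $R(J_I,d)\subset M_d(\bC)\xx M_d(\bC)$ is the classical commuting variety $C_d$ of pairs of commuting $d\xx d$ matrices. I will then apply the rigorous cut formula \eqref{stacky inv with a cut} rather than the informal definition \eqref{defhOm}.

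The next step is to observe that the prefactor in \eqref{stacky inv with a cut} is trivial. Since $Q$ has one vertex and three loops, $\hi_Q(d,d)=d^2-3d^2=-2d^2$, while $\ga_I(d)=d\cdot d=d^2$ because $I$ consists of the single loop $z$ at the unique vertex. Hence $\hi_Q(d,d)+2\ga_I(d)=0$ and
\[
\hOm(x)=\sum_{d\ge 0}\frac{P(C_d)}{P(\GL_d)}\,x^d.
\]

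The main input is now the motivic Feit--Fine formula, which states
\[
\sum_{d\ge 0}\frac{P(C_d)}{P(\GL_d)}\,x^d = \prod_{n\ge 1}\prod_{i\ge 0}\frac{1}{1-q^{1-i}x^n}.
\]
This is the motivic refinement of Feit and Fine's point count of $C_d(\IF_q)$, and is by now a standard result (it can be derived by stratifying $C_d$ according to the Jordan type of the first matrix and summing geometric series, or cited directly from Bryan--Morrison, since the computation is essentially the Hilbert scheme of points on $\bC^3$ by localization). I would quote this result rather than re-derive it. Applying $\Exp$ term by term gives
\[
\prod_{i\ge 0}\frac{1}{1-q^{1-i}x^n}
=\Exp\!\rbr{\sum_{i\ge 0}q^{1-i}x^n}
=\Exp\!\rbr{\frac{q^2x^n}{q-1}},
\]
and taking the product over $n\ge 1$ produces exactly the claimed expression $\Exp\bigl(q^2\sum_{n\ge 1}x^n/(q-1)\bigr)$. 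The equivalence with $\Exp\bigl(-y^3\sum_{n\ge 1}x^n/(y^{-1}-y)\bigr)$ follows from the identity
\[
\frac{q^2}{q-1}=\frac{y^4}{y^2-1}=\frac{-y^3}{y^{-1}-y}
\]
after clearing denominators.

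The only real obstacle is the Feit--Fine input, which is the step that packages all of the representation-theoretic content of $\bC[x,y]$-modules. Everything else (the cut reduction, the cancellation of the $(-y)^\bul$ prefactor, and the repackaging into plethystic form) is bookkeeping. It is worth noting that the vanishing of the prefactor is the crucial feature that makes $\bC^3$ tractable: for more general CY3 brane tilings one expects a non-trivial weight $\hi_Q(d,d)+2\ga_I(d)$, which is what will drive the refined invariants $\Om(d,y)$ to be non-constant in $d$.
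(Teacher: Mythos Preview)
Your proof is correct. The paper does not supply its own proof of this theorem; it cites \cite{behrend_motivic} and \cite{mozgovoy_motivicb} and moves on. Your argument via the cut $I=\{z\}$, the identification $R(J_I,d)=C_d$ with the commuting variety, the cancellation $\hi_Q(d,d)+2\ga_I(d)=0$, and the motivic Feit--Fine formula is exactly the route one would take using the paper's own machinery (formula \eqref{stacky inv with a cut}), and is essentially the argument behind the cited references.

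Two small comments. First, your parenthetical ``since the computation is essentially the Hilbert scheme of points on $\bC^3$ by localization'' is slightly circular: the Hilbert scheme result in \cite{behrend_motivic} is the \emph{framed} statement, which in this paper's logic is derived \emph{from} the unframed $\hOm(x)$ via Theorem~\ref{th:NCDT from unframed}, not the other way around. The clean input to cite for the motivic Feit--Fine identity is Bryan--Morrison directly. Second, your closing remark that a nonzero prefactor $\hi_Q(d,d)+2\ga_I(d)$ is ``what will drive the refined invariants $\Om(d,y)$ to be non-constant in $d$'' is not quite the right diagnosis: for $\bC^3$ the invariant $\Om(d,y)=-y^3$ is already nontrivial and constant in $d$, and for general brane tilings the $d$-dependence of $\Om(d,y)$ comes from the structure of $R(J_I,d)$ (e.g.\ the root combinatorics in \S\ref{sec:inv of quotients}), not merely from the quadratic weight. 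This does not affect the proof itself.
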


This implies that the corresponding DT invariants are
\begin{equation}
\label{OmD0C3}
\Om(n,y)=-y^3=(-y)^{-3}\cdot P(\bC^3;y),\qquad n\ge1.
\end{equation}
This is the virtual motive of $\bC^3$ \cite{behrend_motivic} (not to be confused with the virtual Poincar\'e polynomial
$P(\bC^3;y)=y^6$).
Physically, $\Om(n,y)$ counts bound states of $n$ D0-branes on $\bC^3$. The fact that it does not depend on $n$ is the basic property which allows to view D0-branes as Kaluza-Klein gravitons in M-theory \cite{Witten:1995ex}. We note that \eqref{OmD0C3} agrees  in the unrefined limit with an independent computation \cite{Banerjee:2018syt} based on exponential networks, which provide a dual representation of BPS states as  D3-branes wrapped on special Lagrangian cycles in the mirror CY 3-fold \cite{Eager:2016yxd}. It would be interesting to refine
the computations of  \cite{Banerjee:2018syt} so as to include spin dependence, along the lines of \cite{Galakhov:2014xba}.

\begin{remark}
\label{remarkBBS}
One has a similar formula for counting (unframed) D0 invariants of a smooth CY3-fold $\tcX$ \cite{behrend_motivic}
\begin{equation}
\cA_{\tcX,0}(x)
=\Exp\rbr{\frac{q\inv[\tcX]\sum_{n\ge1} x^n}{q-1}}
=\Exp\rbr{\frac{(-y)^{-3}[\tcX]\sum_{n\ge1} x^n}{y\inv-y}},
\end{equation}
where $[\tcX]=P(\tcX;y)$ is the virtual Poincar\'e polynomial defined in \eqref{defP}.
In particular,
let $\cX$ be a singular toric CY3 variety arising from a brane tiling or the corresponding quiver with potential $(Q,W)$.
Let $\tcX$ be a crepant resolution of $\cX$ and $J=J(Q,W)$ be the Jacobian algebra.
Then we have an equivalence of derived categories $D^b_c(\coh\tcX)\iso D^b(\mmod J)$ so that the class of $n$ D0-branes on $\tcX$ is mapped to the dimension vector $n\de$, where $\de=(1,\dots,1)\in\bZ^{Q_0}$.
This dimension vector is contained in the kernel of the skew-symmetric form, hence the DT invariants $\Om_\te(n\de,y)$ are independent of a (generic) stability parameter $\te$ and, in particular, coincide with the attractor invariant $\Om_*(n\de,y)$.
We obtain from the above formula that
\begin{equation}\label{n-delta}
\Om_*(n\de,y)=(-y)^{-3} [\tcX],
\end{equation}
Applying Lemma \ref{lm:Ptoric}, we obtain
\begin{equation}
\Om_*(n\de,y)=-y\inv(y^4+(i+b-3)y^2+i),
\end{equation}
where $i$ and $b$ are respectively the numbers of internal and boundary lattice points in the toric diagram of $\cX$.
\end{remark}

According to \S\ref{sec:framed-unframed},
for any framing vector $\f\in\bN$,
we have
\begin{equation}
Z^\rf_{\f,\NC}(x)
=S_{-\f}\Exp\rbr{ \sum_{n\ge1} 
\frac{(y^{2\f n}-1)\Om(n,y)\, x^n}{y\inv-y}}
=S_{-\f}\Exp\rbr{\sum_{n\ge1}
\frac{y^4(y^{2\f n}-1)x^n}{y^2-1}}.
\end{equation}
In particular, for $\f=1$, we have (\cf\cite{behrend_motivic})
\bea
Z^\rf_{\NC}(x)
&=&S_{-1}\rbr{\Exp\rbr{\sum_{n\ge1}
\frac{q^2(q^{n}-1)x^n}{q-1}}} \nn\\
&=&S_{-1}\rbr{
\prod_{n\ge1}\prod_{k=0}^{n-1}
(1-q^{k+2}x^n)\inv}
=\prod_{n\ge1}\prod_{k=0}^{n-1}
(1-q^{k+2-n/2}(-x)^n)\inv
.
\eea
The partition function of numerical NCDT invariants is given by specialization at $q=1$
\begin{equation}
\label{eq:num NCDT for C3}
Z_\NC(-x)=\prod_{n\ge1}\frac1{(1-x^n)^n}
\end{equation}
which is the MacMahon function.
According to \eqref{NC from crystals}, it can be also computed by counting molten crystals
\begin{equation}
Z_{\NC}(-x)=\sum_{\cI\sbs\De_1}x^{\udim\cI},
\end{equation}
where the crystal is $\De_1=\bN^3$ (see Example \ref{crystal for C3}),
ideals $\cI\sbs\bN^3$ are plane partitions and $\udim \cI=\n \cI$.
It is a theorem of MacMahon that the generating function of plane partitions is given by
\eqref{eq:num NCDT for C3}. Note that the refined topological vertex \cite{Iqbal:2007ii} postulates a different deformation of the generating series \eqref{eq:num NCDT for C3}, which is  
tantamount to $\Om(n,y)=-y$ rather $\Om(n,y)=-y^3$ as in \eqref{OmD0C3}.

\subsection{Invariants of \tpdf{$\bC^2/G\xx\bC$}{C2/GxC}}
\label{sec:inv of quotients}
Let $G\sbs\SL_2(\bC)$ be a finite subgroup and $\cX=\IC^2/G\times \IC$.
The corresponding McKay quiver is isomorphic to the double quiver (meaning that we add opposite arrows) of a quiver $Q$ with an underlying diagram being an extended Dynkin diagram of type $\tl A_n$ for $n\ge0$, $\tl D_n$ for $n\ge4$ or $\tl E_n$ for $n=6,7,8$.
Let $\hat Q$ be the \idef{Ginzburg quiver}, obtained from $Q$ by adding opposite arrows $a^*:j\to i$ for arrows $a:i\to j$ and loops $\ell_i:i\to i$ for $i\in Q_0$.
The potential on $\hat Q$ is given by 
\begin{equation}
\label{eq: =potential in Ginzb quiv}
W=\sum_{(a:i\to j)\in Q_1}(\ell_j aa^*-\ell_i a^*a).
\end{equation}
Then the Jacobian algebra $J=J(\hat Q,W)$ is Morita equivalent to the skew group algebra $G\ltimes\bC[x,y,z]$, where $G$ acts on the coordinates $x,y,z$.
Therefore the derived category $D^b(\mmod J)$ is equivalent to the derived category of $G$-equivariant coherent sheaves with compact support on $\bC^3$.

\medskip

In oder to write down the formula for unframed stacky invariants, we need to recall the root systems of affine type.
Let $Q$ be of type $\tl X_{N-1}=\tl A_{N-1}$, $\tl D_{N-1}$ or $\tl E_{N-1}$.
Let $\De^\f_+$ be the set of positive finite roots of type $X_{N-1}$ and 
$\De^\f=\De^\f_+\cup(-\De^\f_+)$ be the set of all finite roots.
Let $\de$ be the indivisible imaginary root.
Then the positive real roots of type $\tl X_{N-1}$ are
\begin{equation}
\De^\re_+=\De^\f_+\cup\sets{\De^\f+n\de}{n\ge1}
\end{equation}
while the positive imaginary roots are $\De^\im_+=\sets{n\de}{n\ge1}$.
We define $\De_+=\De_+^\re\cup\De_+^\im$.

\begin{remark}
\label{roots A_N}
In type $\tl A_{N-1}$ we can also describe real roots as follows.
Let $\al_0,\dots,\al_{N-1}$ be simple roots (so that $\de=\al_0+\dots+\al_{N-1}$).
For any $i,j\in\bZ_N$, consider the element $\al_{ij}=\al_i+\al_{i+1}+\dots+\al_j$.
Then $\De^\re_+=\sets{\al_{ij}}{j\ne i-1}+\bN\de$.

\end{remark}

We can identify the root lattice of type $\tl X_{N-1}$ and the lattice $\Ga=\bZ^{Q_0}$.
Under this identification positive roots correspond to dimension vectors of indecomposable $Q$-representations. 
The following result was proved in \cite{mozgovoy_motivicb} (see also \cite{bryan_motivica} for the $\tl A_{N-1}$ case).

\begin{theorem}
\label{th:DT inv of quotients}
The generating function of unframed stacky invariants for $J(\hat Q,W)$ is
\begin{equation}
\cA(x)=\Exp\rbr{
\frac{\sum_{d\in\De^\re_+} qx^d
+q(q+N-1)\sum_{d\in\De^\im_+}x^d}{q-1}}.
\end{equation}
\end{theorem}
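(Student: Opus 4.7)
The plan is to exploit the fact that the set of loops $I=\set{\ell_i\col i\in Q_0}$ forms a \idef{cut} of $(\hat Q,W)$: by construction \eqref{eq: =potential in Ginzb quiv}, every monomial of $W$ contains exactly one $\ell_j$. Formula \eqref{stacky inv with a cut} then yields
\begin{equation*}
\hOm(x)=\sum_{d\in\bN^{Q_0}}(-y)^{\hi_{\hat Q}(d,d)+2\ga_I(d)}\frac{P(R(J_I,d))}{P(G_d)}x^d.
\end{equation*}
A direct arrow count in $\hat Q$ (doubled arrows $a,a^*$ for each $a\in Q_1$, plus a loop at each vertex) gives $\hi_{\hat Q}(d,d)=-2\sum_{(a\col i\to j)\in Q_1}d_id_j$ and $\ga_I(d)=\sum_i d_i^2$, so the combined exponent collapses to $2\hi_Q(d,d)$. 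The essential structural input is that $J_I$ is the preprojective algebra $\Pi(Q)$: the relations $\dd W/\dd\ell_i=\sum_{a\col i\to j}aa^*-\sum_{a\col k\to i}a^*a$ are exactly the moment map equations defining $\Pi(Q)$ inside the double quiver $\bar Q$.

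The problem therefore reduces to evaluating $P(R(\Pi(Q),d))/P(G_d)$ for affine ADE $Q$. I would appeal to Crawley-Boevey's flatness theorem, which makes $R(\Pi(Q),d)$ a complete intersection of the expected codimension in $R(\bar Q,d)$, and then specialise to $y^2=q$ to count $\bF_q$-points. Via the Hall algebra of $\Pi(Q)$, or equivalently Hua's integration formula in its motivic refinement, the stacky count takes the Euler-product form
\begin{equation*}
\sum_d (-y)^{2\hi_Q(d,d)}\frac{P(R(\Pi(Q),d))}{P(G_d)}x^d
=\Exp\rbr{\frac{\sum_d c_d(q)\,x^d}{q-1}},
\end{equation*}
for certain coefficients $c_d(q)\in\bZ[q]$ determined by the Kac polynomials $a_d(q)$ of $Q$.

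To pin down the $c_d(q)$ I would invoke Kac's theorem on indecomposable quiver representations: $a_\al(q)=1$ for every positive real root $\al\in\De^\re_+$ (a unique indecomposable), while $a_{n\de}(q)=q+N-1$ for each imaginary root $n\de$, where $N-1$ is the rank of the underlying finite-type root system $X_{N-1}$. Crucially, $a_d(q)=0$ for dimension vectors outside $\De_+$, so the sum is automatically restricted to $\De^\re_+\cup\De^\im_+$. After tracking the Euler-form twist $(-y)^{2\hi_Q(d,d)}$ and using that $\hi_Q$ is positive definite on real roots (yielding the single power $q$) and vanishes on $\De^\im_+$ (leaving $q\cdot(q+N-1)$ unperturbed), one recovers the stated plethystic exponential.

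The main obstacle is the uniform treatment across the three affine families $\tl A,\tl D,\tl E$: in type $\tl A_{N-1}$ the indecomposables admit the explicit description recalled in Remark \ref{roots A_N}, and a direct Hall-algebra computation along the lines of \cite{bryan_motivica} is feasible, but in types $\tl D$ and $\tl E$ one must rely on either Hua's formula or Nakajima's hyperk\"ahler construction of quiver varieties to transport the count to a better-understood moduli space of $\bar Q$-representations. As a consistency check, the imaginary root contribution $\Omstar(n\de,y)=-y(y^2+N-1)$ should match $(-y)^{-3}[\widetilde{\cX}]$ from \eqref{n-delta}: writing $[\widetilde{\cX}]=q\,[\widetilde{\bC^2/G}]$ with $[\widetilde{\bC^2/G}]=q^2+(N-1)q$ (a $\bP^1$-chain of length $N-1$ plus the generic $\bC^2$), this reproduces $-(y^3+(N-1)y)$ exactly, confirming that the combinatorial coefficient $q+N-1$ is correct and not a more intricate polynomial in $q$.
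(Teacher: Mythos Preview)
The paper does not give its own proof of this theorem; it cites \cite{mozgovoy_motivicb} (and \cite{bryan_motivica} for type $\tl A$). Your high-level strategy --- take the cut $I=\set{\ell_i}$, identify $J_I$ with the preprojective algebra $\Pi(Q)$, and express the answer via Kac polynomials of the affine quiver $Q$ --- is exactly the route followed in that reference, and the first two steps (the exponent collapse $\hi_{\hat Q}(d,d)+2\ga_I(d)=2\hi_Q(d,d)$ and the identification $J_I\iso\Pi(Q)$) are carried out correctly.

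The genuine gap is the passage from $\sum_d q^{\hi_Q(d,d)}[R(\Pi(Q),d)]/[G_d]\,x^d$ to the stated plethystic exponential. Your phrase ``tracking the Euler-form twist'' suggests that the quadratic factor $q^{\hi_Q(d,d)}$ can be carried through the $\Exp$ and attached to each root individually, but $\Exp$ does not intertwine a quadratic twist with anything linear: if $\sum_d A_d x^d=\Exp(\sum_d B_d x^d)$ then $\sum_d q^{\hi_Q(d,d)}A_d x^d\ne\Exp(\sum_d q^{\hi_Q(d,d)}B_d x^d)$ in general. So one cannot simply say ``$\hi_Q=1$ on real roots gives $q$, $\hi_Q=0$ on $\bZ\de$ leaves $q(q+N-1)$ alone'' and be done. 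The actual argument in \cite{mozgovoy_motivicb} proceeds by relating $[R(\Pi(Q),d)]$ to the stack of $\bar Q$-representations via the moment map, then to the stack of $Q$-representations (this is where the quadratic twist is absorbed, essentially by a Hall-algebra factorisation over Harder--Narasimhan strata), and only then invoking Hua's formula for $Q$ to produce Kac polynomials. Your invocation of ``Crawley-Boevey flatness'' and ``Hua's formula'' names the right ingredients, but the bridge between them --- which is precisely where the Euler-form bookkeeping happens --- is the substance of the cited proof and cannot be replaced by a one-line twist argument. Your Kac-polynomial values $a_\al(q)=1$ for $\al\in\De^\re_+$ and $a_{n\de}(q)=q+N-1$ are correct for all affine ADE types, and the consistency check against $(-y)^{-3}[\wtl\cX]$ is a good sanity test, but neither substitutes for the missing step.
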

Note that for the trivial group $G$ (hence $N=1$) we obtain Theorem \ref{thm:BBS}.
The above theorem implies that the corresponding DT invariants are
\begin{equation}
\label{OmZN}
\Om(d,y)=\begin{cases}
-y&d\in\De^\re_+\\
-y(y^2+N-1)&d\in\De^\im_+\\
0&\text{otherwise}
\end{cases}
\end{equation}
The imaginary roots correspond to bound states of $n$ D0-branes, with $\Omega(n\delta)=(-y)^{-3}P(\wtl\cX)$. Real roots instead correspond to D2-D0 bound states, with DT invariants $\Omega(d,y)$ 
interpreted as refined Gopakumar-Vafa invariants \cite{Gopakumar:1998jq,Choi:2012jz}.
For $N=2$, the result \eqref{OmZN} agrees in the unrefined limit with independent computation  based on exponential networks \cite{Banerjee:2019apt}.

\medskip

Applying the results of \S\ref{sec:framed-unframed}, we can determine partition functions of framed invariants for any framed vector $\f$.
In particular, for $\f=e_0$ (corresponding to the extended vertex as well as the trivial representation of $G$),
we obtain numerical NCDT invariants \eqref{eq:num NCDT}
\begin{equation}
Z_{0,\NC}(x)=\bar S_{e_0}
\Exp\rbr{
\sum_{d\in\De^\re_+} d_0x^d
+N\sum_{d\in\De^\im_+}d_0x^d}.
\end{equation}
This implies
\begin{equation}
\label{eq:NC for Z_N}
Z_{0,\NC}(-x_0,x_1,\dots,x_{N-1})=
\prod_{n\ge1}\rbr{(1-x^{n\de})^{N}
\prod_{d\in\De^\f}(1-x^{d+n\de})}^{-n}.
\end{equation}
According to \S\ref{NCDT and moltent crystals} this is also the generating function counting molten crystals.
Equation \eqref{eq:NC for Z_N} for the group $G=\bZ_N$ was originally proved in \cite{young_generating} using the molten crystals interpretation.

\subsection{Invariants of the resolved conifold}
\label{sec:conifold}
Consider the conifold singularity
\begin{equation}
xy-zw=0.
\end{equation}
Its toric diagram is shown on Figure \ref{extoric}. 
%
To see the relation between the singularity and the toric variety, consider an exact sequence
\begin{equation}
0\to\bZ\xto{\smat{1&-1&-1&1}} \bZ^{4}
\xto{\smat{0&0&1&1\\0&1&0&1\\1&1&1&1}} \bZ^3\to0
\end{equation}
where the columns of the second matrix correspond to the rays of the fan.
The corresponding toric variety is equal to $\bC^4/T$, where the action of $T=\bC^*$ on $\bC^4$ is given by 
\begin{equation}
t(x_1,x_2,x_3,x_4)=(tx_1,t\inv x_2,t\inv x_3,tx_4).
\end{equation}
The map
\begin{equation}
(x_1,x_2,x_3,x_4)\mto (x,y,z,w)=(x_1x_2,x_3x_4,x_1x_3,x_2x_4)
\end{equation}
is $T$-equivariant and maps $\bC^4$ to the conifold.
Taking the quotient, we obtain an isomorphism.
On the hand we can represent this toric variety as a moduli space of quiver representations.
Consider the quiver $Q$
\begin{ctikzcd}
0\ar[rr,->>,bend left,"a_1\ a_2"]&&1\lar[ll,->>,bend left,"b_1\ b_2"]
\end{ctikzcd}
Then we can identify $R(Q,\de)$, $\de=(1,1)$, with $\bC^4$
\begin{equation}
M\mto (M_{a_1},M_{b_1},M_{b_2},M_{a_2})\in\bC^4
\end{equation} 
so that $M(Q,\de)=R(Q,\de)/G_\de=\bC^4/T$.

We equip $Q$ with the potential \eqref{Wconifold}
and consider the Jacobian algebra $J=J(Q,W)$.
Note that in this example we actually have $M(J,\de)=M(Q,\de)$.
Crepant resolutions are obtained by considering a generic stability parameter $\te$ (for example $\te=(1,0)$ or $\te=(0,1)$) and the corresponding moduli space $M^\te(J,\de)=M^\te(Q,\de)$.
They are isomorphic to the vector bundle $\cO(-1)\oplus\cO(-1)\to\bP^1$
and have a derived category (of coherent sheaves with compact support) equivalent to the derived category of $J$.
The following result was proved in \cite{morrison_motivica}.

\begin{theorem}
The generating function of unframed stacky invariants for $J(Q,W)$ is
\begin{equation}
\hOm(x)=\Exp\rbr{\frac{(q+q^2)x_0x_1
-q^\oh(x_0+x_1)}{q-1}\sum_{k\ge0}x^{k\de}}.
\end{equation}
\end{theorem}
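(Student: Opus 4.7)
The plan is to apply the cut construction of \S\ref{sec:Jacobian} combined with the dimensional reduction identity \eqref{stacky inv with a cut}, following the strategy of \cite{morrison_motivica}. Since each monomial of $W=a_1b_1a_2b_2-a_1b_2a_2b_1$ contains exactly one copy of $a_2$, the set $I=\{a_2\}\sbs Q_1$ is a cut. The partial Jacobian algebra $J_I=\bC Q'/(b_2a_1b_1-b_1a_1b_2)$, where $Q'=Q\ms\{a_2\}$, parametrises triples $(A,B_1,B_2)$ with $A:V_0\to V_1$ and $B_i:V_1\to V_0$, satisfying the single bilinear relation $B_2AB_1=B_1AB_2$. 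With $\gamma_I(d)=d_0d_1$, equation \eqref{stacky inv with a cut} reduces the claim to evaluating $[R(J_I,d)]/[G_d]$ for all $d=(d_0,d_1)\in\bN^2$.

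Next I would compute $[R(J_I,d)]$ by stratifying according to the rank $r$ of $A:V_0\to V_1$. Postcomposing the relation with $A$ yields $(AB_1)(AB_2)=(AB_2)(AB_1)$ in $\End(V_1)$; after fixing decompositions compatible with $\ker A$ and $\im A$, the bilinear constraint splits into a commuting-variety block on $\im A$ and unconstrained blocks on the complements. On the open stratum where $A$ is surjective ($r=d_1\le d_0$), the fibre over a fixed $A$ is a $\GL$-bundle over the classical commuting variety $\cC_{d_1}\sbs\gl_{d_1}^2$, whose motivic class is given by Feit--Fine's formula. Lower-rank strata are treated similarly by choosing adapted bases for $\ker A$ and $V_1/\im A$, which isolates the bilinear constraint inside a smaller commuting variety and contributes free affine factors from the complementary blocks. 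Summing the stratum contributions and dividing by $[G_d]$ will produce a closed-form expression for $\hOm(x)$ in $\hat\bA$.

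The main obstacle will be the combinatorial bookkeeping needed to show that the sum over strata factorises into the displayed plethystic form, which is symmetric in $x_0,x_1$ even though the chosen cut $I=\{a_2\}$ breaks this symmetry explicitly. Once this is done, the integer DT invariants extracted from $\hOm(x)$ via \eqref{hOmfromOm} are $\Om(n\de+e_i,y)=1$ for $n\ge0$ and $i\in\{0,1\}$, $\Om(n\de,y)=-y-y^3$ for $n\ge1$, and zero otherwise. The real-root invariants recover the refined Gopakumar--Vafa count of the unique rigid $\bP^1\sbs\wtl\cX$, while the imaginary-root invariants match the universal D0-brane formula $(-y)^{-3}P(\wtl\cX;y)$ of Remark \ref{remarkBBS} applied to the conifold's toric diagram (where $i=0$, $b=4$), as required.
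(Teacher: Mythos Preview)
The paper does not supply its own proof of this statement; it simply cites \cite{morrison_motivica}. Your outline---choosing the cut $I=\{a_2\}$, applying the dimensional reduction identity \eqref{stacky inv with a cut}, and reducing the count of $J_I$-representations to a commuting-variety computation handled by the Feit--Fine formula---is exactly the strategy of that reference, so your proposal is correct and aligned with the cited proof.

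One small cautionary remark on your stratification step: postcomposing the relation $B_2AB_1=B_1AB_2$ with $A$ gives the commutation $(AB_1)(AB_2)=(AB_2)(AB_1)$, but this is strictly weaker than the original relation when $A$ has a kernel. Concretely, with $V_0=\ker A\oplus W$ and $A|_W:W\isoto\im A$, writing $B_i=\smat{B_i^{\ker}\\B_i^W}$ one sees that the relation decomposes into a commuting constraint on the pair $(A|_W B_1^W,\,A|_W B_2^W)\in\End(\im A)^2$ together with a linear constraint on the $B_i^{\ker}$ components (expressing $B_2^{\ker}C_1=B_1^{\ker}C_2$ for $C_i=A|_W B_i^W$), not merely free affine blocks. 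This extra linear piece is what makes the ``combinatorial bookkeeping'' you flag nontrivial; in \cite{morrison_motivica} it is absorbed into a generating-function identity, and your proposal correctly anticipates that the $x_0\leftrightarrow x_1$ symmetry only reappears after summing all strata.
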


Equivalently, the only non-zero unframed DT invariants are 
\be
\Omega(n\delta)=-y^3-y\ ,\quad \Omega(n\delta-e_0) = \Omega(n\delta-e_1) = 1 \quad (n\geq 1)
\ee
corresponding to D0-branes and D2-D0 bound states, respectively. These invariants were
first computed in the unrefined limit $y\to 1$ in \cite{Szendroi:2007nu,Joyce:2008pc,gholampour_counting}, and have been recovered
using exponential networks in \cite[(3.29)]{Banerjee:2019apt}. 

\medskip

Applying results of \S\ref{sec:framed-unframed}, we can determine the partition functions of framed invariants (refined and numerical) for any framing vector $\f\in \IN^{Q_0}$ and any stability parameter.
In particular, for $\f=e_0$,
we obtain the partition function of numerical NCDT invariants \eqref{eq:num NCDT}
\bea
\label{eq:NCDT conifold}
Z_{0,\NC}(-x_0,x_1)
&=&\bar S_{e_0} Z_{0,\NC}(x)
=\Exp\rbr{\sum_{k\ge1}
2k(x_0x_1)^{k}-kx_0^{k}x_1^{k-1}-(k-1)x_0^{k-1}x_1^{k} \nn}\\
&=&\prod_{k\ge1}
\frac{(1-x_0^kx_1^{k-1})^k(1-x_0^{k}x_1^{k+1})^{k}}{(1-x_0^kx_1^k)^{2k}}
\eea

According to \S\ref{NCDT and moltent crystals},
the generating function counting melting crystals in the crystal $\De_0$ (see Figure \ref{pyramid}) is equal to
\begin{equation}
\label{eq:conif part1}
Z_{\De_0}(x)=
\sum_{\cI\sbs\De_0}x^{\udim\cI}
=Z_{0,\NC}(x_0,-x_1)
=\prod_{k\ge1}
\frac{(1+x_0^kx_1^{k-1})^k(1+x_0^{k}x_1^{k+1})^{k}}{(1-x_0^kx_1^k)^{2k}}
\end{equation}
This formula was conjectured in 
\cite{szendroi_non-commutative} 
and proved in
\cite{young_computing} 
using the molten crystal interpretation.

\medskip

Generally, for any (generic) stability parameter $\te\in\bR^2$, we have by Theorem
\ref{th:framed from unframed}
\begin{equation}
\bar S_{e_0}Z_{0,\te}
=\Exp\rbr{-\sum_{\te(d)<0} d_0\Om(d,1)x^d},\qquad
\bar S_{e_0}Z_{0,\NC}=\Exp\rbr{-\sum_d  d_0\Om(d,1) x^d}.
\end{equation}
We see from Equation \eqref{eq:NCDT conifold} that the walls occur only for \te satisfying $\te(1,1)=0$, $\te(m,m-1)=0$ or $\te(m-1,m)=0$ for some $m\ge1$.
We can parametrize them as follows.
For any $m\ge0$, consider the rays
\begin{equation}
\ell_m^+=\bR_{\ge0}(1-m,m),\qquad 
\ell_m^-=\bR_{\ge0}(-m,m-1),\qquad 
\ell_\infty=\bR_{\ge0}(-1,1)
\end{equation}
and let $\cC_m^{\pm}$ be the chamber between $\ell_m^\pm$ and $\ell_{m+1}^\pm$ (see Figure \ref{fig:conifoldray}).

\begin{figure}
\begin{ctikz}
\foreach[evaluate=\x using (1-\i)/\i*5] 
	\i in {1,2,...,20} {
	\draw(-\x,-5)--(\x,5);
	\draw(5,\x)--(-5,-\x);}
\foreach[evaluate=\x using (1-\i)/\i*5] \i in {1,2,3}{
	\draw(\x,5)node[above]{$\ell^+_\i$};
	\draw(-5,-\x)node[left]{$\ell^-_\i$};}
\draw(0,0)--(-5,5)node[above left]{$\ell_\infty$};
\draw(0,0)--(5,-5);
\draw(5,0)node[right]{$\ell^+_0$};
\draw(0,-5)node[left]{$\ell^-_0$};
\draw(2,4)node{$\cC_0^+=\cC_{\mathrm{triv}}$};
\draw(-1,4)node{$\cC_1^+$};
\draw(-2.35,4)node{$\cC_2^+$};
\draw(-3.5,-1)node{$\cC_0^-=\cC_\NC$};
\draw(-4,1)node{$\cC_1^-$};
\draw(-4,2.3)node{$\cC_2^-$};
\end{ctikz}
\caption{Chamber structure for framed invariants of the resolved conifold. The spectrum of framed BPS states is trivial in the chamber $\cC_0^+$, finite in the chambers $\cC_m^+$ with $m>0$, infinite in the chambers $\cC_m^-$ with $m\geq 0$. The non-commutative chamber corresponds to $\cC_0^-$.}
\label{fig:conifoldray}
\end{figure}

The partition function $Z_{0,\te}$ is independent of $\te$ in a given chamber.
For example, let us consider $\te\in\cC_m^{-}$ of the form $\te=(-m,m-1+\eps)$ for $0<\eps\ll1$.
We have $\te(k,k)<0$ and $\te(k,k-1)<0$ for all $k\ge1$ and $\te(k-1,k)=m-k+\eps<0$ only for $k\ge m+1$.
Therefore, we obtain
\bea
\label{eq:conif framed}
Z_{0,\te}(-x_0,x_1)
&=&\Exp\rbr{\sum_{k\ge1}
\Big(2k(x_0x_1)^{k}-kx_0^{k}x_1^{k-1}\Big)
-\sum_{k\ge m+1}(k-1)x_0^{k-1}x_1^{k}} \nn\\
&=&\prod_{k\ge1}
\frac{(1-x_0^kx_1^{k-1})^k}{(1-x_0^kx_1^k)^{2k}}
\cdot\prod_{k\ge m}(1-x_0^{k}x_1^{k+1})^{k}
\eea
for $\te\in\cC_m^-$.
This generating function can be used to compute melting crystals in a crystal having $m$ stones at the top (for example, the crystal $\De_0$ has one stone at the top). 

\begin{theorem}[see 
{\cite[Theorem 5.13]{nagao_counting}}]
For any $\te\in\cC_m^-$, $m\ge1$, the framed moduli space $M^{e_0}_\te(J,d)$ is isomorphic to
\begin{equation}
M^{e_0}_\te(J,d)\iso M_\NC(J^\f,(\bar d,1)),\qquad
\bar d=\pmat{m&-m+1\\ m+1&-m}d,
\end{equation}
where $J^\f$ is the quotient of $\bC Q^\f$, for the framing vector $\f=me_0$ (with new arrows $r_i:\infty\to 0$ for $1\le i\le m$), with relations induced by $W$ and $a_2r_i=a_1r_{i+1}$ for $1\le i<m$.
\end{theorem}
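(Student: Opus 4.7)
The plan is to prove the isomorphism by constructing an explicit derived equivalence induced by an iterated HRS tilt of $D^b(\mmod J)$ across the $m$ walls $\ell_1^-,\dots,\ell_m^-$ separating $\cC_m^-$ from the NCDT chamber $\cC_0^-$. Between $\cC_{k-1}^-$ and $\cC_k^-$ lies the wall $\ell_k^-=\bR_{\ge 0}(-k,k-1)$, perpendicular to the primitive dimension vector $(k-1,k)$; the corresponding BPS state is the unique D2-D0 bound state $S_{(k-1,k)}$ with $\Om((k-1,k),y)=1$. I would tilt the current heart $\cA_{k-1}$ at the torsion pair generated by $S_{(k-1,k)}$, obtaining a new heart $\cA_k$; after $m$ iterations, $\cA_m$ is again equivalent to the module category of a Jacobian algebra, which I claim must be precisely $(J^\f,me_0)$.

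Next I would compute the action on the Grothendieck group and on the framing simultaneously. Each individual tilt has an explicit effect on the basis of simples, and composing the $m$ steps gives exactly the transformation $d\mapsto\bar d$ with matrix $\smat{m&-m+1\\ m+1&-m}$. In parallel, the framing projective, originally equipped with a single morphism $r:\infty\to 0$, should acquire $m$ morphisms $r_1,\dots,r_m:\infty\to 0$ after the iterated tilt, coming from non-trivial $\Ext^1$ pairings with the simples $S_{(0,1)},S_{(1,2)},\dots,S_{(m-1,m)}$. The relations $a_2r_i=a_1r_{i+1}$ would encode the Koszul-type compatibilities inherited from the conifold potential $W=a_1b_1a_2b_2-a_1b_2a_2b_1$ across the successive mutations of the framed quiver.

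The final ingredient is stability matching. Under the derived equivalence the parameter $\te\in\cC_m^-$ is carried to the NCDT chamber of $\cA_m$, so $\te^{e_0}$-stability of framed $J$-representations becomes NCDT-stability (cyclic generation by $\infty$) of framed $J^\f$-representations. Combined with the identifications of the preceding paragraph, this yields the desired scheme-theoretic isomorphism.

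The hardest step will be verifying the precise form of the tilted algebra, specifically that the residual relations are exactly $a_2r_i=a_1r_{i+1}$ together with the Jacobi relations of the new potential. This requires applying the Derksen--Weyman--Zelevinsky mutation rule to the potential (including the framing arrow) at each wall, cancelling the two-cycles produced by successive mutations, and interpreting the remaining cubic terms in terms of the new framing arrows. A more geometric alternative, following Bridgeland--King--Reid and Van den Bergh, would be to realize $\cA_m$ as perverse coherent sheaves on the resolved conifold $\cO(-1)\oplus\cO(-1)\to\bP^1$ with suitably shifted perversity, and to construct the isomorphism functorially via $\RHom(T_m,-)$ for an appropriately twisted tilting bundle $T_m$; this would bypass the combinatorics of the potential at the cost of extra care with the framing.
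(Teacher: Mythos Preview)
The paper does not supply its own proof of this theorem: it is stated with an explicit citation to \cite[Theorem~5.13]{nagao_counting} and then used to interpret the partition function $Z_{0,\te}$ in terms of the modified pyramid $\De_0^{(m)}$. There is therefore nothing in the present paper to compare your argument against.

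That said, your outline is broadly in the spirit of the Nagao--Nakajima proof, which does proceed by iterated tilting (their ``mutations'') of the heart across the walls $\ell_k^-$, tracking both the unframed algebra and the framing data. A few cautions about your sketch: the tilts in Nagao--Nakajima are not the standard DWZ quiver mutations at a single vertex (the conifold quiver has only two vertices, and the objects $S_{(k-1,k)}$ are not simple representations of the original quiver), so invoking the DWZ mutation rule directly is not quite right; one really has to work with tilting at spherical objects in the CY3 category and compute the resulting endomorphism algebra by hand. Likewise, the emergence of exactly $m$ framing arrows with the relations $a_2 r_i = a_1 r_{i+1}$ is not a formal consequence of ``Koszul-type compatibilities'' but comes from an explicit identification of $\Hom$ and $\Ext^1$ from the (iteratively tilted) framing object to the new simples; this is where most of the work in \cite{nagao_counting} lies. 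Your alternative via perverse coherent sheaves and a twisted tilting bundle $T_m$ is closer to how Nagao--Nakajima actually organize the argument.
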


\begin{remark}
The chamber structure for the conifold was first studied in \cite{nagao_counting,Jafferis:2008uf}. 
The relations $a_2r_i=a_1r_{i+1}$ for $1\le i<m$ can be implemented by adding arrows 
$p_i:1\to\infty$, for $1\le i<m$, and considering a potential $W'=W+\sum_{i=1}^{m-1} p_i(a_2 r_i -a_1 r_{i+1})$ \cite{nagao_counting,Chuang:2008aw}.
\end{remark}

There is a torus action on $M_\NC(J^\f,(\bar d,1))$ so that the fixed points are parametrized by ideals in a poset (pyramid) $\De_0^{(m)}$ having $m$ stones at the top.
Under the map $d\mto \bar d$ described above, we have
\begin{equation}
(k,k)\mto(k,k),\qquad
(k,k-1)\mto(k+m-1,k+m),\qquad
(k-1,k)\mto(k-m,k-m-1).
\end{equation}

Using the formula \eqref{eq:conif framed} for the partition function  $Z_{0,\te}(x)$, we can describe the partition function of melting crystals
\begin{equation}
Z_{\De_0^{(m)}}(x)=
\sum_{\cI\sbs\De_0^{(m)}}x^{\udim\cI},
\end{equation}
\bea
Z_{\De_0^{(m)}}(-x_0,-x_1)
&=&\Exp\rbr{\sum_{k\ge1}
\Big(2k(x_0x_1)^{k}-kx_0^{k+m-1}x_1^{k+m}\Big)
-\sum_{k\ge m+1}(k-1)x_0^{k-m}x_1^{k-m-1}} \nn \\
&=&\Exp\rbr{\sum_{k\ge1}
2k(x_0x_1)^{k}-kx_0^{k+m-1}x_1^{k+m}
-(k+m-1)x_0^{k}x_1^{k-1}} \nn\\
&=&\prod_{k\ge1}
\frac{
(1-x_0^{k}x_1^{k-1})^{k+m-1}
(1-x_0^{k+m-1}x_1^{k+m})^{k}
}{(1-x_0^kx_1^k)^{2k}}
\eea
Note that for $m=1$ we recover
\eqref{eq:conif part1}.
This formula was conjectured in \cite{szendroi_non-commutative} 
and proved in \cite{young_computing,nagao_counting}.

\subsection{Invariants of toric small crepant resolutions}
Consider the toric singularity
\begin{equation}
xy-z^{N_0}w^{N_1}=0,
\end{equation}
where $0\le N_1\le N_0$. This singularity reduces to the conifold for $(N_0,N_1)=(1,1)$,
and is sometimes known as a generalized conifold. 
Its toric diagram has the form as in Figure \ref{fig:small toric},
with $N_0$ edges at the bottom and $N_1$ edges at the top.

The small crepant resolutions are obtained by triangulations of the toric diagram (see Figure \ref{fig:small toric1}) .
We will parametrize them following \MN.
Let $N=N_0+N_1$.
Given a triangulation $\si$,
we parametrize every triangle by the left end of its horizontal edge 
and enumerate triangles from right to left.
Then we obtain a bijection
\begin{equation}
\si=(\si_x,\si_y):I_N=\set{0,\dots,N-1}\to (I_{N_0}\xx\set0)\cup (I_{N_1}\xx \set1),
\end{equation}

\begin{figure}[ht]
\begin{ctikz}
\draw[step=1cm,black!20,thin] (-.4,-.4) grid (4.4,1.4); 
\draw(0,0)--(4,0)--(2,1)--(0,1)--(0,0)
--(1,1)--(1,0)--(2,1)--(2,0);
\draw(2,1)--(3,0);
\filldraw [black]  (0,0) ellipse (0.1 and 0.1);
\filldraw [black]  (1,0) ellipse (0.1 and 0.1);
\filldraw [black]  (2,0) ellipse (0.1 and 0.1);
\filldraw [black]  (3,0) ellipse (0.1 and 0.1);
\filldraw [black]  (4,0) ellipse (0.1 and 0.1);
\filldraw [black]  (2,1) ellipse (0.1 and 0.1);
\filldraw [black]  (1,1) ellipse (0.1 and 0.1);
\filldraw [black]  (0,1) ellipse (0.1 and 0.1);
\end{ctikz}
\caption{Example of a triangulation of the toric diagram on Figure \ref{fig:small toric}
with $\si=((3,0),(2,0),(1,0),(1,1),$ $(0,0),(0,1))$
and $J=\set{0,1}$}
\label{fig:small toric1}
\end{figure}

We will usually identify $\I=I_N$ with $\bZ_N$.
Define
\begin{equation}
\J=\sets{i\in \I}{\si_y(i)=\si_y(i+1)}
\end{equation}
which enumerates $i\in I$ such that triangles $T_i$, $T_{i+1}$ have adjoint horizontal edges (we consider triangles $T_{N-1},T_0$ for $i=N-1$).
We will assume that the diagonal between the origin and $(1,1)$ is in $\si$ (if $N_1>0$). Then $\si$ is uniquely determined by $J$.
Note that the parity of $J$ is equal to the parity of $I$, see Remark \ref{parity}.

\medskip

The corresponding quiver with potential is constructed as follows.
Let $I=\bZ_N$
and $J\sbs I$ be a subset having the same parity as $I$.
Define a quiver $Q$ with the set of vertices $Q_0=\I$ and with edges
\begin{equation}
\sets{h_i:i\to i+1,\bar h_i:i+1\to i}{i\in \I}\bigcup
\sets{r_i:i\to i}{i\in\J}
\end{equation}
equipped with the potential
\begin{equation}
W
=\sum_{i\in\J}\pm r_i(\bar h_ih_i-h_{i-1}\bar h_{i-1})+
\sum_{i\notin \J}
\pm\bar h_ih_i h_{i-1}\bar h_{i-1},
\end{equation}
where the signs are chosen in such way that every arrow appears with opposite signs (this is possible as $J$ has the same parity as $I$). The quiver $(Q,W)$ can be glued into a brane tiling \MN as was explained in \S\ref{sec:tilings}. Different triangulations lead to equivalent quivers related by mutations.

\begin{example}
Consider $I=\set{0,1,2}$ and $J=\set0,\set1$ or $\set{0,1,2}$.
The corresponding potentials are
\begin{gather}
W=r_0(\bar h_0h_0-h_2\bar h_2)-\bar h_1h_1h_0\bar h_0+\bar h_2h_2h_1\bar h_1,\\
W=\bar h_0h_0h_2\bar h_2
+r_1(\bar h_1h_1-h_0\bar h_0)-h_2h_2h_1\bar h_1,\\
W=r_0(\bar h_0h_0-h_2\bar h_2)
+r_1(\bar h_1h_1-h_0\bar h_0)
+r_2(\bar h_2h_2-h_1\bar h_1)
.
\end{gather}
\end{example}

\begin{example}[Quotient $\bC^2/\bZ_N\xx\bC$]
\label{quotient ex}
Consider $N_1=0$, $N_0=N$ and a triangulation
\begin{ctikz}
\draw[black!20](-.4,-.4)grid(4.4,1.4); 
\draw(0,0)--(4,0);
\filldraw [black]  (0,1) ellipse (0.1 and 0.1);
\foreach \x in {0,1,2,3,4} \draw(0,1)--(\x,0);
\foreach \x in {0,1,2,3,4} \filldraw [black]  (\x,0) ellipse (0.1 and 0.1);
\end{ctikz}
Then $I=J=\set{0,\dots,N-1}$ and the quiver $Q$ is the Ginzburg quiver of the cyclic quiver $C_N$, obtained by taking the double quiver of $C_N$ and adding loops at all vertices.
The potential was described in \eqref{eq: =potential in Ginzb quiv}.
\end{example}

\begin{example}[Conifold]\label{conifold ex}
Consider $N_0=N_1=1$ and a triangulation
\begin{ctikz}
\draw[step=1cm,black!20,thin] (-.4,-.4) grid (1.4,1.4); 
\draw(0,0)--(0,1)--(1,1)--(1,0)--(0,0)--(1,1);
\filldraw [black]  (0,0) ellipse (0.1 and 0.1);
\filldraw [black]  (1,0) ellipse (0.1 and 0.1);
\filldraw [black]  (0,1) ellipse (0.1 and 0.1);
\filldraw [black]  (1,1) ellipse (0.1 and 0.1);
\end{ctikz}
Then $I=\bZ_2$ and $J=\es$.
We obtain the same quiver with potential as was studied in \S\ref{sec:conifold}, with an identification
\begin{equation}
a_1=h_0,\qquad a_2=\bar h_1,\qquad b_1=\bar h_0,
\qquad b_2=h_1.
\end{equation}
\end{example}

\begin{example}[Suspended pinch point]\label{spp ex}
Consider $N_0=2$, $N_1=1$ and a triangulation
\begin{ctikz}
\draw[step=1cm,black!20,thin] (-.4,-.4) grid (2.4,1.4); 
\draw(0,0)--(2,0)--(1,1)--(0,1)--(0,0)--(1,1)--(1,0);
\filldraw [black]  (0,0) ellipse (0.1 and 0.1);
\filldraw [black]  (1,0) ellipse (0.1 and 0.1);
\filldraw [black]  (0,1) ellipse (0.1 and 0.1);
\filldraw [black]  (1,1) ellipse (0.1 and 0.1);
\filldraw [black]  (2,0) ellipse (0.1 and 0.1);
\end{ctikz}
Then $I=\bZ_3$, $\J=\set0$ and we obtain the quiver shown in Figure \ref{quivSPP}.
Labelling the vertices by $0,1,2$ and the arrows by
\begin{equation}
h_i:i\to i+1,\qquad \bar h_i:i+1\to i,\qquad r_0:0\to0.
\end{equation}
the potential is given by
\begin{equation}
W=r_0(\bar h_0 h_0-h_2\bar h_2)
-\bar h_1 h_1h_0h_0
+\bar h_2 h_2h_1\bar h_1.
\end{equation}
\end{example}

\begin{figure}[ht]
\begin{tikzcd}
&0\ar[<->,dr]\ar[<->,dl]\ar[loop above]\\
1\ar[<->,rr]&&2
\end{tikzcd}
\hspace*{2cm}
\raisebox{-2cm}{\includegraphics[height=5cm]{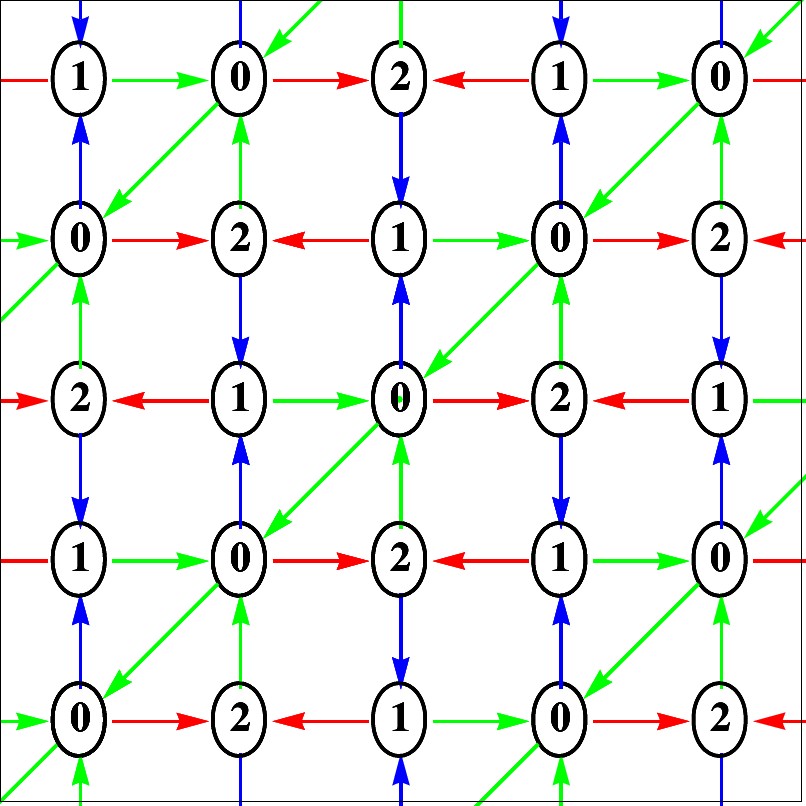}}
\caption{Quiver and tiling for the suspended pinched point}
\label{quivSPP}
\end{figure}

\begin{remark}\label{parity}
In what follows we will consider only triangulations with $\J=\set{0,\dots,N'-1}$ for some $N'\ge0$.
One can show that such triangulation is unique and that
$N'=N_0-N_1$ (assuming that the diagonal between $(0,0)$ and $(1,1)$ is in the triangulation).
Note that $N'$ and $N$ have the same parity.
Since mutations (swapping a diagonal) do not change the parity of~$J$, we conclude that $J$ always has the same parity as $I$.
\end{remark}

In the statement of the following theorem we use the root system of type $\tl A_{N-1}$ which was described in Remark \ref{roots A_N}.

\begin{theorem}[see {\MN}]
Let $N=N_0+N_1$, $N'=N_0-N_1$, $I=\bZ_N$, $\J=\set{0,\dots,N'-1}$ and $(Q,W)$ be the quiver with potential defined above.
Then the generating function of unframed stacky invariants for $J(Q,W)$ is given by
\begin{equation}
\cA(x)=\Exp\rbr{
\frac{\sum_{d} \Omega(d,y) x^d}{y^{-1}-y}},
\end{equation}
where, for $d\in\bN^N$, 
\begin{enumerate}
\item $\Omega(d,y)=-y$ if $d\in\De^\re_+$ and $\sum_{i\notin\J}d_i$ is even.
\item $\Omega(d,y)=1$
if $d\in\De^\re_+$ and $\sum_{i\notin\J}d_i$ is odd.
\item $\Omega(d,y)=-y(y^2+N-1)$ if $d\in\De^\im_+$.
\item $\Omega(d,y)=0$ otherwise.
\end{enumerate}
\end{theorem}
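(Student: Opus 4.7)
The plan is to follow the dimensional-reduction strategy of Morrison--Nagao via a cut of the potential. First, I would exhibit an explicit cut $I\sbs Q_1$: for each $i\in\J$, include the loop $r_i$ (which appears in precisely the cubic term $r_i(\bar h_i h_i-h_{i-1}\bar h_{i-1})$), and for each $i\notin\J$, include one arrow from the quartic term $\bar h_i h_i h_{i-1}\bar h_{i-1}$, say $\bar h_i$. Since the cubic and quartic terms of $W$ are in bijection with $\J$ and $I\ms\J$ respectively, and each term contains exactly one chosen arrow, $I$ is a valid cut. This lets me invoke the rigorous formula \eqref{stacky inv with a cut}, reducing the problem to the computation of $P(R(J_I,d))/P(G_d)$ where $J_I$ is the simpler partial Jacobian.

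Next, I would compute the motivic class of $R(J_I,d)$. For $i\in\J$, the relation $\dd W/\dd r_i=\bar h_i h_i-h_{i-1}\bar h_{i-1}=0$ is a commutativity condition, so the corresponding factor of $R(J_I,d)$ is a vector bundle over the space of representations of the commutative preprojective algebra for the subgraph of vertices in $\J$, whose motivic class can be handled by the techniques of \S\ref{sec:inv of quotients} (Theorem \ref{th:DT inv of quotients}). For $i\notin\J$, the relation $\dd W/\dd \bar h_i=h_i h_{i-1}\bar h_{i-1}$ (up to sign and cyclic rotation) is a rank condition on a composition of linear maps; I would stratify by the ranks of these compositions and compute each stratum as an affine bundle over a product of Grassmannians. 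The pre-existing case $\J=\I$ (\ie $N_1=0$) serves as a benchmark to control the computation.

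Next, I would substitute into \eqref{stacky inv with a cut}, extract the integer DT invariants via the plethystic logarithm $\Log$, and compare term by term with the claimed formula. The case of imaginary roots $d=n\de$ is essentially automatic: since $n\de\in\Ker\ang{-,-}$, the invariant is attractor-type and equals $(-y)^{-3}P(\tcX;y)$ by Remark \ref{remarkBBS}; combining with Lemma \ref{lm:Ptoric} applied to the toric diagram of Figure \ref{fig:small toric} (which has $b=N+2$ boundary lattice points and $i=0$ interior points) gives exactly $-y(y^2+N-1)$. For real roots, the vanishing of $\Om(d,y)$ outside $\De^\re_+\cup\De^\im_+$ follows from the support of the plethystic $\Log$, as real and imaginary roots exhaust the dimension vectors of indecomposable representations.

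The main obstacle is pinning down the parity-dependent sign that distinguishes $\Om(d,y)=-y$ from $\Om(d,y)=1$ on $\De^\re_+$. This sign must originate in the exponent $\hi_Q(d,d)+2\ga_I(d)$ from \eqref{stacky inv with a cut}: the extra contribution relative to the pure $\tl A_{N-1}$ case comes precisely from the quartic terms indexed by $i\notin\J$, which shift $\ga_I(d)$ by $\sum_{i\notin\J}d_{i-1}d_i$ and modify $\hi_Q(d,d)$ in a correlated way. When $d=\al_{jk}+n\de$ is a positive real root, this correction reduces modulo $2$ to $\sum_{i\notin\J}d_i$, flipping the sign of $\Om(d,y)$. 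I would verify this parity calculation first for small $N$ and low-dimensional roots, then give a general induction on the height of $d$, using the recursion provided by mutations (which cycle triangulations by swapping a diagonal and correspondingly move an element in or out of $\J$) to reduce to the base case $\J=\I$ covered by Theorem \ref{th:DT inv of quotients}.
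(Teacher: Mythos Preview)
The paper does not prove this theorem; it merely states it with a citation to Morrison--Nagao \MN. So there is no in-paper proof to compare against, and the overall dimensional-reduction strategy you describe is indeed the one used in that reference.

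That said, your proposal has a concrete error in the very first step: the subset $I=\{r_i:i\in\J\}\cup\{\bar h_i:i\notin\J\}$ is \emph{not} a cut. The problem is at the boundary between $\J$ and its complement. Take $i=0\in\J$ (so $r_0\in I$) and observe that the cubic term $r_0\,h_{N-1}\bar h_{N-1}$ also contains $\bar h_{N-1}$; since $N-1\notin\J$ (whenever $N'<N$), we have $\bar h_{N-1}\in I$ as well, so this term contains two arrows from $I$. The same failure shows up already in the suspended pinch point (Example~\ref{spp ex}, $N=3$, $\J=\{0\}$): your $I$ would be $\{r_0,\bar h_1,\bar h_2\}$, but the term $r_0 h_2\bar h_2$ then contains both $r_0$ and $\bar h_2$. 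A valid cut here is $\{r_0,\bar h_1\}$ or $\{r_0,h_1\}$; in general one must choose the non-loop arrows more carefully, alternating along the segment $i\notin\J$ so that each quartic picks up exactly one and the cubic boundary terms pick up none beyond $r_i$.

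Relatedly, your expression for $\partial W/\partial\bar h_i$ is incomplete: the arrow $\bar h_i$ appears not only in the quartic term at $i$ but also in the term at $i+1$ (cubic if $i+1\in\J$, quartic otherwise), so the relation is a two-term expression rather than a single monomial. This matters for the stratification you propose. Once the cut is corrected, the remaining outline---reducing to the $\tl A_{N-1}$ preprojective computation of \S\ref{sec:inv of quotients}, reading off the imaginary-root contribution from Remark~\ref{remarkBBS} and Lemma~\ref{lm:Ptoric}, and tracking the parity shift through the exponent $\hi_Q(d,d)+2\ga_I(d)$---is essentially the route taken in \MN.
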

Cases 1 and 2 correspond to D2-D0 bound states wrapped on a rational curve with normal bundle 
$\cO(-1)\oplus\cO(-1)$ and $\cO(0)\oplus\cO(-2)$ respectively, while case 3 corresponds to D0-branes, see Remark \ref{remarkBBS}.

\begin{example}[Suspended pinch point]
Consider the quiver with potential from Example~ \ref{spp ex}.
We have $\de=\al_0+\al_1+\al_2=(1,1,1)$ and
$$\De_+^\re=\set{\al_0,\al_1,\al_2,\al_0+\al_1,\al_1+\al_2,\al_0+\al_2}+\bN\de.$$
We have $J=\set0$, $s(d)=\sum_{i\notin\J}d_i=d_1+d_2$ and $s(\de)$ even.
The real roots $d$ with even $s(d)$ are
$$\set{\al_0,\al_1+\al_2}+\bN\de.$$
The real roots $d$ with odd $s(d)$ are
$$\set{\al_1,\al_2,\al_0+\al_1,\al_0+\al_2}+\bN\de.$$
Therefore
\begin{equation}
\label{spp refined}
\hOm(x)=\Exp\rbr{\frac{
q(x_0+x_1x_2)
-q^\oh(x_1+x_2+x_0x_1+x_0x_2)
+q(q+2)x^\de
}{q-1}\sum_{n\ge0}x^{n\de}}.
\end{equation}
Applying \eqref{eq NC from DT} we obtain the partition function
of NCDT invariants
\bea
\bar S_{e_0}Z_{0,\NCDT}(x)
&=&\Exp\rbr{
\sum_{k\ge0}
((k+1)x_0+kx_1x_2-(k+(k+1)x_0)(x_1+x_2)
+3(k+1)x^\de)x^{k\de}} \nn\\
&=&\prod_{k\ge0}\frac
{(1-x_1x^{k\de})^k(1-x_2x^{k\de})^k(1-x_0x_1x^{k\de})^{k+1}(1-x_0x_2x^{k\de})^{k+1}}
{(1-x_0x^{k\de})^{k+1}(1-x_1x_2x^{k\de})^k(1-x^{(k+1)\de})^{3(k+1)}}
\eea
Similarly, applying \eqref{eq crystals from DT}
we obtain the partition function of melting crystals in the crystal $\De_0$
$$Z_{\De_0}(x_0,-x_1,-x_2)=\bar T Z_{\De_0}(x)
=\bar S_{e_0} Z_{0,\NC}(x).$$
Therefore
\begin{equation}
Z_{\De_0}(x)
=\prod_{k\ge0}\frac
{(1+x_1x^{k\de})^k(1+x_2x^{k\de})^k(1+x_0x_1x^{k\de})^{k+1}(1+x_0x_2x^{k\de})^{k+1}}
{(1-x_0x^{k\de})^{k+1}(1-x_1x_2x^{k\de})^k(1-x^{(k+1)\de})^{3(k+1)}}.
\end{equation}
This formula proves a conjecture from \MR.
\end{example}

\subsection{Invariants of \tpdf{$\bC^3/(\bZ_2\xx\bZ_2)$}{C3/Z2xZ2}}
Consider the quotient $\bC^3/(\bZ_2\xx\bZ_2)$, where the action of the group is given by $(1,0)\mto\diag(-1,-1,1)$,
$(0,1)\mto\diag(1,-1,-1)$.
The corresponding toric diagram is contained in Figure \ref{fig:small toric}.
We construct the corresponding quiver with potential following Example \ref{ex:bt of quotient}.
The McKay quiver $Q$ has vertices $0,1,2,3$, arrows $a_{ij}:i\to j$ for all $i\ne j$ and 
potential 
\begin{equation}
W=\sum_{i,j,k}\pm a_{ki}a_{jk}a_{ij}
\end{equation}
where the sum runs over all triples of elements in $Q_0$.
The quiver and brane tiling are shown on Figure \ref{quivZ22}.
The following result is proved in \cite{mozgovoy_invariantsa}.
Here we identify $Q_0$ with $\bZ_4$ and write $\de=(1,1,1,1)\in\bZ^{Q_0}$.

\begin{figure}[ht]
\begin{tikzcd}[sep=2cm]
0\rar[<->]\ar[dr,<->]&1\dar[<->]\ar[dl,<->]\\
3\uar[<->]&2\lar[<->]
\end{tikzcd}
\hspace*{2cm}
\raisebox{-2cm}{\includegraphics[height=5cm]{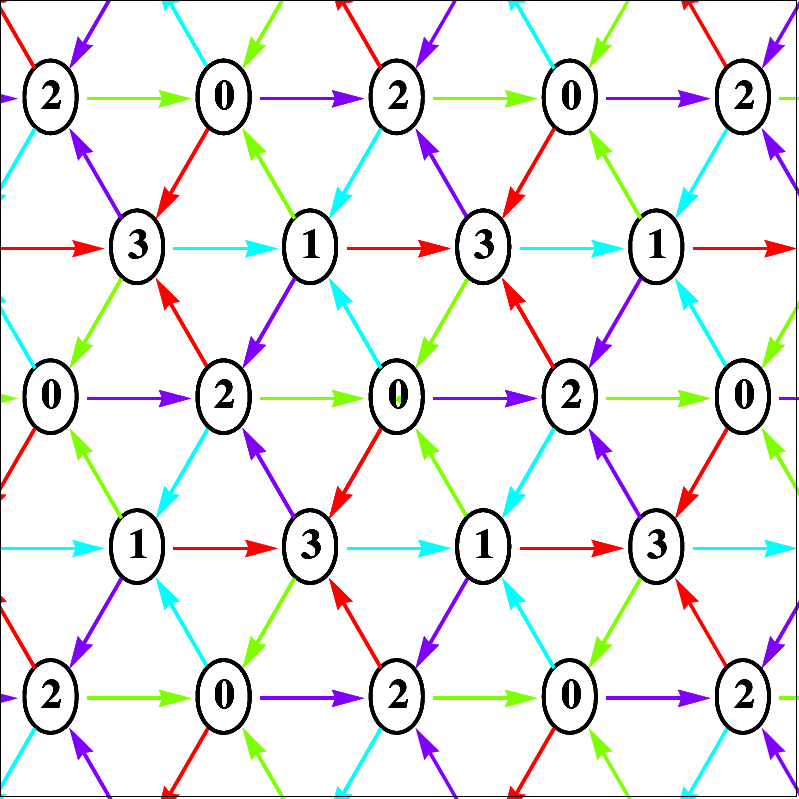}}
\caption{Quiver and tiling for $\bC^3/(\bZ_2\xx\bZ_2)$}
\label{quivZ22}
\end{figure}

\begin{theorem}
The generating function of unframed stacky invariants for $J(Q,W)$ is
\begin{equation}
\hOm(x)=\Exp\rbr{
\frac{q\sum_{i<j}x_ix_j-q^\oh\sum_i(x_i+x_ix_{i+1}x_{i+2})+q(q+3)x^\de}{q-1}\sum_{n\ge0}x^{n\de}
}.
\end{equation}
\end{theorem}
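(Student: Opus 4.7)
The plan is to adapt the methods of \cite{mozgovoy_motivicb}, used there for Theorem \ref{th:DT inv of quotients} on $\bC^2/G \xx \bC$, to the genuinely three-dimensional quotient $\bC^3/(\bZ_2 \xx \bZ_2)$. The claimed formula decomposes into three contributions that ought to correspond to distinct root types: the coefficient $q$ of $x_i x_j$ encodes $\Om(e_i+e_j,y) = -y$ for the six ``edge'' real roots; the coefficient $-q^{1/2}$ of $x_i$ and of $x_i x_{i+1} x_{i+2}$ encodes $\Om = 1$ for the four simple representations and the four ``triangle'' roots; and the coefficient $q(q+3)$ of $x^\de$ yields $\Om(n\de,y) = -y(y^2+3)$, which is exactly the prediction of Remark \ref{remarkBBS} and Lemma \ref{lm:Ptoric} applied to the toric diagram of Figure \ref{fig:small toric} (right) with $i = 0$ internal and $b = 6$ boundary lattice points.

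First, I would fix a cut $I \sbs Q_1$ of the potential $W$ compatible with the $\bZ_4$-rotation symmetry of the quiver --- for instance, selecting one arrow from each cubic term via $I = \{a_{i,i+1} : i \in \bZ_4\}$. The partial Jacobian algebra $J_I = \bC Q' / (\dd W/\dd a : a \in I)$ then has relations that are bilinear in the remaining matrix entries, and applying the cut formula \eqref{stacky inv with a cut} reduces the problem to computing
\begin{equation*}
\hOm(x) = \sum_{d \in \bN^{Q_0}} (-y)^{\chi_Q(d,d) + 2\ga_I(d)} \frac{P(R(J_I, d))}{P(G_d)} x^d.
\end{equation*}
I would evaluate $P(R(J_I,d))$ by stratifying according to the generic ranks of the bilinear relations $\dd W/\dd a$, expressing each stratum as a flag-variety fibration over a simpler quiver representation space and assembling the result into Gaussian multinomials. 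The Morita equivalence $J(Q,W) \sim G \ltimes \bC[x,y,z]$ further organizes the computation: the $\de$-supported part descends from a $G$-equivariant refinement of Theorem \ref{thm:BBS}, and summation over the four irreducible $G$-characters reproduces the factor $q(q+3)$.

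The main obstacle will be verifying that no further real-root contributions arise beyond the finitely many terms listed in the exponent --- dimension vectors such as $2e_i + e_j$ or $e_i + e_j + 2e_k$ could a priori contribute non-trivially, and their vanishing is not manifest. To handle this, I would first compute $P(R(J_I,d))$ directly for all $d$ with $\de \cdot d \le 4$, match against the claimed exponent expanded to the same order, and then extend to arbitrary $d$ by a $\de$-periodicity argument in the spirit of \MN. An alternative and possibly cleaner route is to compute the NCDT partition function via the molten crystal/dimer interpretation of \S\ref{NCDT and moltent crystals} and invert \eqref{eq NC from DT} to recover $\Om(d,y)$; in that case the rigidity becomes a combinatorial statement about dimer covers of the brane tiling in Figure \ref{quivZ22}, which should admit a transfer-matrix analysis along the lines of \cite{young_computing}.
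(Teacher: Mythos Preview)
The paper does not prove this theorem; it simply cites \cite{mozgovoy_invariantsa}. So there is no ``paper's own proof'' to compare against---only your proposal stands.

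Your proposal has a concrete error and a significant gap. First, the set $I = \{a_{i,i+1} : i \in \bZ_4\}$ is \emph{not} a cut. The quiver here is the McKay quiver of $\bZ_2\times\bZ_2$, whose vertices are the characters and whose arrows come in three ``colors'' indexed by the three nontrivial characters $\rho_1,\rho_2,\rho_3$; every cubic term of $W$ contains exactly one arrow of each color. A valid cut is therefore one full color class (four arrows forming two antiparallel pairs, e.g.\ $\{a_{01},a_{10},a_{23},a_{32}\}$), and \emph{not} the oriented $4$-cycle you chose. For instance, the triangle $0\to 1\to 2\to 0$ contains both $a_{01}$ and $a_{12}$ from your $I$, violating the cut condition. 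The identification $Q_0\simeq\bZ_4$ in the paper is only a labeling convenience; there is no $\bZ_4$-rotation symmetry respecting the potential.

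Second, even with a correct cut, the proposal remains a sketch at the crucial step. Verifying the formula for $\de\cdot d\le 4$ and then invoking an unspecified ``$\de$-periodicity argument in the spirit of \MN'' is not a proof: the methods of \cite{morrison_motivicb} for generalized conifolds exploit that $Q''$ after double reduction is a cyclic $\tl A$-quiver with explicit control over indecomposables and their interaction form, and you would need to carry out the analogue here (note that no double cut leaving a cyclic quiver exists for this example, since all arrows between a fixed pair of vertices lie in the same color class). Your alternative route through molten crystals and \eqref{eq NC from DT} recovers only the numerical invariants $\Om(d,1)$, not the refined $\Om(d,y)$ claimed in the theorem.
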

Note that the summands of the form $x_ix_{i+2}$ above don't correspond to any roots of type~$\tl A_3$.
Applying \eqref{eq NC from DT} we obtain the partition function of unrefined NCDT invariants
\bea
Z_{0,\NC}(-x_0,x_1,x_2,x_3)\hspace*{10cm}  \nn\\
=\Exp\Bigg(\sum_{n\ge1}nx^{n\de}\bigg(
4+\sum_{\ov{i\ne j}{i,j\ne0}}(x_ix_j+(x_i x_j)\inv)
-x_1x_2x_3-(x_1x_2x_3)\inv-\sum_{i\ne0}(x_i+x_i\inv)
\bigg)\Bigg) \nn \\
=M(1,x^\de)^4\frac
{\tl M(x_1x_2,x^\de)\tl M(x_1x_3,x^\de)\tl M(x_2x_3,x^\de)}
{\tl M(x_1x_2x_3,x^\de)\tl M(x_1,x^\de)
\tl M(x_2,x^\de)\tl M(x_3,x^\de)},
\eea
where we define
\begin{gather}
M(q,x)=\prod_{n\ge1}(1-qx^n)^{-n}
=\Exp\rbr{\sum_{n\ge1}nqx^n},\\
\tl M(q,x)=M(q,x)M(q\inv,x)
=\Exp\rbr{\sum_{n\ge1}n(q+q\inv)x^n}.
\end{gather}
This formula was proved in
\cite{young_generating} using the molten crystal interpretation.

\begin{remark}
Throughout this section, we have considered quivers with superpotential coming from a brane tiling.
Donaldson-Thomas invariants for deformations of the standard potential for $\IC^3$, $\IC^2/\IZ_N\times \IC$ and the conifold have been computed in
\cite{Cazzaniga:2015fwa} and references therein, and exhibit jumps in complex codimension 2 as
the deformation parameter is varied.
\end{remark}

\vspace*{1cm}

\section{Attractor indices for local surfaces \label{sec_surface}}
In this section we study attractor invariants of some local surfaces (line bundles over smooth projective surfaces). As explained in \S\ref{sec:tilingcrystal}, they arise as crepant resolutions of affine toric CY3 varieties associated to brane tilings.
In contrast to the previous section, these crepant resolutions are not small, so there are exceptional divisors (corresponding to internal points of the toric diagram), hence two-dimensional compact subvarieties.
On the algebraic side the problem becomes significantly more difficult since the quiver is no
longer symmetric and therefore the quantum affine plane is not commutative.
Instead of the DT invariants $\Om(d,y)$ considered in the symmetric case (independent of a stability parameter), in the non-symmetric case we will study the attractor DT invariants $\Om_*(d,y)$.
It was conjectured in \cite{beaujard_vafa} that they have a particularly nice behavior, namely they vanish unless $d=e_i$ or $d$ is contained in the kernel of the skew-symmetric form $\ang{-,-}$.
In this section we will compute attractor DT invariants $\Om_*(d,y)$ explicitly for small $d$.
They turn out to have a particularly simple form, which suggests 
a natural conjecture for the value of $\Om_*(d,y)$ for arbitrary dimension vectors. 
As explained in \S\ref{sec_wc}, having a general formula for attractor DT invariants, we can compute 
 DT invariants (both framed and unframed) for any stability parameter. In particular, we shall
 compute the framed DT invariants in the non-commutative chamber, and find agreement with
 the counting of molten crystals in the unrefined limit.

\subsection{Double dimensional reduction}
\label{double reduction}
Let $(Q,W)$ be a quiver with a potential and let  $I\sbs Q_1$ be a cut.
Then the generating function of unframed refined invariants is given by 
\eqref{stacky inv with a cut} (the first dimensional reduction)
\begin{equation}
\hOm(x)
=\sum_{d\in \IN^{Q_0}} (-y)^{\hi_Q(d,d)+2\ga_I(d)}\frac{P(R(J_I,d))}{P(G_d)}x^d,\qquad
\ga_I(d)=\sum_{(a:i\to j)\in I}d_id_j,
\end{equation}
where
\begin{equation}
\label{J_I}
J_I=J_I(Q,W)=\bC Q'/(\dd W/\dd a\col a\in I),\qquad Q'=(Q_0,Q_1\ms I).
\end{equation}


Assume that there is another cut $I'\sbs Q_1$ disjoint from $I$.
Define $Q''=(Q_0,Q_1\ms(I\cup I'))$ and consider the forgetful map
\begin{equation}
\label{proj}
\pi:R(J_I,d)\to R(Q'',d)
\end{equation}
having linear fibers.
Given a $Q''$-representation $M$, let $\vi(M)$ denote the dimension of the fiber $\pi\inv(M)$.
Then $\vi(M)$ is quadratic, meaning that
there exist values $\vi(M,N)$ such that $\vi(M)=\vi(M,M)$ and $\vi(\bop_i M_i,\bop_j N_j)=\sum_{i,j}\vi(M_i,N_j)$.

\medskip

Let $\cI$ be the set parameterizing all indecomposable $Q''$-representations.
Then every $Q''$-representation can be written in the form $M=\bop_{X\in \cI}X^{\oplus m_X}$
for some map $m:\cI\to\bN$ with finite support.
Therefore the above generating function can be written in the form (the second dimensional reduction)
\bea
\label{part function}
\hOm(x)
&=& \sum_{\ov{M\in\Rep Q''}{d=\udim M}}
(-q^\oh)^{\hi_Q(d,d)+2\ga_I(d)}\frac{q^{\vi(M)}}{[\Aut M]}x^{d} \nn\\
&=&\sum_{m:\cI\to\bN}
\frac{(-q^\oh)^{-\sum_{M,N\in\cI}m_M m_N\si(M,N)}}
{\prod_{M\in\cI}(q\inv)_{m_M}}
x^{\sum_{M\in\cI} m_M\udim M}
\eea
where $(q)_n$ was defined in \eqref{PGLn}, and  
 the `interaction form' $\si:\cI\xx\cI\to\bZ$ is
given by
\begin{equation}\label{sigma}
\si(M,N)=2h(M,N)-2\vi(M,N)-\rho(M,N),
\end{equation}
\begin{equation}
h(M,N)=\dim\Hom(M,N),\qquad
\rho(d,e)=\hi_Q(d,e)+2\sum_{(a:i\to j)\in I}d_ie_j,
\end{equation}
where $\rho(M,N)=\rho(\udim M,\udim N)$.
In the following examples, we shall choose the disjoint cuts $I,I'$ such that 
the remaining quiver $Q''$ is  simple enough that we are able to classify all its indecomposable representations. In that case we can then apply \eqref{part function} to compute the generating function $\hOm(x)$ and read off the DT invariants for any stability parameter using the methods explained in \S\ref{sec_wc}.

\subsection{General action of \tpdf{$\IZ_N$ on $\bC^3$}{ZN on C3}}

\label{sec:general action}
Consider a finite subgroup $\bZ_N\sbs\SL_3(\bC)$.
Choosing a basis of $\bC^3$ we can represent the action of $\bZ_N$ on $\bC^3$ as
\begin{equation}
1\mto\diag(\om,\om^k,\om^{-k-1}),
\end{equation}
where $\om$ is a primitive $N$-th root of $1$ and $0\le k< N/2$. The toric diagram 
is the convex hull of the vectors (equivalent to Figure 4 in \cite{Hanany:2005ve})
\be
\hat v_1 = \begin{pmatrix} 1\\0 \end{pmatrix}, \quad 
\hat v_2 = \begin{pmatrix} 0\\1 \end{pmatrix}, \quad 
\hat v_3 = \begin{pmatrix} -k\\k+1-N \end{pmatrix}
\ee
The corresponding McKay quiver  (\cf Example \ref{ex:bt of quotient}) has vertices $Q_0=\bZ_N$,  arrows
\begin{equation}
Q_1:\quad a_i:i\to i+1,\qquad b_i:i\to i+k,\qquad c_i:i\to i-k-1,
\qquad i\in\bZ_N.
\end{equation}
and potential 
\begin{equation}
\label{potential for Z_N}
W=\sum_{i\in \IZ_N} c_{i+k+1}(b_{i+1}a_i-a_{i+k}b_i).
\end{equation}
We choose two cuts
\begin{equation}
I=\sets{c_i}{i\in \bZ_N},\qquad
I'=\sets{b_i}{i\in \bZ_N}.
\end{equation}
such that the quiver $Q''=Q\ms(I\cup I')$ is the cyclic quiver $C_N$ (having vertices $i\in\bZ_N$ and arrows $a_i:i\to i+1$ for $i\in\bZ_N$). As recalled below, we can easily parametrize the
set $\cI$ of indecomposable representations of $C_N$.
For any representation $M$ of this quiver, the arrows $b_i\in I'$ correspond to a morphism $M\to \Si^k M$, where the functor $\Si$ is defined by 
\begin{equation}
\Si:\Rep C_N\to\Rep C_N,\qquad
(\Si M)_i=M_{i+1}.
\end{equation}
This implies that the dimension of the fiber of \eqref{proj} over $M$ is equal to
$\vi(M)=h(M,\Si^k M)$, hence we have $\vi(M,M')=h(M,\Si^k M')$.
Using notation from \S\ref{double reduction},
we obtain
\begin{equation}
\label{eq:interaction form2}
\si(M,M')=2h(M,M')-2h(M,\Si^k M')-\rho(M,M'),
\end{equation}
\begin{equation}
\rho(d,e)=\hi_Q(d,e)
+2\sum_{(a:i\to j)\in I}d_ie_j
=\sum_i d_i(e_i-e_{i+1}-e_{i+k}+e_{i-k-1}).
\end{equation}
Now we have all necessary ingredients to compute the generating function $\hOm(x)$ of stacky unframed invariants of $(Q,W)$ using the formula
\eqref{part function}.

\medskip

Let us now describe the indecomposable representations of the quiver $Q''=C_N$ in more detail.
Dimension vectors $d$ of indecomposable representations are parametrized by the (positive) roots of type $\tl A_{N-1}$ (see Remark \ref{roots A_N}). The representations associated to $d$ depend on whether $d$ is a real or imaginary positive root:
\begin{itemize}
\item For any real root $d\in\De_+^\re\sbs\bN^{Q_0}$, there is just one indecomposable representation having dimension vector $d$.
More precisely, for $d=\al_{ij}+n\de$ with $i,j\in\bZ_N$ and $n\ge0$
(recall from Remark \ref{roots A_N} that $\al_{ij}=\al_i+\al_{i+1}+\dots+\al_j$ and $j\neq i-1$ for real roots), the corresponding indecomposable representation $X_{i,j,n}$ has a basis consisting of vectors $e_i,e_{i+1},\dots,e_{i+\ell+nN}$, where $0\le \ell<N$ is such that $\ell\equiv j-i\pmod N$.
We place $e_k$ at the vertex $k\pmod N$ of $C_N$. 
The arrows of $C_N$ send $e_k$ to $e_{k+1}$ for $i\le k<i+\ell+nN$ and send $e_{i+\ell+nN}$ to zero.
Note that all of these representations are nilpotent.
\item For any imaginary root $d=n\de$, there are $N$  nilpotent representations
of the form $X_{i,i-1,n-1}$ with $i\in \IZ_N$, as well as one parameter families of 
representations $X_{n,\la}$  with $\la\in\bC^*$  (we call them invertible representations)
of the following form:  $X_{n,\la}$ has a vector space $\bC^n$ at every vertex, identity matrices for 
all arrows $a_i:i\to i+1$ except for one arrow $a_j$ with Jordan block $J_{n,\la}$ (different choices
of $j\in \IZ_N$ lead to isomorphic representations).
\end{itemize}

\begin{remark}
The above indecomposable nilpotent representations can be alternatively parameterized by pairs $(i,j)$, where $i\le j$ are integers and $0\le i<N$.
Given representations $M,M'$ corresponding to pairs $(i,j)$ and $(k,\ell)$ respectively, the dimension of the vector space $\Hom(M,M')$ is equal to the number of integers $k\le s\le \ell$ such that $s\equiv i\pmod N$ and $j-i\ge\ell-s$.
\end{remark}

We note that the interaction form \eqref{eq:interaction form2} is zero if one of the representations $M,M'$ is invertible.
This implies that we can decompose
\begin{equation}
\hOm(x)=\hOm^\bi(x)\cdot\hOm^\bn(x),
\end{equation}
where $\hOm^\bi(x)$ and $\hOm^\bn(x)$
are defined as in \eqref{part function} with the sums running over (indecomposable) invertible and nilpotent representations of $C_N$ respectively.
We can compute the series $\hOm^\bi(x)$ explicitly as follows.
The corresponding series for one eigenvalue $\la\in\bC^*$ is given by
\begin{equation}
\hOm^\bi_0(x)
=\prod_{n\ge1}\rbr{\sum_{m\ge0}\frac{x^{mn\de}}{(q\inv)_m}}
=\Exp\rbr{\sum_{n\ge1}\frac{qx^{n\de}}{q-1}}.
\end{equation}
Therefore the generating function for all invertible representations is given by
\begin{equation}
\hOm^\bi(x)
=\hOm^\bi_0(x)^{q-1}
=\Exp\rbr{q\sum_{n\ge1}x^{n\de}}.
\end{equation}

The generating function $\hOm^\bn(x)$ is significantly more complicated and it is unclear if one can find a closed formula for it.
Yet we have all the necessary ingredients to determine it for small dimension vectors using a computer.
Then we can find the corresponding attractor invariants $\Om_*^\bn(d,y)$ for the series $\hOm^\bn(x)$ so that attractor invariants for the series $\hOm(x)$ are given by
\begin{equation}
\Om_*(d,y)=\Om_*^\bn(d,y)+\Om_*^\bi(d,y)
=\Om_*^\bn(d,y)+
\begin{cases}
y^2(y\inv-y)&d=n\de,\ n\ge1,\\
0&\text{otherwise.}
\end{cases}
\end{equation}

\label{forgetful}
Before proceeding to discuss the simplest example $\cX=\IC^3/\IZ_3$,
we introduce the following construction which will be useful later.
Consider the cyclic quiver $C_N$ and a vertex $i\in \bZ_N$.
Then we define 
a functor that forgets the vertex~$i$
\begin{equation}
F_{i}:\Rep C_N\to \Rep C_{N-1},\qquad M\mto (M_0,\dots,M_{i-1},M_{i+1},\dots, M_{N-1}),
\end{equation}
where the linear maps between vector spaces 
are the same as before except for $M_{i-1}\to M_{i+1}$ which is the composition $M_{i-1}\to M_i\to M_{i+1}$.
If $i=0$, then we start with $M_1$.

More generally, let $I=\set{i_1,\dots,i_m}$, where $0\le i_1<\dots<i_m<N$.
Then the functor
\begin{equation}
F_{I}=F_{i_1}\circ\dots\circ F_{i_m}:\Rep C_N\to \Rep C_{N-m}
\end{equation}
forgets vertices from $I$.

\subsection{Invariants of \tpdf{$\IP^2$}{P2}}
\label{secP2}

Consider the action of $\bZ_3$ on $\bC^3$ given by
\begin{equation}
1\mto\diag(\om,\om,\om)
\end{equation}
where $\om=e^{2\pi \bi/3}$.
The corresponding McKay quiver with potential $(Q,W)$ was described in the previous section,
and was studied in detail in the physics literature \cite{Douglas:2000qw}. It is shown along with
the brane tiling in Figure \ref{quivP2}.

\begin{figure}[ht]
\begin{tikzcd}[sep=2cm,column sep=1.3cm]
&1\drar[->>>]\\
0\ar[ur,->>>]&&2\ar[ll,->>>]
\end{tikzcd}
\hspace*{2cm}
\raisebox{-2cm}{\includegraphics[height=5cm]{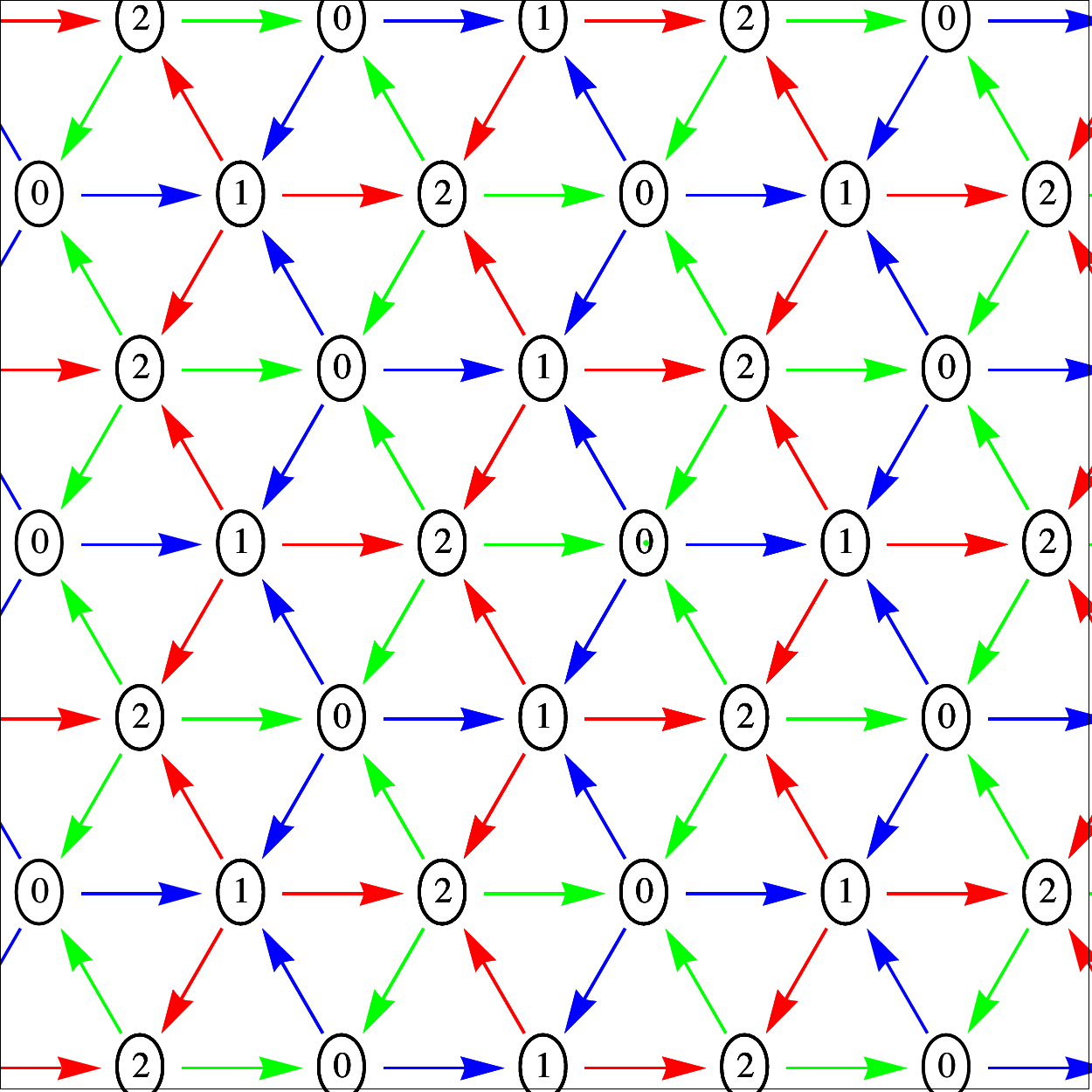}}
\caption{Quiver and tiling for $\IC^3/\IZ_3$}
\label{quivP2}
\end{figure}

Note that the quotient $\cX=\bC^3/\bZ_3$ has a crepant resolution $\widetilde\cX=K_{\bP^2}=\cO(-3)$, the canonical bundle of $\bP^2$ (see the toric diagram in Figure \ref{extoric}).
On the other hand, the Jacobian algebra $J=J(Q,W)$ is a non-commutative crepant resolution of $\bC^3/\bZ_3$.
Let us choose a cut $I=\set{a_2,b_2,c_2:2\to0}$.
Then the algebra $J_I=J_I(Q,W)$ \eqref{J_I} is given by the quiver
\begin{ctikzcd}[sep=2cm]
0\rar[->>>,"a_0\ b_0\ c_0"]&
1\rar[->>>,"a_1\ b_1\ c_1"]&2
\end{ctikzcd}
subject to the relations 
\begin{equation}
a_1b_0=b_1a_0,\qquad
c_1b_0=b_1c_0,\qquad
a_1c_0=c_1a_0.
\end{equation}
Note that it is isomorphic to the algebra of endomorphisms $\End(\cO\oplus\cO(1)\oplus\cO(2))$ corresponding to the Beilinson exceptional sequence $\cO,\cO(1),\cO(2)$ on $\bP^2$.
Morally, under these identifications, the canonical embedding $J_I(Q,W)\emb J(Q,W)$ corresponds to the projection $K_{\bP^2}\to\bP^2$ and the forgetful map $J(Q,W)\to J_I(Q,W)$ corresponds to the zero section $\bP^2\emb K_{\bP^2}$.
We have a commutative diagram of derived categories
\begin{ctikzcd}
D^b(\coh\bP^2)\rar["\sim"]\dar[hook]&D^b(\mmod J_I)\dar[hook]\\
D^b_c(\coh K_{\bP^2})\rar["\sim"]& D^b(\mmod J)
\end{ctikzcd}
where $D^b_c(\coh K_{\bP^2})$ denotes the bounded derived category of coherent sheaves with compact support on $K_{\bP^2}$.
Therefore, counting objects on $\bP^2$ corresponds to counting objects on the CY3-fold $K_{\bP^2}$ (supported on the zero section) or counting objects in $D^b(\mmod J)$.

\medskip

As was discussed in the previous section, we have an explicit formula for the generating series $\hOm^\bn(x)$ (and $\hOm(x)$) that can be determined in small degrees using a computer.
Then we can apply the Joyce-Reineke formula (see Theorem \ref{JR}) to find the attractor invariants $\Om^\bn_*(d,y)$ and $\Om_*(d,y)$.
The following conjecture was obtained using this approach and was verified up to degree $(4,4,4)$.

\begin{conjecture}
We have
\begin{equation}
\Om_*(e_i,y)=1,\qquad 
\Om_*(n\de,y)=-y\inv(y^4+y^2+1),\qquad n\ge1.
\end{equation}
All other attractor invariants vanish.
\end{conjecture}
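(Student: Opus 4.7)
The plan is to split the conjecture into the three cases $d=e_i$, $d=n\de$ with $n\ge1$, and all remaining $d$, the first two of which follow from results already in the paper, leaving the vanishing assertion in the third case as the main content.

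For $d=e_i$, the representation variety $R(J,e_i)$ reduces to a single point (no arrows can act) stabilised by $G_{e_i}=\Gm$, and the Euler form satisfies $\hi_Q(e_i,e_i)=1$ since there are no loops. Direct application of \eqref{defhOm} gives $\hOm(e_i,y)=-y/(y^2-1)$, and \eqref{hOmfromOm} then yields $\Om(e_i,y)=1$; since the simple $S_i$ is the unique representation of dimension vector $e_i$ and hence is semistable for every $\te$, this equals $\Om_*(e_i,y)$. For $d=n\de$ with $n\ge1$, the vector $\de=(1,1,1)$ lies in $\Ker\ang{-,-}$ because the McKay quiver has three outgoing and three incoming arrows at every vertex, so by the wall-crossing formula $\Om_\te(n\de,y)$ is independent of generic $\te$ and coincides with the attractor invariant. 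Remark \ref{remarkBBS} identifies it with $(-y)^{-3}P(K_{\IP^2};y)$, and Lemma \ref{lm:Ptoric} applied with internal points $i=1$ and boundary points $b=3$ gives $P(K_{\IP^2};y)=y^2(y^4+y^2+1)$, producing the stated value $-y\inv(y^4+y^2+1)$.

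To prove the vanishing $\Om_*(d,y)=0$ for all other $d$, I would proceed in two stages. First, I would make the double dimensional reduction of \S\ref{sec:general action} fully explicit with cuts $I=\{c_0,c_1,c_2\}$ and $I'=\{b_0,b_1,b_2\}$, so that the residual quiver $Q''$ is the cyclic quiver $C_3$, whose indecomposables are the interval modules $X_{i,j,n}$ together with one-parameter families of invertible representations at each $n\de$. Formula \eqref{part function} then expresses $\hOm(x)$ as a sum over multiplicity functions on $\cI$, and the factorisation $\hOm=\hOm^\bi\cdot\hOm^\bn$ with $\hOm^\bi=\Exp(q\sum_{n\ge1} x^{n\de})$ absorbs the contribution of invertible representations and reduces the problem to the nilpotent series $\hOm^\bn$. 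Second, for each fixed $d$ of interest I would apply Theorem \ref{JR} with $\Z$ the trivial stability and $\Z'$ adapted to an attractor weight $\te_d$; this rewrites $\Om_*(d,y)$ as a polynomial in the coefficients of $\hOm^\bn$ indexed by ordered partitions of $d$, and the task becomes to show this polynomial vanishes identically.

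The main obstacle is closing the infinite family of vanishing identities, since an elementary combinatorial cancellation appears out of reach once $\de\cdot d$ grows. My preferred strategy is therefore to invoke scattering diagrams: by the theorem following \eqref{AFTform}, the family $(\Om_*(d,y))_d$ is precisely the initial data of the wall-crossing structure encoded by $\hOm$, and the scattering diagram for local $\IP^2$ constructed by Bousseau \cite{Bousseau:2019ift} is known to have initial walls supported only on the dimension vectors $e_i$ and $n\de$. The non-trivial step, which I expect to be the hardest part of the argument, is to promote Bousseau's diagram from its unrefined formulation to the refined quantum scattering diagram living in $\hat\bA$, so that the identification of initial data takes place at the level of $y$-graded generating series rather than only at $y=1$; once this refinement is in place, the vanishing is immediate and the conjecture follows.
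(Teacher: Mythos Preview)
The statement you are attempting to prove is labelled a \emph{Conjecture} in the paper, and the paper does not give a proof: it states explicitly that the claim ``was verified up to degree $(4,4,4)$'' by the computational method of \S\ref{sec:general action}, and in the introduction it remarks that the vanishing for local $\IP^2$ ``is closely related to'' (not ``follows from'') Bousseau's scattering diagram.

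Your treatment of the two non-vanishing cases is correct and coincides with the paper's own justification given in the paragraph immediately after the conjecture: $\Om_*(e_i,y)=1$ is elementary, and $\Om_*(n\de,y)=(-y)^{-3}P(K_{\IP^2};y)$ follows from Remark~\ref{remarkBBS} together with Lemma~\ref{lm:Ptoric}.

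For the vanishing, you have correctly identified the gap and been honest about it. Your two-stage plan (double reduction to $C_3$, then Joyce--Reineke) is precisely the computational pipeline the paper uses to check low degrees; it does not yield a closed-form cancellation for general $d$, and the paper does not claim otherwise. Your fallback strategy---identify $(\Om_*(d,y))_d$ with the initial data of a scattering diagram and then appeal to Bousseau---is the natural route and is the one the paper gestures at, but the step you flag as ``the non-trivial step'' is exactly the obstruction: Bousseau's construction in \cite{Bousseau:2019ift} gives the unrefined diagram, and promoting it to the full refined quantum scattering diagram in $\hat\bA$ with the correct initial data is not available in the literature cited. So your proposal is a reasonable programme, but it is not a proof, and neither is anything in the paper; the conjecture remains open.
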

Note that the above vanishing of the attractor invariants was conjectured in \cite{Beaujard:2020sgs}.
The value of $\Om_*(n\de,y)$, corresponding to $n$ D0-branes on $K_{\bP_2}$, was left undetermined
in this reference, but according to Remark 
\ref{remarkBBS} we have 
$\Om_*(n\de,y)=(-y)^{-3}[\tcX]$.
For $\wtl\cX=K_{\bP^2}$,
we have $[\widetilde\cX]=q(q^2+q+1)$ and the corresponding value of $\Om_*(n\de,y)$ is consistent with our computations. It was conjectured in  \cite{Beaujard:2020sgs} that the single-centered invariants
(see Remark \ref{remarkOmS}) satisfy $\Om_*(n\de,y)=\OmS(n\de, y)-y-1/y$, so 
Eq.~\eqref{OmSgen}
appears to be verified in this case.


\medskip

Using the prescriptions in \S\ref{sec:framed-unframed2}, or equivalently the attractor tree formula
for the framed quiver, we can compute the framed refined DT invariants in the non-commutative
chamber,
\bea
Z_{0,\NCDT}(x)&=&
 1+x_0 + \left(y^2+1+1/y^2\right)  x_0 x_1  + \left(y^2+1+1/y^2\right) x_0 x_1^2  
- \left(y^3+y+1/y \right) x_0 x_1 x_2  \nn\\
&&+ \left( y^4+2y^2+3+2/y^2+1/y^4\right)  x_0 x_1^2 x_2 + x_0 x_1^3 
- \left(y^3+y+1/y \right)x_0^2  x_1 x_2 \nn\\
&&
\hspace*{-5mm} 
-\left( y^5+y^3+y+1/y+1/y^3+1/y^5 \right)  x_0 x_1^3 x_2 
+\left( y^4+2y^2+3+2/y^2+1/y^4\right)  x_0 x_1^2 x_2^2 \nn\\&& 
-\left( y^5 +2y^3+3y+2/y+1/y^3 \right) (x_0^2 x_1^2 x_2+ x_0^2 x_1^3 x_2 )
+ \left(y^2+1+1/y^2\right)  x_0 x_1^2 x_2^3 
\nn\\&&
+\left( y^8+y^6+2y^4+2y^2+3+2/y^2+2/y^4+1/y^6+1/y^8\right) x_0 x_1^3 x_2^2  \nn\\&& 
+\left(y^6+3y^4+6y^2+6+ 4/y^2+1/y^4 \right)
 x_0^2 x_1^2 x_2^2  + \cO(x_i^7)
\eea
In the unrefined limit $y\to 1$, this agrees with the counting of molten crystals in
\cite[(8.16)]{Cirafici:2010bd}, see also  \cite[Remark A.5]{Young:2008hn}.
It is worth noting that the lack of invariance under $y\to 1/y$
in most of the coefficients in this expansion follows directly from the non-invariance of $\Om_*(n\de,y)$.

\medskip

For future reference, we record the unframed stacky invariants for trivial stability condition for small dimensions (multiplied by the motive $[G_d]= \prod_{i\in Q_0} [GL(d_i)]$, see \eqref{PGLn}), which provide a convenient starting point for applying the Joyce-Reineke formula:
$$
\begin{array}{|c|l|}
\hline
d & \cA(d,y)\cdot  [G_d] \\ \hline
(1,1,1) &
{y}^{2} \left({y}^{6}+ {y}^{4}-1 \right) 
\\
(2,1,1) &
- {y}^{7}\left( {y}^{8}+{y}^{6} -1\right) 
\\
(2,2,1) &
- {y}^{11}\left( 3\,{y}^{10}+{y}^{8}-{y}^{6}-2\,{y}^{4}-{y}^{2}+1 \right) 
\\
(2,2,2) &
{y}^{8}\left( {y}^{20}+5\,{y}^{18}+8\,{y}^{16}-7\,{y}^{14}-13\,{y}^{12}-3\,{y}^{10}+8\,{y}^{8}+7\,{y}^{6}-4\,{y}^{4}-2\,{y}^{2}+1 \right) 
\\
\hline
\end{array}
$$
Stacky invariants for dimension vectors with support on 1 or 2 vertices are easily computed
since the relations $\partial W/\partial a=0$ are trivial obeyed. 

\medskip

\subsection{Invariants of \tpdf{$\IF_0$}{F0}}
\label{secF0}
We now consider  the Hirzebruch surface $\IF_0=\bP^1\xx\bP^1$ and the corresponding CY3-fold $\widetilde\cX=K_{\IF_0}$.
It can be realized as a crepant resolution of a singular toric CY3 variety $\cX$ associated to the brane tiling
shown in Figure \ref{quivF0} (for the so-called phase II in the terminology of \cite{Beaujard:2020sgs}),
corresponding to a $\IZ_2$-orbifold of the conifold.
The quiver has vertices indexed by $i\in \bZ_4$, arrows $a^{(1)}_i,a^{(2)}_i:i\to i+1$ 
and potential
\begin{equation}
W=\sum_{i\ne j,k\ne l}
\sgn(i,j)\, \sgn(k,l)\,  a_3^{(l)}a_2^{(j)}a_1^{(k)}a_0^{(i)}.
\end{equation}
where $\sgn(1,2)=1, \sgn(2,1)=-1$.  We write $a_i=a^{(1)}_i$ and $b_i=a^{(2)}_i$.

\begin{figure}[ht]
\begin{tikzcd}[sep=2cm]
0\rar[->>]&1\dar[->>]\\
3\uar[->>]&2\lar[->>]
\end{tikzcd}
\hspace*{2cm}
\raisebox{-2cm}{\includegraphics[height=5cm]{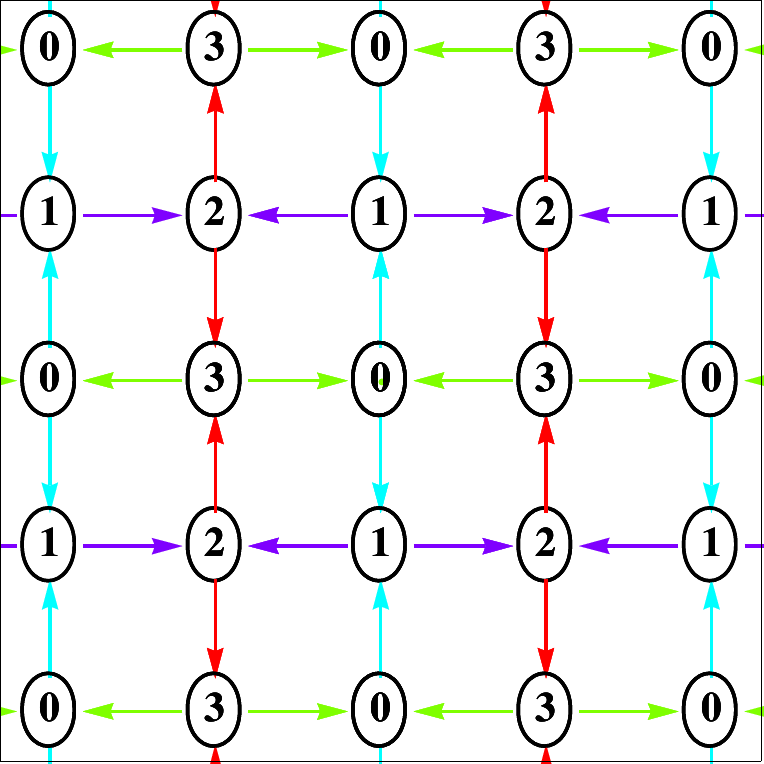}}
\caption{Quiver and tiling for $\IF_0$}
\label{quivF0}
\end{figure}

\medskip

Let us consider two disjoint cuts
\begin{equation}
I=\set{b_1,b_3},\qquad I'=\set{b_0,b_2}.
\end{equation}
As before, we define
\begin{equation}
J_I=\bk Q'/(\dd W/\dd a\col a\in I),\qquad
Q'=Q\ms I
\end{equation}
and $Q''=Q\ms(I\cup I')=C_4$, the cyclic quiver.
Then an element of a fiber of a forgetful map 
\begin{equation}
\pi:R(J_I,d)\to R(Q'',d)
\end{equation}
over a representation $M$ of $Q''=C_4$
corresponds to a choice of morphisms for the arrows
$b_0,b_2\in I'$.
The relations in the algebra $J_I$ amount to the
commutativity of the diagram 
\begin{ctikzcd}
M_0\rar[shift left,"a_1a_0"]\dar["b_0"']
&M_2\dar["b_2"]\lar[shift left,"a_3a_2"]\\
M_1\rar[shift left,"a_2a_1"]
&M_3\lar[shift left,"a_0a_3"]
\end{ctikzcd}
Therefore we can consider $(b_0,b_2)$ as a morphism
$F_{1,3}M\to F_{0,2}M$ between representations of the cyclic quiver $C_2$, where the forgetful functor $F_I$ was defined in \S\ref{forgetful}.
This means that the dimension of the fiber $\pi\inv(M)$ is equal to 
\begin{equation}
\vi(M)=h(F_{1,3}M,F_{0,2}M).
\end{equation}
As in \S\ref{double reduction}, let $\cI$ be the set of indecomposable representations of $Q''=C_4$ and let
$\si:\cI\xx\cI\to\bZ$ be the interaction form defined by \eqref{sigma}
\begin{equation}
\si(M,N)=2h(M,N)-2h(F_{1,3}M,F_{0,2}N)-\rho(M,N),
\end{equation}
\begin{equation}
\rho(d,d)=\hi_Q(d,d)
+2\sum_{(a:i\to j)\in I}d_id_j
=(d_0-d_1)^2+(d_2-d_3)^2.
\end{equation}
As was explained \S\ref{sec:general action} we have a simple parametrization of indecomposable representations of the cyclic quiver $C_4$.
Therefore we can compute the generating function $\hOm(x)$ of unframed stacky invariants of the Jacobian algebra $J(Q,W)$ \eqref{part function}
\begin{equation}
\hOm(x)
=\sum_{m:\cI\to\bN}
\frac{(-q^\oh)^{-\sum_{M,N\in\cI}m_M m_N\si(M,N)}}
{\prod_{M\in\cI}(q\inv)_{m_M}}
x^{\sum_{M\in\cI} m_M\udim M}.
\end{equation}

The following conjecture was verified in small degrees:
\begin{conjecture}
We have
\begin{equation}
\Om_*(e_i)=1,\qquad \Om_*(n\de)=-y\inv(y^2+1)^2,\qquad n\ge1,
\end{equation}
All other attractor invariants vanish.
\end{conjecture}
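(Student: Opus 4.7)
The plan is to follow the same strategy already implemented for $\bP^2$ in \S\ref{secP2}, carried out in small degrees. First I would fix the two disjoint cuts $I=\set{b_1,b_3}$ and $I'=\set{b_0,b_2}$ as in the excerpt, so that the remaining quiver is $Q''=Q\ms(I\cup I')=C_4$, and record the explicit form of the interaction form
\begin{equation}
\si(M,N)=2h(M,N)-2h(F_{1,3}M,F_{0,2}N)-\rho(M,N),\qquad
\rho(d,d)=(d_0-d_1)^2+(d_2-d_3)^2,
\end{equation}
which was already derived. The generating function of unframed stacky invariants is then the double sum \eqref{part function} over isomorphism classes of $C_4$-representations.

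Next I would split $\hOm(x)=\hOm^\bi(x)\cdot\hOm^\bn(x)$ as in \S\ref{sec:general action}: the invertible part contributes a single closed form factor $\Exp\rbr{q\sum_{n\ge1}x^{n\de}}$ (whose DT content is $\Om_*^\bi(n\de,y)=y^2(y\inv-y)$), and the nilpotent part is enumerated using the explicit parametrization of indecomposable nilpotent representations of $C_4$ by pairs $(i,\ell)$ with $i\in\bZ_4$ and $\ell\ge0$. For each dimension vector $d\in\bN^{Q_0}$ with $\de\cdot d\le n_0$ (some cutoff), one enumerates all finite multisets of indecomposables with total dimension $d$, reads off the quadratic exponent $\sum m_M m_N\si(M,N)$, and divides by $\prod(q\inv)_{m_M}$. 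This is a finite, purely combinatorial computation that can be delegated to the packages of Remark \ref{mathematica}.

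Having $\hOm(d,y)$ for all $d$ up to the chosen cutoff, I would apply the Joyce–Reineke formula (Theorem \ref{JR}) with source stability $\Z=i\de$ (trivial) and target stability $\Z'=-\te_d+i\de$, where $\te_d$ is a generic perturbation of $\ang{-,d}$ inside $d^\perp$, to produce $\hOm_{\te_d}(d,y)$, and then extract $\Omstar(d,y)$ via the plethystic logarithm as in \eqref{hOmfromOm}. One then checks, for each $d$ in the range of computation, that $\Omstar(d,y)=0$ unless $d=e_i$ or $d=n\de$, and that in the latter case $\Omstar(n\de,y)=-y\inv(y^2+1)^2$. As a consistency check independent of the computation, this value of $\Omstar(n\de,y)$ is forced by Remark \ref{remarkBBS}: the toric diagram of $K_{\IF_0}$ has $i=1$ internal point and $b=4$ boundary points, so Lemma \ref{lm:Ptoric} gives $[\wtl\cX]=q(q+1)^2=y^2(y^2+1)^2$, and $(-y)^{-3}[\wtl\cX]=-y\inv(y^2+1)^2$.

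The main obstacle is not conceptual but algorithmic: the combinatorial explosion both of $C_4$-representation data (two families of indecomposables — real-root strings $X_{i,j,n}$ and nilpotent imaginary-root representations — each of which grows with $\de\cdot d$) and of ordered partitions appearing in the Joyce–Reineke sum limits the practical reach of the verification. A genuine proof of the conjecture for all $d$ would require either a structural reason for the vanishing of $\Omstar(d,y)$ outside $\set{e_i}\cup\bN\de$ (e.g.\ via a scattering-diagram argument in the spirit of \cite{Bousseau:2019ift}, which was used for the closely related case $K_{\bP^2}$) or a direct dimension-theoretic argument on the attractor moduli spaces $M^{\te_d}(J,d)$ akin to the one sketched in \cite{Beaujard:2020sgs}; both lie beyond the scope of the present numerical verification.
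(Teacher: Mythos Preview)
Your proposal is correct and follows essentially the same approach as the paper: the paper does not prove the conjecture but verifies it in small degrees by the very procedure you outline (double dimensional reduction with the cuts $I=\set{b_1,b_3}$, $I'=\set{b_0,b_2}$ to the cyclic quiver $C_4$, computation of $\hOm(x)$ via \eqref{part function} with the interaction form you record, followed by the Joyce--Reineke formula to reach the attractor point). Your consistency check via Remark \ref{remarkBBS} and Lemma \ref{lm:Ptoric}, and your frank acknowledgment that a structural proof for all $d$ lies beyond this numerical verification, are also in line with the paper's own presentation.
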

Note that this conjecture is compatible with 
$\Om_*(n\de)=(-y)^{-3}[\wtl\cX]$ \eqref{n-delta}
as $[\wtl\cX]=[K_{\IF_0}]=q(q+1)^2$. Moreover, it was conjectured in  \cite{Beaujard:2020sgs} that the single-centered invariants
(see Remark \ref{remarkOmS}) satisfy $\Om_*(n\de,y)=\OmS(n\de, y)-y-1/y$, so Eq.~\eqref{OmSgen}
is again verified. We note that DT invariants on $K_{\IF_0}$ were studied using exponential networks  in \cite{Banerjee:2020moh}. 

\medskip

Using the prescriptions in \S\ref{sec:framed-unframed2}, or equivalently the attractor tree formula
for the framed quiver, we can compute the framed, refined DT invariants in the non-commutative
chamber 
\bea
Z_{1,\NCDT}(x)&=&
1+x_0-(y+1/y) x_0 x_1+ x_0 x_1^2+ \left(y^2+2+1/y^2\right)  x_0 x_1 x_2
\nn\\&&  
 - \left(y^3+y+1/y+1/y^3\right) x_0  x_1^2   x_2
   -(y+1/y)  x_0 x_2^2 x_1-\left(y^3+2y+1/y\right)  x_0 x_1 x_2 x_3
     \nn\\&&
   -\left(y^3+2y+y^{-1}\right) x_0^2 x_1 x_2 x_3
    +  \left(y^4+y^2+2+1/y^2+1/y^4\right)  x_0 x_1^2  x_2^2 
     \nn\\&&
    +\left(y^2+2+1/y^2\right)  \left( x_0 x_1 x_2^2 x_3 +  x_0 x_1^2 x_2 x_3 \right) 
     -\left(y^3+y+1/y+1/y^3\right)  x_0 x_1^2 x_2^3
     \nn\\&&
     -(y+1/y) x_0   x_1 x_2^2 x_3^2 
     -\left(y^5+2y^3+4y+4/y+2/y^3+1/y^5\right) x_0 x_1^2 x_2^2 x_3 
     \nn\\&&
     +(y^2+2+1/y^2)   x_0^2 x_1 x_2^2 x_3 
     +  \left( y^4+3y^2+3+1/y^2\right) x_0^2 x_1^2 x_2 x_3 + \cO(x_i^7) 
\eea
Again, one may check that this agrees with the counting of molten crystals 
in the unrefined limit $y\to 1$.

\medskip

For future reference, we also record the unframed stacky invariants for trivial stability condition:
$$
\begin{array}{|c|l|}
\hline
d & \cA(d,y)\cdot [GL_d] \\ \hline
(1,1,1,1) & 
{y}^{2} \left({y}^{8}+ 2\,{y}^{6}-2\,{y}^{4}-{y}^{2}+1 \right)
\\
(1,1,1,2) &
- {y}^{7}\left(3\,{y}^{8}  -3\,{y}^{4}+1\right)  
\\
(1,1,2,2) & 
{y}^{10}\left({y}^{14}+{y}^{12}+3\,{y}^{10}-3\,{y}^{8}-5\,{y}^{6}+3\,{y}^{4}+2\,{y}^{2}-1 \right) 
\\
(1,2,2,2) &
-{y}^{9}\left(3\,{y}^{20}+4\,{y}^{18}-{y}^{16}-10\,{y}^{14}-{y}^{12}+6\,{y}^{10}+2\,{y}^{6}-{y}^{4}-2\,{y}^{2}+1 \right) 
\\
(2,2,2,2) & 
{y}^{10} \left({y}^{26}+7\,{y}^{24}+19\,{y}^{22}-15\,{y}^{20}-53\,{y}^{18}+3\,{y}^{16}+66\,{y}^{14}+10\,{y}^{12} \right. \\
& \left. -49\,{y}^{10}-3\,{y}^{8}+20\,{y}^{6}-3\,{y}^{4}-3\,{y}^{2}+1 \right) 
\\
\hline
\end{array}
$$
\medskip

\subsection{Invariants of \tpdf{$\bF_1=\dP_1$}{F1=dP1}}
\label{secdP1}
We now consider del Pezzo surface $\dP_1$, which is the blow-up of $\bP^2$ at one point.
Equivalently, it is the Hirzebruch surface $\bF_1=P(\cO_{\bP^1}\oplus\cO_{\bP^1}(1))$.
Its canonical bundle $K_{\dP_1}$ is a crepant resolution of a singular toric CY3 variety associated to 
the brane tiling shown in Figure \ref{quivF1}. The quiver $Q$ has 4 vertices, 10 arrows
and potential \cite{Feng:2001xr,Aspinwall:2005ur}
\be
W=
 \Phi_{41}^{1} \Phi_{34}^{2} \Phi_{23}^{2} \Phi_{12}
+\Phi_{41}^{2}\Phi_{34}^{1}   \Phi_{13} 
+\Phi_{42} \Phi_{34}^{3}  \Phi_{23}^{1} 
- \Phi_{41}^{2} \Phi_{34}^{2} \Phi_{23}^{1}  \Phi_{12}
-  \Phi_{41}^{1}  \Phi_{34}^{3} \Phi_{13}  
- \Phi_{42} \Phi_{34}^{1}   \Phi_{23}^{2}
\ee

\begin{figure}[ht]
\begin{tikzcd}[sep=2cm]
1\rar\drar&2\dar[->>]\\
4\uar[->>]\urar&3\lar[->>>]
\end{tikzcd}
\hspace*{2cm}
\raisebox{-2cm}{\includegraphics[height=5cm]{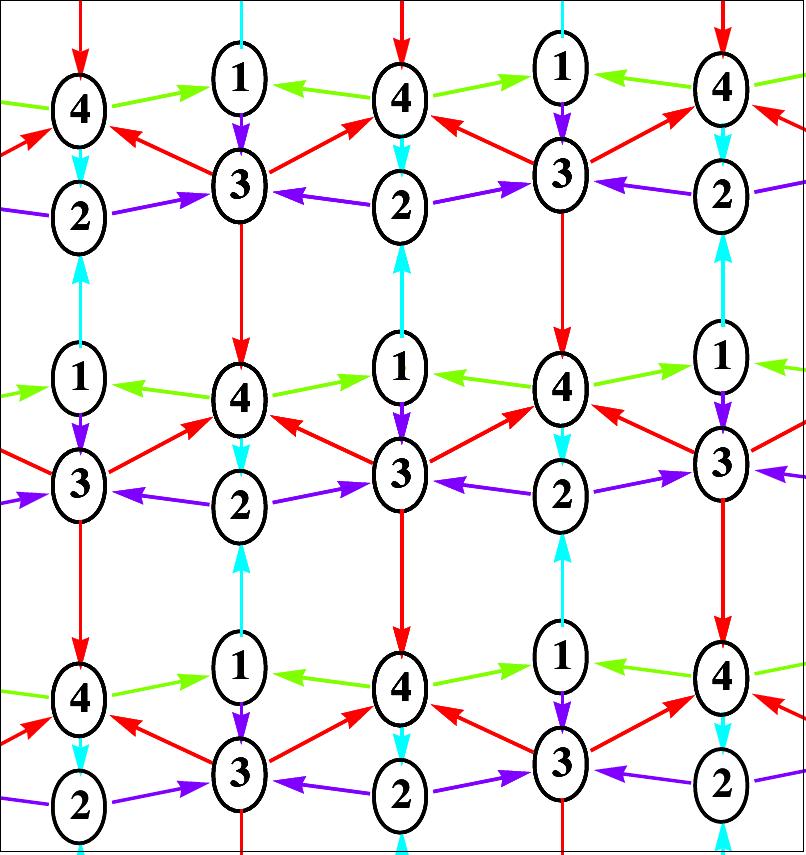}}
\caption{Quiver and tiling for $\IF_1$}
\label{quivF1}
\end{figure}

We choose the cuts
\begin{equation}
I=\set{\Phi_{23}^1,\Phi_{34}^1,\Phi_{41}^1},\qquad
I'=\set{\Phi_{13},\Phi_{34}^2,\Phi_{42}}.
\end{equation}
Then 
\begin{equation}
Q''=Q\ms(I\cup I')=\set{\Phi_{12},\Phi_{23}^2,\Phi_{34}^3,\Phi_{41}^2}
=\set{a_1,a_2,a_3,a_4}
\end{equation}
is a cyclic quiver with $4$ vertices.
An element of the fiber of
\begin{equation}
\pi:R(J_I,d)\to R(Q'',d)
\end{equation}
over a $Q''$-representation $M$ is encoded by the values of the arrows in $I''$ and induces a commutative diagram
\begin{ctikzcd}
1\dar\rar["a_2a_1"]&3\dar\rar["a_3"]
&4\dar\ar[ll,bend right,"a_4"']\\
3\rar["a_3"]&4\rar["a_1a_4"]&2\ar[ll,bend left,"a_2"]
\end{ctikzcd}

The arrows at the top produce a representation $F_2 M$ of $C_3$ (we forget the vertex $2$).
The arrows at the bottom produce a representation $\Si F_1M$ (we forget the vertex $1$ and perform the cyclic shift).
The above analysis implies that the fiber of $\pi$ can be identified with
\begin{equation}
\pi\inv(M)\iso\Hom_{C_3}(F_{2}M,\Si F_{1}M).
\end{equation}
Using notation from \S\ref{double reduction}, we obtain
\begin{equation}
\si(M,N)=2h(M,N)-2h(F_{2}M,\Si F_{1}N)-\rho(M,N),
\end{equation}
\begin{equation}
\rho(d,e)=\hi_Q(d,e)+2\sum_{(a:i\to j)\in I}d_ie_j
=\sum_i d_ie_i-d_1(e_2+e_3)-d_3e_4-d_4e_2.
\end{equation}
As was discussed in \S\ref{double reduction}, this gives us all ingredients to compute the generating function $\cA(x)$ of unframed stacky invariants of the Jacobian algebra $J(Q,W)$.
Then we apply the Joyce-Reineke formula (see Theorem \ref{JR}) to find the attractor invariants $\Om_*(d,y)$.
The following conjecture was verified in small degrees.

\begin{conjecture}
We have
\begin{equation}
\Om_*(e_i)=1,\qquad
\Om_*(n\de)=-y\inv(y^4+2y^2+1),\qquad n\ge1.
\end{equation}
All other attractor invariants vanish.
\end{conjecture}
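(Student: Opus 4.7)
The strategy is to carry through, for $\dP_1$, the machinery already assembled in this section and successfully applied to $\IP^2$ and $\IF_0$. With the disjoint cuts $I=\{\Phi_{23}^1,\Phi_{34}^1,\Phi_{41}^1\}$ and $I'=\{\Phi_{13},\Phi_{34}^2,\Phi_{42}\}$ fixed just before the conjecture, the residual quiver $Q''=C_4$ is the cyclic quiver on four vertices, whose indecomposable representations are parametrised by the positive roots of $\tilde A_3$. Substituting into the master formula \eqref{part function} with the interaction form
\[
\si(M,N)=2h(M,N)-2h(F_{2}M,\Si F_{1}N)-\rho(M,N)
\]
produces an explicit (if unwieldy) expression for $\hOm(x)$ at trivial stability. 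Once this series is known, the Joyce-Reineke formula (Theorem~\ref{JR}) converts it into attractor invariants $\Om_*(d,y)$ for every $d$.

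My first step would be to separate contributions from invertible and nilpotent indecomposables of $C_4$, as in the $\IP^2$ computation. Since the forgetful functors $F_1,F_2$ send invertible $C_4$-modules to invertible $C_3$-modules of the same underlying rank, a direct check of $\Hom$-dimensions should yield $\si(M,N)=0$ whenever at least one of $M,N$ is invertible, producing the factorisation $\hOm(x)=\hOm^{\mathbf i}(x)\cdot\hOm^{\mathbf n}(x)$ with $\hOm^{\mathbf i}(x)=\Exp\rbr{q\sum_{n\ge 1}x^{n\de}}$. The invertible piece contributes $y\inv(y^2-1)$ to $\Om_*(n\de,y)$, and combining with the contribution of the four purely nilpotent indecomposables of imaginary dimension $n\de$ should produce the total $-y\inv(y^4+2y^2+1)$, in agreement with the general prediction $\Om_*(n\de,y)=(-y)^{-3}P(K_{\dP_1})$ from Remark~\ref{remarkBBS} (using $i=1$, $b=4$ in Lemma~\ref{lm:Ptoric}). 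The statement $\Om_*(e_i,y)=1$ is then immediate because the quiver has no loops, so each simple $S_i$ is rigid and manifestly the unique $\te_{e_i}$-semistable object of its dimension.

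The principal obstacle is proving the vanishing $\Om_*(d,y)=0$ for $d\notin\{e_1,\dots,e_4\}\cup\IN\de$. My preferred route is to reinterpret Joyce-Reineke through the wall-crossing-structure framework of \S\ref{sec_wc}: view $\hOm\in G$ and show that the projection $p^{\te_d}(\hOm)$ onto $\fg_d$ vanishes for the relevant $d$. Concretely, one could try to exhibit an involution on the set of ordered decompositions $d=\al_1+\dots+\al_n$ appearing in Joyce-Reineke that flips adjacent pairs of equal $\te_d$-slope with opposite signs, reducing non-vanishing contributions to those supported on proportional dimension vectors. The hard part is that the asymmetry of $\dP_1$ breaks the $S_3$-symmetry which simplifies the $\IP^2$ case, so a purely combinatorial pairing will likely require detailed bookkeeping of the nilpotent indecomposable classes. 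A conceptually cleaner alternative would be to port the scattering-diagram construction of \cite{Bousseau:2019ift} from $\IP^2$ to $\dP_1$: this would identify the attractor invariants with the initial data of the canonical scattering diagram and simultaneously explain both the vanishing outside simple roots and multiples of $\de$ and the explicit values predicted in the conjecture. I expect this last step to be the genuine difficulty, as it requires new input beyond the techniques currently in the paper.
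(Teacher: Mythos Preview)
The statement you are addressing is a \emph{Conjecture}, not a theorem, and the paper does not claim to prove it. What the paper does is exactly the computational procedure you describe in your first two paragraphs: it uses the double dimensional reduction with the cuts $I,I'$ and the interaction form $\si(M,N)=2h(M,N)-2h(F_{2}M,\Si F_{1}N)-\rho(M,N)$ to evaluate $\hOm(x)$ in low degree, and then applies the Joyce--Reineke formula (Theorem~\ref{JR}) to extract $\Om_*(d,y)$. The authors observe the stated pattern and formulate it as a conjecture, noting explicitly that it ``was verified in small degrees.'' So your first two paragraphs are a correct summary of the paper's approach, but they constitute evidence, not a proof.

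Where your proposal goes beyond the paper is in the third and fourth paragraphs. The value $\Om_*(n\de,y)=-y^{-1}(y^4+2y^2+1)$ is indeed an unconditional consequence of Remark~\ref{remarkBBS} combined with Lemma~\ref{lm:Ptoric} (the paper makes the same observation right after the conjecture), and $\Om_*(e_i,y)=1$ is elementary as you say. The factorisation $\hOm=\hOm^{\bi}\cdot\hOm^{\bn}$ does hold here, but note that the verification is slightly more delicate than in the $\bC^3/\bZ_N$ case: with the asymmetric $\rho$ for $\dP_1$ one has $\si(M,N)\ne 0$ individually for $M$ invertible and $N$ nilpotent, and it is only the symmetrised combination $\si(M,N)+\si(N,M)$ that vanishes; this suffices for the factorisation of the partition sum but is worth stating explicitly.

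The vanishing of $\Om_*(d,y)$ for $d\notin\{e_i\}\cup\bN\de$ is the actual content of the conjecture, and neither the paper nor your proposal proves it. Your two suggested attacks --- a sign-reversing involution on Joyce--Reineke decompositions, or an extension of Bousseau's scattering-diagram argument from $\bP^2$ to $\dP_1$ --- are reasonable directions but are not carried out here and are not in the paper. You correctly flag this as the genuine difficulty; be aware that the paper makes no claim to have resolved it.
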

Note that this conjecture is compatible with 
$\Om_*(n\de)=(-y)^{-3}[K_{\dP_1}]$ 
\eqref{n-delta}
as $[K_{\dP_1}]=q(q+1)^2$.
Moreover it was conjectured in 
 \cite{Beaujard:2020sgs} that  $\Om_*(n\de,y)=\OmS(n\de, y)-y-1/y$, in line with Eq.~\eqref{OmSgen}.

\medskip

Using the prescriptions in \S\ref{sec:framed-unframed2}, or equivalently the attractor tree formula
for the framed quiver, we can compute the framed, refined DT invariants in the non-commutative
chamber 
\bea
Z_{1,\NCDT}&=&
1+x_1+ x_1 x_2+x_1 x_3 + \left(y^2+1+1/y^2\right) x_1 x_2 x_3+x_1 x_3 x_4
\nn\\&&
   -\left(y+1/y\right) x_1^2 x_3 x_4 +\left(y^2+1+1/y^2\right) x_1 x_2 x_3^2 
 -(y^3+2y+1/y) x_1 x_2 x_3 x_4 +\dots
   \nn\\
Z_{2,\NCDT}&=&
1+x_2-2 x_2 x_3+  x_2 x_3^2+\left(y^2+2+1/y^2\right) x_2 x_3 x_4
- \left(y+1/y\right) \left( x_3  x_4 x_2^2+ x_2 x_3 x_4^2 \right)\nn\\&&
   +\left(y^4+y^2+1+1/y^2+1/y^4\right) x_2 x_3^2 x_4 
   -(y^3+2y+1/y) x_1 x_2 x_3 x_4+\dots
   \nn\\
Z_{3,\NCDT}&=&
1+x_3+\left(y^2+1+1/y^2\right)  x_3 x_4+ \left(y^2+1+1/y^2\right) x_3 x_4^2
+\left(y^2+2+1/y^2\right) x_1 x_3 x_4\nn\\&& +x_2 x_3 x_4
+ x_3
   x_4^3+\left( y^4+2y^2+3+2/y^2+1/y^4\right)
   x_1 x_3 x_4^2-(y+1/y) x_2 x_3 x_4^2\nn\\&&
   +x_1^2 x_3 x_4-(y^3+2y+1/y)  x_1 x_2 x_3 x_4 +\dots
\nn\\
Z_{4,\NCDT}&=&
1+x_4+ x_2 x_4-(y+1/y) x_1 x_4+ x_1^2 x_4 + \left( y^2+2+1/y^2 \right) x_1 x_2 x_4\nn\\&&
+   \left(y^2+1+1/y^2\right) x_1^2 x_2 x_4 
   -\left(y+1/y\right) x_1 x_2^2 x_4 -(y^3+2y+1/y)x_1 x_2 x_3 x_4 +\dots 
\eea

For future reference, we also record the unframed stacky invariants for trivial stability condition:
$$
\begin{array}{|c|l|}
\hline
d & \cA(d,y)\cdot [G_d] \\ \hline
(1,0,1,1) & 
-{y}^{3} \left( -y^4 - y^2 + 1 \right)
\\
(0,1,1,1) & 
{y}^{2} \left( -y^4 - y^2 + 1 \right)
\\
(1,1,1,1) & 
{y}^{2} \left(y^8 + 2 y^6 - y^4 - 2 y^2 + 1
 \right)
\\
\hline
\end{array}
$$

\medskip

\subsection{Invariants of \tpdf{$\dP_2$}{dP2}}
\label{secdP2}
Consider del Pezzo surface $\dP_2$ which is the blow-up of $\bP^2$ at two points,
or equivalently the blow-up of $\IF_1$ at one point.
Its canonical bundle $K_{\dP_2}$ is a crepant resolution of a singular toric CY3 variety corresponding to the brane tiling shown in Figure \ref{quivdP2}  (in the so called phase II). The quiver $Q$ has 5 vertices, 11 arrows 
and potential 
\bea
W &=&
\Phi_{51}  \Phi_{35} \Phi_{23}^1 \Phi_{12}^2 
+\Phi_{41} \Phi_{34} \Phi_{23}^2   \Phi_{12}^1  
+\Phi_{52} \Phi_{45}  \Phi_{24} 
\nn\\&&
- \Phi_{41} \Phi_{24}   \Phi_{12}^2 
-  \Phi_{52}\Phi_{35}  \Phi_{23}^2 
-  \Phi_{51} \Phi_{45}  \Phi_{34} \Phi_{23}^1   \Phi_{12}^1 
\eea

\begin{figure}
\begin{tikzcd}[column sep=.3cm]
&&2\ar[rrd,->>]\ar[ddr]\\
1\ar[rru,->>]&&&&3\ar[ld],\ar[llld]\\
&5\ar[ul]\ar[uur]&&4\ar[ll]\ar[lllu]
\end{tikzcd}
\hspace*{2cm}
\raisebox{-2cm}{\includegraphics[height=5cm]{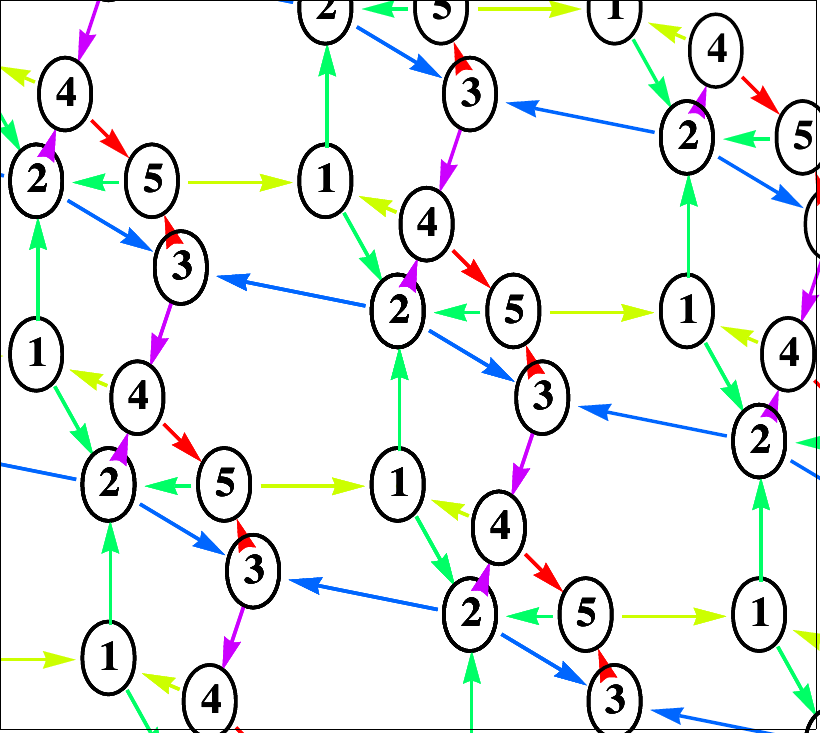}}
\caption{Quiver and tiling for $\dP_2$}
\label{quivdP2}
\end{figure}

We choose the disjoint cuts
\begin{equation}
I=\set{\Phi_{23}^1,\Phi_{41}, \Phi_{52}},\qquad
I'=\set{\Phi_{12}^1,\Phi_{24},\Phi_{35}}.
\end{equation}
Then 
\begin{equation}
Q''=Q\ms(I\cup I')=\set{\Phi_{12}^2,\Phi_{23}^1,\Phi_{34},\Phi_{45},
\Phi_{51}}
=\set{a_1,a_2,a_3,a_4,a_5}
\end{equation}
is a cyclic quiver with $5$ vertices.
The same analysis as before shows that an element of the fiber $\pi\inv(M)$, for $M\in\Rep Q''$, induces a commutative diagram
\begin{ctikzcd}
1\dar\rar["a_1"]&2\dar\rar["a_2"]
&3\dar\ar[ll,bend right,"a_5a_4a_3"']\\
2\rar["a_3a_2"]&4\rar["a_4"]&5\ar[ll,bend left,"a_1a_5"]
\end{ctikzcd}
The arrows at the top produce representation $F_{4,5}M\in\Rep C_3$.
The arrows at the bottom produce representation $F_{1,3}M\in\Rep C_3$.
The above analysis implies that the fiber of $\pi$ can be identified with
\begin{equation}
\pi\inv(M)\iso\Hom_{C_3}(F_{4,5}M,F_{1,3}M).
\end{equation}
Using notation from \S\ref{double reduction}, we obtain
\begin{equation}
\si(M,N)=2h(M,N)-2h(F_{4,5}M,F_{1,3}N)-\rho(M,N),
\end{equation}
As was discussed in \S\ref{double reduction}, this gives us all ingredients to compute the generating function $\cA(x)$ of unframed stacky invariants of the Jacobian algebra $J(Q,W)$.
Then we apply the Joyce-Reineke formula (see Theorem \ref{JR}) to find the attractor invariants $\Om_*(d,y)$.
The following conjecture was verified in small degrees.

\begin{conjecture}
We have
\begin{equation}
\Om_*(e_i)=1,\qquad
\Om_*(n\de)=-y\inv(y^4+3y^2+1),\qquad n\ge1.
\end{equation}
All other attractor invariants vanish.
\end{conjecture}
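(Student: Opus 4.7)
The plan is to verify this conjecture by direct computation in low degrees, paralleling the strategy carried out for $\IP^2$, $\IF_0$ and $\dP_1$ in \S\ref{secP2}--\S\ref{secdP1}. The double dimensional reduction is already set up in the preceding paragraphs: with cuts $I, I'$ as specified and $Q'' = Q \setminus (I \cup I') = C_5$, the fiber of $\pi\colon R(J_I, d) \to R(C_5, d)$ over $M$ is $\Hom_{C_3}(F_{4,5}M, F_{1,3}M)$, yielding the interaction form $\si(M,N) = 2h(M,N) - 2h(F_{4,5}M, F_{1,3}N) - \rho(M,N)$. Substituting into \eqref{part function} and enumerating multisets of indecomposable $C_5$-representations (classified via the positive roots of $\tl A_4$ as in \S\ref{sec:general action}), I would produce the truncated generating function $\hOm(x)$ modulo a chosen bound on $\de \cdot d$. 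The factorization $\hOm = \hOm^\bi \cdot \hOm^\bn$ separates the easily computed invertible contribution $\hOm^\bi = \Exp\rbr{q \sum_{n\ge 1} x^{n\de}}$ from the more intricate nilpotent part, which must be handled term by term.

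From $\hOm(x)$ I would extract the attractor stacky invariants $\hOm_{\te_d}(d, y)$ through the Joyce--Reineke formula (Theorem \ref{JR}), taking trivial stability on one side and a generic perturbation $\te_d$ of $\ang{-, d}$ on the other, and then pass to $\Om_*(d, y)$ via the plethystic factorization \eqref{hOmfromOm}. The value $\Om_*(e_i, y) = 1$ is immediate because the only representation of dimension $e_i$ is the simple $S_i$, which is automatically semistable in every chamber. For $d = n\de$, the predicted value admits an independent cross-check: by Remark \ref{remarkBBS} and Lemma \ref{lm:Ptoric}, $\Om_*(n\de, y) = (-y)^{-3} P(K_{\dP_2}; y)$, and the toric diagram of $K_{\dP_2}$ has $i = 1$ internal point and $b = 5$ boundary points, so $P(K_{\dP_2}; y) = y^2(y^4 + 3y^2 + 1)$ and hence $\Om_*(n\de, y) = -y\inv(y^4 + 3y^2 + 1)$, precisely the claimed formula. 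This cross-check effectively pins down what the nilpotent computation must contribute at $d = n\de$ once the invertible piece $y - y^3$ is accounted for.

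For all other dimension vectors $d$ below the chosen truncation, one reads off $\Om_*(d, y) = 0$ directly from the Joyce--Reineke output, confirming the conjecture in that range. The main obstacle is computational rather than structural: with five vertices and a rich family of nilpotent $C_5$-indecomposables, the sums in \eqref{part function} and the subsequent wall-crossing across many hyperplanes proliferate rapidly, so only finitely many $d$ can be checked in practice. A full proof valid for all $d$ would presumably require either a scattering-diagram argument in the spirit of \cite{Bousseau:2019ift} for $K_{\IP^2}$, or an a priori moduli-dimension estimate in the $\te_d$-chamber along the lines of \cite{Beaujard:2020sgs}; neither is currently available for $\dP_2$, which is why the statement is formulated as a conjecture rather than a theorem.
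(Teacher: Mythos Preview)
Your proposal is correct and follows essentially the same route as the paper: compute the stacky series $\hOm(x)$ via the double dimensional reduction to $C_5$ with the specified interaction form, apply the Joyce--Reineke formula to extract $\Om_*(d,y)$ in low degrees, and cross-check $\Om_*(n\de,y)$ against $(-y)^{-3}[K_{\dP_2}]$; the paper likewise states only that the conjecture ``was verified in small degrees.'' One small caveat: the factorization $\hOm=\hOm^\bi\cdot\hOm^\bn$ you invoke was established in \S\ref{sec:general action} for the $\bZ_N$ orbifold interaction form $\si(M,N)=2h(M,N)-2h(M,\Si^k N)-\rho(M,N)$, where the vanishing on invertible representations is immediate; for $\dP_2$ the interaction form involves the forgetful functors $F_{4,5},F_{1,3}$ instead, and the paper does not claim this factorization in that setting, so you should either verify separately that $\si$ still vanishes when one argument is invertible, or simply run \eqref{part function} over all indecomposables together as the paper implicitly does.
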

Note that this conjecture is compatible with 
$\Om_*(n\de)=(-y)^{-3}[K_{\dP_2}]$ 
\eqref{n-delta}
as 
$[K_{\dP_2}]=q[\dP_2]=q([\bP^2]+2q)=q(q^2+3q+1)$. Moreover 
 $\Om_*(n\de,y)=\OmS(n\de, y)-y-1/y$  \cite{Beaujard:2020sgs}.

\begin{figure}[ht]
\begin{center}
\begin{tikzpicture}[inner sep=2mm,scale=1]
\begin{scope}[shift={(-5,0)}]  
\draw[step=1cm,black!20,thin] (-1.4,-1.4) grid (1.4,2.4); 
\draw (0,1) -- (-1,2) -- (0,-1) -- (1,0) -- (0,1);
\filldraw [black]  (0,0) ellipse (0.1 and 0.1);
\filldraw [black]  (1,0) ellipse (0.1 and 0.1);
\filldraw [black]  (0,1) ellipse (0.1 and 0.1);
\filldraw [black]  (-1,2) ellipse (0.1 and 0.1);
\filldraw [black]  (0,-1) ellipse (0.1 and 0.1);
\end{scope}
\begin{scope}[shift={(0,0)}]  
\draw[step=1cm,black!20,thin] (-1.4,-1.4) grid (1.4,2.4); 
\draw (1, 0) -- (-1, 1) -- (-1, 0) -- (-1, -1) -- (0, -1) -- (1,0);
\filldraw [black]  (0,0) ellipse (0.1 and 0.1);
\filldraw [black]  (1,0) ellipse (0.1 and 0.1);
\filldraw [black]  (-1,1) ellipse (0.1 and 0.1);
\filldraw [black]  (-1,0) ellipse (0.1 and 0.1);
\filldraw [black]  (-1,-1) ellipse (0.1 and 0.1);
\filldraw [black]  (0,-1) ellipse (0.1 and 0.1);
\end{scope}
\begin{scope}[shift={(5,0)}]  
\draw[step=1cm,black!20,thin] (-1.4,-1.4) grid (1.4,2.4); 
\draw (-1,0) -- (-1,-1) -- (0,-1) -- (2,2) -- (-1,0);
\filldraw [black]  (0,0) ellipse (0.1 and 0.1);
\filldraw [black]  (-1,0) ellipse (0.1 and 0.1);
\filldraw [black]  (-1,-1) ellipse (0.1 and 0.1);
\filldraw [black]  (0,-1) ellipse (0.1 and 0.1);
\filldraw [black]  (1,1) ellipse (0.1 and 0.1);
\filldraw [black]  (2,2) ellipse (0.1 and 0.1);
\end{scope}
\end{tikzpicture}
\end{center}
\caption{Toric diagram of $\IF_2$ (left), $\PdP_2$ (center) and $Y^{3,2}$ (right) \label{toricF2}}
\end{figure}

For future reference, we also record the unframed stacky invariants for trivial stability condition:
$$
\begin{array}{|c|l|}
\hline
d & \cA(d,y)\cdot [G_d] \\ \hline
(1,1,0,1,0) & 
-{y}^{3} \left(2y^{2}  - 1 \right)
\\
(0,1,1,0,1) & 
-{y}^{3} \left(2y^{2}  - 1 \right)
\\
(0,1,0,1,1) & 
{y}^{2} \left(2y^{2}  - 1 \right)
\\
(1,1,1,0,1) & 
-{y}^{3} \left(y^{6} + 2 y^4 + 3 y^2 - 1 \right)
\\
(1,1,1,1,0) & 
-{y}^{3} \left(y^{6} + 2 y^4 + 3 y^2 - 1 \right)
\\
(1,1,1,1,1) & 
{y}^{2} \left(y^{10} + 3 y^8 - 2 y^6 - 4 y^4 + 4 y^2 - 1 \right)
\\
\hline
\end{array}
$$

\medskip

\begin{remark}
For $\dP_3$, none of the four brane tilings corresponding to models I through IV in \cite{Beaujard:2020sgs} admit a double cut reduction to an oriented cyclic quiver $C_6$. Nonetheless, they can
be reduced to a cyclic quiver with some flipped arrows, whose representation theory is also
under control but more  complicated.
\end{remark}

\subsection{Invariants of \tpdf{$\IF_2$}{F2}}
\label{secF2}
Consider the Hirzebruch surface
$\IF_2=P(\cO_{\bP^1}\oplus\cO_{\bP^1}(2))$, whose toric diagram is shown on Figure \ref{toricF2}.
Its canonical bundle $\widetilde\cX=K_{\IF_2}$ is a crepant resolution
of the quotient singularity $\cX=\bC^3/\bZ_4$, where the action of $\bZ_4$ on $\bC^3$ is given by
\begin{equation}
1\mto\diag(\om,\om,\om^2).\qquad \om=e^{2\pi\bi/4}=\bi.
\end{equation}
The corresponding tiling and McKay quiver is shown on Figure \ref{quivF2}, and 
 the potential is given by \eqref{potential for Z_N}.
The toric diagram (see Figure \ref{toricF2}) has one internal lattice point and one lattice point on the boundary, in agreement with the fact that $\IF_2$ is an almost Fano surface.

\begin{figure}[ht]
\begin{tikzcd}[sep=2cm]
0\rar[->>]\ar[dr,<->]&1\dar[->>]\ar[dl,<->]\\
3\uar[->>]&2\lar[->>]
\end{tikzcd}
\hspace*{2cm}
\raisebox{-2cm}{\includegraphics[height=5cm]{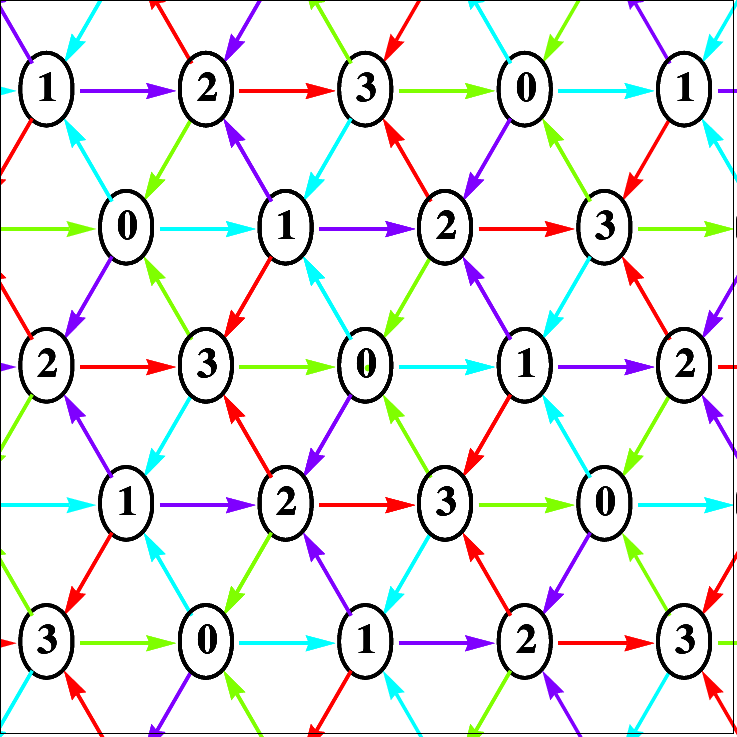}}
\caption{Quiver and tiling for $\IF_2$}
\label{quivF2}
\end{figure}

We apply the same approach as before to compute the attractor invariants of $J(Q,W)$.
The following conjecture was verified up to degrees $(3,3,3,3)$:

\begin{conjecture}
We have
\begin{gather}
\Om_*(e_i)=1,\qquad \Om_*(n\de)=-y\inv(y^2+1)^2,\qquad n\ge1, \nn\\
\Om_*(e_0+e_2+n\de)=\Om_*(e_1+e_3+n\de)=-y,\qquad n\ge0.
\end{gather}
All other attractor invariants vanish.
\end{conjecture}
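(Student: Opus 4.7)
The plan is to follow the double dimensional reduction strategy of \S\ref{double reduction}, exactly as applied to $\IF_0$, $\dP_1$, and $\dP_2$ in \S\S\ref{secF0}--\ref{secdP2}, using the fact that the McKay quiver for $\IC^3/\IZ_4$ with action $\diag(\om,\om,\om^2)$ is the special case $N=4$, $k=1$ of the family treated in \S\ref{sec:general action}. Concretely, starting from the potential
\begin{equation*}
W = \sum_{i\in\IZ_4} c_{i+2}(b_{i+1}a_i - a_{i+1}b_i)
\end{equation*}
from \eqref{potential for Z_N}, I would select the disjoint cuts $I = \{c_i : i\in\IZ_4\}$ and $I' = \{b_i : i\in\IZ_4\}$, so that $Q'' = Q\setminus(I\cup I')$ becomes the cyclic quiver $C_4$ on the arrows $a_i: i \to i+1$. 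The fiber dimension of $\pi: R(J_I,d)\to R(Q'',d)$ over a $C_4$-representation $M$ is $\vi(M) = h(M,\Si M)$, and the interaction form $\si(M,N) = 2h(M,N) - 2h(M,\Si N) - \rho(M,N)$ is a direct specialization of \eqref{eq:interaction form2}.

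Next I would assemble the generating function $\hOm(x)$ of unframed stacky invariants using \eqref{part function}, classifying indecomposables of $C_4$ via the affine $\tl A_3$ roots (Remark \ref{roots A_N}) and splitting the sum as $\hOm(x) = \hOm^\bi(x)\cdot \hOm^\bn(x)$, where $\hOm^\bi(x) = \Exp(q\sum_{n\ge1}x^{n\de})$ is explicit and $\hOm^\bn(x)$ is accessed degree-by-degree by enumerating nilpotent indecomposables. Applying the Joyce--Reineke formula (Theorem \ref{JR}) to the trivial stability then yields the attractor invariants $\Om_*(d,y)$, which should reproduce the three conjectured values.

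Each of these values has a natural interpretation providing a consistency check: $\Om_*(e_i,y)=1$ corresponds to the simple representations at the vertices; $\Om_*(n\de,y) = -y^{-1}(y^2+1)^2$ matches $(-y)^{-3}P(K_{\IF_2};y)$ as predicted by Remark \ref{remarkBBS} and Lemma \ref{lm:Ptoric} with $i=1$, $b=4$; and the new invariants $\Om_*(e_0+e_2+n\de,y) = \Om_*(e_1+e_3+n\de,y) = -y$ lie in $\Ker\ang{-,-}$ and should be identified with refined Gopakumar--Vafa invariants of the $(-2)$-curve in the exceptional locus of $K_{\IF_2}\to\IC^3/\IZ_4$. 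Their value $-y$ is precisely what Theorem \ref{th:DT inv of quotients} predicts for real roots supported on the subquiver carrying the $A_1$-type resolution orthogonal to the D0 direction.

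The main obstacle is extending the verification from finitely many dimension vectors to all of $\bN^{Q_0}$. The generating function $\hOm^\bn(x)$ has no known closed form even for $\IC^3/\IZ_3$, so a brute-force proof via \eqref{part function} cannot terminate. A more structural approach would likely proceed in two steps: first, show directly that at an attractor stability $\te_d$ no semistable representation exists unless $d$ belongs to the conjectured support (an $A_1$-root-type argument restricting $\supp(d)$, combined with vanishing criteria analogous to the strongly-connectedness theorem of \S\ref{sec_wc}, may suffice to eliminate most dimension vectors); and second, use a mutation argument relating the $e_i+e_{i+2}+n\de$ contributions to the small crepant resolution of a local $\IP^1$ with $\cO(0)\oplus\cO(-2)$ normal bundle, where Theorem \ref{th:DT inv of quotients} gives the answer $-y$ outright. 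The presence of this extra branch already shows that Conjecture \ref{C2} fails for non-Fano surfaces, so the sharpest statement one can hope to prove here is a refined classification of the support of $\Om_*$ in terms of the contractible curves of $\wtl\cX$.
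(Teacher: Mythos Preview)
Your proposal is correct and follows essentially the same approach as the paper: the paper treats $\IF_2$ as the $N=4$, $k=1$ instance of the $\IC^3/\IZ_N$ family in \S\ref{sec:general action}, uses the same cuts $I=\{c_i\}$, $I'=\{b_i\}$ reducing to $Q''=C_4$, computes $\hOm(x)=\hOm^\bi(x)\cdot\hOm^\bn(x)$ via \eqref{part function}, and applies the Joyce--Reineke formula to extract $\Om_*(d,y)$, verifying the conjecture only up to $d=(3,3,3,3)$. Your speculative second half (support arguments and mutation to a local $(-2)$-curve) goes beyond what the paper attempts; the statement remains a conjecture there as well.
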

Note that the vectors $e_i+e_{i+2}+n\de$ are contained in the kernel of the skew-symmetric form of the above quiver. The vanishing of the attractor invariants for dimension vectors  in the kernel of the 
skew-symetric form was conjectured in \cite{Beaujard:2020sgs}, but these invariants were
left undetermined. 
As in the previous section, the attractor invariant $\Om_*(n\de)$ counting D0-branes on $\wtl\cX$
is seen to be consistent with \eqref{n-delta},
upon computing the motive $[\wtl\cX]=q(q+1)^2$ (this follows from Lemma \ref{lm:Ptoric} or from the fact that $\wtl\cX=K_{\bF_2}$).
Moreover, we find $\Omstar(n\delta)=\OmS(n\delta)-y-1/y$, in agreement with Eq.~\eqref{OmSgen},
and $\Om_*(\gamma)=\OmS(\gamma)$ for $\gamma=e_0+e_2+n\de$ or $e_1+e_3+n\de$ 
with $n=1,2$.

\medskip

Using the prescriptions in \S\ref{sec:framed-unframed2}, or equivalently the attractor tree formula
for the framed quiver, we can compute the framed, refined DT invariants in the non-commutative
chamber:
\bea
Z_{0,\NCDT}(x)&=&1+x_0 -(y+1/y) x_0 x_1- y x_0 x_2 - y x_0^2 x_2 
+x_1^2 x_0+ (y^2+2+1/y^2)    x_0 x_1 x_2 \nn\\&&
   + y^2 x_0^2 x_2^2 +(1+y^2) x_0^2 x_1 x_2
   -(y+1/y) x_0 x_1 x_2^2 -(y^3+y+1/y+1/y^3) x_0 x_1^2 x_2 
   \nn\\&&
   -(y^3+2y+1/y) x_0 x_1 x_2 x_3 
   + y^2 x_0^3 x_2^2
    -(y^3+y+1/y+1/y^3)  x_0^2 x_1 x_2^2
   - y x_0^2 x_1^2 x_2
  \nn\\&&  -(y^3+2y+1/y) x_0^2 x_1 x_2 x_3
   +\left(y^4+y^2+2+1/y^2+1/y^4\right)  x_0 x_1^2 x_2^2 
 \nn\\&&  +\left(y^2+2+1/y^2\right)  \left( x_0 x_1 x_2^2 x_3 +  x_0 x_1^2 x_2 x_3 \right) 
   +\dots
\eea
One may check that this agrees with the counting of molten crystals 
in the unrefined limit $y\to 1$.

%

\subsection{Invariants of \tpdf{$\PdP_2$}{PdP2}}
The `pseudo del Pezzo surface' $\PdP_2$ is a  blow up of $\dP_1$ or $\IF_2$
at one point. Its
toric diagram, shown on Figure \ref{toricF2}, includes one lattice boundary point,
so $\PdP_2$ is an almost Fano surface.
The canonical bundle $K_{\PdP_2}$ is a crepant resolution of a singular toric CY3 variety corresponding to the brane tiling shown on Figure \ref{quivPdP2}.
The quiver $Q$ has 5 vertices, 13 arrows (including one bidirectional arrow)
and potential 
 \cite[(13.1)]{Hanany:2012hi} (relabelling nodes 1,2,3,4,5 into 3,1,4,5,2 in that reference)
\bea
W&=&
\Phi _{41}\Phi_{24}   \Phi _{12}^2 
+\Phi _{51}  \Phi _{35}  \Phi _{13}
+ \Phi _{31}  \Phi_{23}^2 \Phi _{12}^1
+\Phi _{52} \Phi_{45}  \Phi_{34}   \Phi _{23^1}\nn\\
&&
-\Phi _{41}  \Phi _{34}  \Phi _{13}
-\Phi _{31} \Phi _{23}^1 \Phi _{12}^2 
- \Phi _{52} \Phi _{35} \Phi _{23}^2 
- \Phi _{51} \Phi _{45}   \Phi _{24} \Phi _{12}^1 
\eea

\begin{figure}[ht]
\begin{tikzcd}[column sep=.3cm]
&&2\ar[rrd,->>]\ar[ddr]\\
1\ar[<->,rrrr]\ar[rru,->>]&&&&3\ar[ld],\ar[llld]\\
&5\ar[ul]\ar[uur]&&4\ar[ll]\ar[lllu]
\end{tikzcd}
\hspace*{2cm}
\raisebox{-2cm}{\includegraphics[height=5cm]{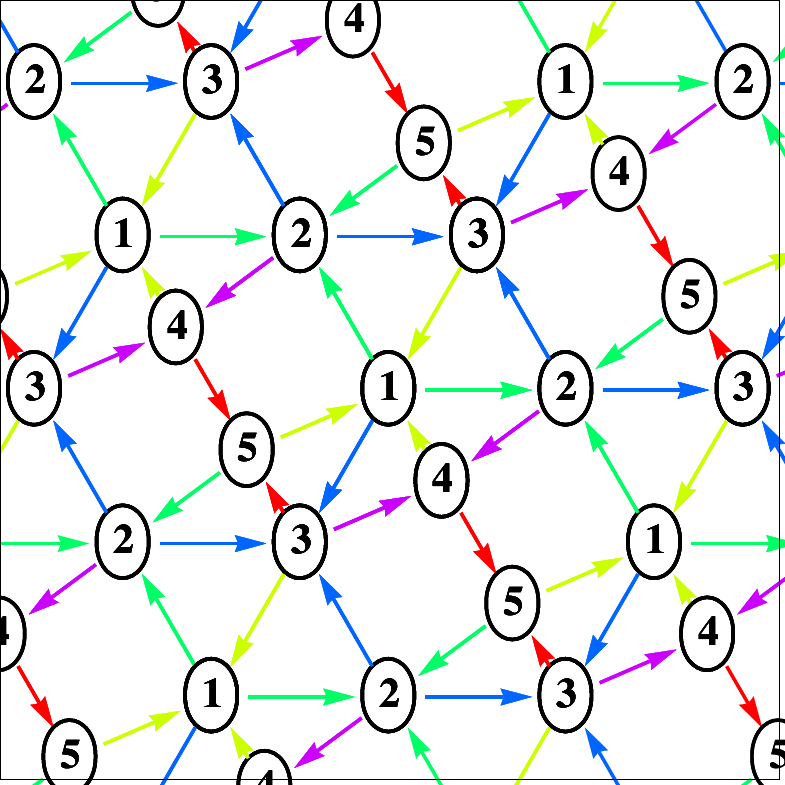}}
\caption{Quiver and tiling for $\PdP_2$}
\label{quivPdP2}
\end{figure}

We choose the disjoint cuts
\be
I=\set{\Phi_{13},\Phi_{24}, \Phi_{31},\Phi_{52}},\qquad
I'=\set{\Phi_{12}^1,\Phi_{23}^1,\Phi_{35},\Phi_{41}}
\ee
Then 
\begin{equation}
Q''=Q\ms(I\cup I')=\set{\Phi_{12}^2,\Phi_{23}^2,\Phi_{34},\Phi_{45},
\Phi_{51}}
=\set{a_1,a_2,a_3,a_4,a_5}
\end{equation}
is a cyclic quiver with $5$ vertices.
The same analysis as before shows that an element of the fiber $\pi\inv(M)$, for $M\in\Rep Q''$, induces a commutative diagram
\begin{ctikzcd}
1\dar\rar["a_1"]&2\dar\rar["a_2"]
&3\rar["a_3"]\dar&4\dar\ar[lll,bend right,"a_5a_4"']\\
2\rar["a_2"]&3\rar["a_4a_3"]&5\rar["a_5"]&1
\ar[lll,bend left,"a_1"']
\end{ctikzcd}

The arrows at the top produce representation $F_{5}M\in\Rep C_4$.
The arrows at the bottom produce representation $\Si F_{4}M\in\Rep C_4$.
The above analysis implies that the fiber of $\pi$ can be identified with
\begin{equation}
\pi\inv(M)\iso\Hom_{C_4}(F_{5}M,\Si F_{4}M).
\end{equation}
Using notation from \S\ref{double reduction}, we obtain
\begin{equation}
\si(M,N)=2h(M,N)-2h(F_{5}M,\Si F_{4}N)-\rho(M,N),
\end{equation}
As was discussed in \S\ref{double reduction}, this gives us all ingredients to compute the generating function $\cA(x)$ of unframed stacky invariants of the Jacobian algebra $J(Q,W)$.
Then we apply the Joyce-Reineke formula (see Theorem \ref{JR}) to find the attractor invariants $\Om_*(d,y)$.
The following conjecture was verified in small degrees.

\begin{conjecture}
We have
\bea
\Om_*(e_i) =1,\qquad
\Om_*(n\de)=-y\inv(y^4+3y^2+1),\qquad n\ge1 \nn\\
\Om_*(e_1+e_3+n\de)=\Om_*(e_2+e_4+e_5+n\de)=-y,\qquad n\ge0.
\eea
All other attractor invariants vanish.
\end{conjecture}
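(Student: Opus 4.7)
The plan is to mirror the double-dimensional-reduction approach used for $\dP_2$ in Section~\ref{secdP2}, and then to extract the attractor invariants via the Joyce-Reineke formula (Theorem~\ref{JR}) starting from the trivial-stability generating function $\hOm(x)$. First I would assemble the already-written ingredients: the chosen disjoint cuts $I,I'$ reduce $R(J_I,d)\to R(Q'',d)$ (with $Q''=C_5$) to an affine bundle whose fiber over $M\in\Rep C_5$ is $\Hom_{C_4}(F_5 M,\Si F_4 M)$, yielding
\begin{equation*}
\hOm(x)=\sum_{m:\cI\to\bN}\frac{(-q^{1/2})^{-\sum_{M,N}m_M m_N\,\si(M,N)}}{\prod_M (q\inv)_{m_M}}\,x^{\sum_M m_M\udim M},
\end{equation*}
where $\cI$ indexes indecomposable $C_5$-representations and $\si$ is the interaction form written above the conjecture.

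As in Section~\ref{sec:general action}, I would split $\cI$ into nilpotent indecomposables (parametrized by positive roots of type $\tilde A_4$) and one-parameter families of invertible representations on imaginary roots $n\de$, indexed by $\la\in\bC^\xx$. Since $\si$ vanishes when one of its arguments is invertible, $\hOm(x)=\hOm^{\mathbf i}(x)\cdot\hOm^{\mathbf n}(x)$ with $\hOm^{\mathbf i}(x)=\Exp\rbr{q\sum_{n\ge1}x^{n\de}}$. This invertible part contributes $y^2(y\inv-y)$ to $\Omstar(n\de,y)$, and the remaining nilpotent contribution must then supply the remaining piece so that the total matches $-y\inv(y^4+3y^2+1)$. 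Compatibility with the general D0-formula $\Omstar(n\de,y)=(-y)^{-3}[\wtl\cX]$ from Remark~\ref{remarkBBS} and Lemma~\ref{lm:Ptoric}, applied to the $\PdP_2$ toric diagram in Figure~\ref{toricF2} (with $i=1$ interior point and $b=5$ boundary points), serves as an independent check of the D0 part.

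For the vanishing statement, I would first apply the strongly-connected-support theorem from Section~\ref{sec_wc} to eliminate dimension vectors whose restricted subquiver disconnects. Among those with strongly connected support I would verify directly from the skew-symmetrized Euler form read off Figure~\ref{quivPdP2} that $\Ker\ang{-,-}$ is a rank-three sublattice of $\bZ^5$ spanned by $\de$, $e_1+e_3$ and $e_2+e_4+e_5$; combined with Conjecture~\ref{C1} this reduces the problem to integer combinations $a\de+b(e_1+e_3)+c(e_2+e_4+e_5)$. The predicted value $-y$ on $e_1+e_3+n\de$ and $e_2+e_4+e_5+n\de$ is the refined Gopakumar-Vafa invariant of a smooth rational $(-1,-1)$-curve, which strongly suggests that these dimension vectors correspond to D2-D0 bound states on two distinct exceptional curves in $\wtl\cX=K_{\PdP_2}$. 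I would identify these curves explicitly via the triangulation of the toric diagram (Figure~\ref{toricF2}, center), and attempt to compute their attractor contributions by torus localization on the moduli spaces, following the small-crepant-resolution pattern of Section~\ref{sec_small}. The residual task is to rule out mixed combinations with $b,c\ge1$ or $b+c\ge2$; I would attempt this by combining the recursive structure of the Joyce-Reineke formula at self-stability with the fact that $\ang{-,e_1+e_3}$ and $\ang{-,e_2+e_4+e_5}$ separate generically under perturbation inside $d^\perp$.

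The main obstacle is uniformity in $n$: while any finite collection of cases can be verified mechanically with the {\tt msinvar} or {\tt CoulombHiggs.m} packages (Remark~\ref{mathematica}), extending the statement to every $n\ge0$ requires either a closed-form evaluation of $\hOm^{\mathbf n}(x)$ restricted to the kernel lattice, which does not appear available in our examples, or a structural argument through the derived equivalence $D^b(\mmod J)\simeq D^b_c(\coh K_{\PdP_2})$ that identifies compact self-stable objects in the attractor chamber with simples, skyscraper sheaves on points, and pure D2-branes on the two exceptional $(-1,-1)$-curves. Only the D0 piece has a clean geometric proof via Remark~\ref{remarkBBS}; the D2-D0 values and the mixed-kernel vanishing seem to require a case-by-case moduli-theoretic analysis, which is why the statement remains conjectural at this stage.
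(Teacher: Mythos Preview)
Your computational approach matches the paper's exactly: the paper does not prove this statement but only \emph{verifies it in small degrees} via the double dimensional reduction of \S\ref{double reduction} (with the cuts and $C_5$-reduction already written above the conjecture) followed by the Joyce-Reineke formula of Theorem~\ref{JR}, supplemented by the consistency check $\Om_*(n\de)=(-y)^{-3}[K_{\PdP_2}]$ from Remark~\ref{remarkBBS} and Lemma~\ref{lm:Ptoric}. Your description of this pipeline, including the factorization $\hOm=\hOm^{\bi}\cdot\hOm^{\bn}$ and the acknowledgment that uniformity in $n$ is the obstruction, is accurate and is precisely why the paper leaves this as a conjecture.

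There is, however, a concrete error in your kernel analysis. You assert that $\Ker\ang{-,-}$ is a rank-three sublattice of $\bZ^5$ \emph{spanned by} $\de$, $e_1+e_3$ and $e_2+e_4+e_5$; but $\de=(e_1+e_3)+(e_2+e_4+e_5)$, so these three vectors span only a rank-two sublattice. Since the skew form has rank $2$ for local surfaces (as noted in the introduction), the kernel genuinely has rank $3$, and there is a third independent kernel direction on which the attractor invariants are conjectured to vanish. This matters for your reduction step ``combined with Conjecture~\ref{C1} this reduces the problem to integer combinations $a\de+b(e_1+e_3)+c(e_2+e_4+e_5)$'': even granting Conjecture~\ref{C1}, you would still need to rule out dimension vectors in the kernel lying outside this rank-two sublattice, which your argument as written does not address. (Also note that invoking Conjecture~\ref{C1} in a proof of the present conjecture is circular, since the latter is a special case of the former.)
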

Note that this conjecture is compatible with
\eqref{n-delta}
as $[K_{\PdP_2}]=q(q^2+3q+1)$,
which is easily computed from the toric diagram in Figure \ref{toricF2}. Moreover we find
that $\Om_*(n\de)=\OmS(n\de)-y-1/y$ and $\Om_*(\gamma)=\OmS(\gamma)$ for 
$\gamma=e_1+e_3+n\de$ or $\gamma=e_2+e_4+e_5+n\de$ for low values of $n$.

\subsection{Invariants of \tpdf{$Y^{3,2}$}{Y32}}
The CY3 variety $Y^{3,2}$ is the simplest element in the $Y^{p,q}$  family of cones over smooth
 Sasaki-Einstein manifolds 
 constructed in \cite{Gauntlett:2004yd,Martelli:2004wu},
 after the conifold $Y^{1,0}$ and $Y^{2,1}=\dP_1$. Its
toric diagram, shown on Figure \ref{toricF2}, has two internal lattice points, but
no boundary lattice points except for the four corners.
 The brane tiling was determined in  \cite{Franco:2005rj}. The tiling
 and quiver are shown in Figure \ref{quivY32}.
The quiver has 6 vertices, 16 arrows and potential 
\bea
W&=&
\Phi_{42} \Phi_{34}^2  \Phi _{23}^1
+\Phi_{31} \Phi_{23}^2 \Phi_{12}^1
+\Phi_{56}^2 \Phi_{45}^1   \Phi_{64}
+ \Phi_{34}^1 \Phi_{53}  \Phi_{45}^2
+\Phi_{12}^2  \Phi_{61} \Phi_{56}^1 \Phi_{25} \nn\\
&&
- \Phi_{42}\Phi_{34}^1  \Phi _{23}^2
 - \Phi_{31}\Phi_{23}^1  \Phi_{12}^2
-\Phi_{56}^1 \Phi_{45}^2  \Phi_{64} 
- \Phi_{34}^2\Phi_{53}  \Phi_{45}^1
-\Phi_{12}^1\Phi_{61} \Phi_{56}^2 \Phi_{25}
\eea

\begin{figure}[ht]
\begin{tikzcd}[column sep=.7cm]
&1\dlar[->>]&6\lar\ar[dd]\\
2\drar[->>]\ar[rrr]&&&5\ular[->>]\ar[lld]\\
&3\rar[->>]\ar[uu]&4\urar[->>]\ar[llu]
\end{tikzcd}
\hspace*{2cm}
\raisebox{-2cm}{\includegraphics[height=5cm]{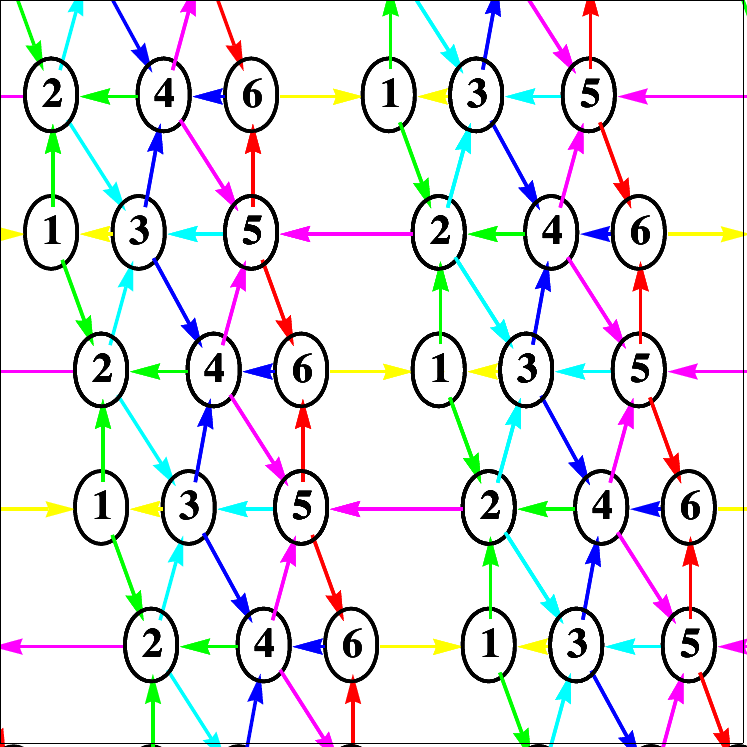}}
\caption{Quiver and tiling for $Y^{3,2}$}
\label{quivY32}
\end{figure}

We choose disjoint cuts
\begin{equation}
I=\set{\Phi_{64},\Phi_{42},\Phi_{25},\Phi_{53},\Phi_{31}}\ ,\quad 
I'=\set{\Phi_{12}^2,\Phi_{23}^2,\Phi_{34}^2,\Phi_{45}^2,\Phi_{56}^2}
\end{equation}
such that $I$ consists of all diagonals and $I'$ 
contains one arrow of each double arrow.
Then 
\begin{equation}
Q''=Q\ms(I\cup I')=\set{\Phi_{12}^1,\Phi_{23}^1,\Phi_{34}^1,\Phi_{45}^1,\Phi_{56}^1,\Phi_{61}}
=\set{a_1,a_2,a_3,a_4,a_5,a_6}
\end{equation}
is a cyclic quiver with $6$ vertices.
The same analysis as before shows that an element of the fiber $\pi\inv(M)$, for $M\in\Rep Q''$, induces a commutative diagram
\begin{ctikzcd}
1\rar\dar&2\rar\dar&3\rar\dar&4\dar\rar&5\dar
\ar[llll,bend right=20]\\
2\rar&3\rar&4\rar&5\rar&6
\ar[llll,bend left=20]
\end{ctikzcd}
The arrows at the top produce representation $F_{6}M\in\Rep C_5$.
The arrows at the bottom produce representation $F_{1}M\in\Rep C_5$.
The above analysis implies that the fiber of $\pi$ can be identified with
\begin{equation}
\pi\inv(M)\iso\Hom_{C_5}(F_{6}M,F_{1}M).
\end{equation}
Using notation from \S\ref{double reduction}, we obtain
\begin{equation}
\si(M,N)=2h(M,N)-2h(F_{6}M,F_{1}N)-\rho(M,N),
\end{equation}
As was discussed in \S\ref{double reduction}, this gives us all ingredients to compute the generating function $\cA(x)$ of unframed stacky invariants of the Jacobian algebra $J(Q,W)$.
Then we apply the Joyce-Reineke formula (see Theorem \ref{JR}) to find the attractor invariants $\Om_*(d,y)$.
The following conjecture was verified in small degrees.
\begin{conjecture}
We have
\begin{equation}
\Om_*(e_i)=1,\qquad
\Om_*(n\de)=-y\inv(y^4+3y^2+2),\qquad n\ge1.
\end{equation}
All other attractor invariants vanish.
\end{conjecture}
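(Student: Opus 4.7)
The plan is to follow exactly the double dimensional reduction strategy that succeeded for $\IP^2$, $\IF_0$, $\dP_1$, $\dP_2$, $\IF_2$, and $\PdP_2$, specialized to the data of $Y^{3,2}$ and its brane tiling. The cuts $I$ and $I'$ have already been identified, and the reduced quiver is $Q''=C_6$. The dimension of the fiber of $\pi\colon R(J_I,d)\to R(Q'',d)$ over $M\in\Rep Q''$ is
\begin{equation}
\vi(M)=h_{C_5}(F_6M,F_1M),
\end{equation}
so the interaction form is $\si(M,N)=2h(M,N)-2h(F_6M,F_1N)-\rho(M,N)$, with $\rho$ read off from the Euler form and $I$. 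Plugging these into \eqref{part function} and exploiting the classification of indecomposable $C_6$-representations (real roots $X_{i,j,n}$ plus nilpotent and invertible imaginary-root families at each $n\de$) gives a fully explicit recipe for the coefficients of $\hOm(x)$.

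First I would implement the above sum in the \texttt{msinvar} and \texttt{CoulombHiggs.m} packages (Remark \ref{mathematica}) to produce $\hOm(d,y)$ for all $d\leq(N,\dots,N)$ with some moderate $N$. Next, using the Joyce--Reineke formula (Theorem \ref{JR}) with the generic attractor perturbation $\te_d$ of $\ang{-,d}$, I would extract $\Om_*(d,y)$ for each such $d$. The expected outputs are:
\begin{itemize}
\item $\Om_*(e_i,y)=1$, which follows from the general fact that simples are always self-stable;
\item $\Om_*(n\de,y)=-y\inv(y^4+3y^2+2)$, which is consistent with Remark \ref{remarkBBS}: by Lemma \ref{lm:Ptoric} the toric diagram of $Y^{3,2}$ has $i=2$, $b=4$, so $[\wtl\cX]=q(q^2+3q+2)$ and $(-y)^{-3}[\wtl\cX]=-y\inv(y^4+3y^2+2)$;
\item $\Om_*(d,y)=0$ for all other $d$.
\end{itemize}

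The vanishing claim is the main content. Here I would argue in two complementary ways. On the structural side, the theorem on strongly connected supports (proved earlier in the section containing the acyclic lemma) already forces $\Om_*(d,y)=0$ whenever $Q|_{\supp d}$ is not strongly connected; this eliminates a large swathe of dimension vectors, in particular any $d$ supported on a proper subset of $Q_0$ that does not carry a cycle. On the side of \emph{supported} but non-diagonal $d$, I would rely on the explicit computer verification up to some cut-off $N$ (the same order of magnitude as the other surfaces above), combined with the absence of imaginary-root-like families in the reduction: unlike $\IF_2$ and $\PdP_2$, the toric diagram of $Y^{3,2}$ has $b-4=0$ \emph{non-corner} boundary lattice points, so no D2--D0 bound states wrapped on exceptional curves are expected, and correspondingly no kernel directions of $\ang{-,-}$ other than $\bN\de$ should carry a non-zero attractor index. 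This matches Conjecture \ref{C1} and the refinement observed throughout \S\ref{sec_surface}.

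The main obstacle, as in all previous cases, is purely computational: the number of indecomposable $C_6$-representations of total dimension $|d|$ grows roughly cubically in $|d|$ through the real-root part $\al_{ij}+n\de$, and the Joyce--Reineke sum in $\hOm\mapsto\Om_*$ ranges over all ordered decompositions of $d$, so the verification size scales harshly with $N$. A secondary obstacle, common to every example in this section, is that the computer check produces only finite evidence; promoting the pattern $\Om_*(n\de,y)=(-y)^{-3}[\wtl\cX]$ from conjecture to theorem would require either a direct cohomological argument on the Hilbert scheme of points on $\wtl\cX$ along the lines of \cite{behrend_motivic}, or a geometric identification of the attractor moduli space at $n\de$. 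Neither is attempted here; the proposal is therefore to establish the statement in the same (conjectural, computer-verified) sense as the parallel statements for $\dP_1$, $\dP_2$, $\IF_2$ and $\PdP_2$.
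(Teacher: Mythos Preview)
Your proposal is correct and follows essentially the same approach as the paper: compute $\hOm(x)$ via the double dimensional reduction to $C_6$ with the interaction form $\si(M,N)=2h(M,N)-2h(F_6M,F_1N)-\rho(M,N)$, then apply the Joyce--Reineke formula to extract the attractor invariants and verify the pattern in small degrees. The paper likewise presents this only as a computer verification (``verified in small degrees''), so your explicit acknowledgment that the result remains conjectural, together with the consistency check $\Om_*(n\de)=(-y)^{-3}[\wtl\cX]$ via Lemma~\ref{lm:Ptoric}, is exactly in line with the paper's treatment.
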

Note that this conjecture is compatible with 
$\Om_*(n\de)=(-y)^{-3}[Y^{3,2}]$
\eqref{n-delta}
as $[Y^{3,2}]= q(q^2+3q+2)$
which is easily computed from Figure \ref{toricF2}. Moreover we find that 
$\Om_*(n\de)=\Om_S(n\de)-2y-2/y$ for low values of $n$, in line with Eq.~\eqref{OmSgen}.

\subsection{Further orbifold examples}

\begin{figure}[ht]
\begin{center}
\begin{tikzpicture}[inner sep=2mm,scale=1]
\begin{scope}[shift={(-5,0)}]  
\draw[step=1cm,black!20,thin] (-1.4,-1.4) grid (1.4,3.4); 
\draw (-1, 0) -- (0, -1) -- (1,3) -- (-1,0);
\filldraw [black]  (-1,0) ellipse (0.1 and 0.1);
\filldraw [black]  (0,-1) ellipse (0.1 and 0.1);
\filldraw [black]  (0,0) ellipse (0.1 and 0.1);
\filldraw [black]  (0,1) ellipse (0.1 and 0.1);
\filldraw [black]  (1,3) ellipse (0.1 and 0.1);
\end{scope}
\begin{scope}[shift={(0,0)}]  
\draw[step=1cm,black!20,thin] (-1.4,-1.4) grid (2.4,4.4); 
\draw (-1, 0) -- (0, -1) -- (1,4) -- (-1,0);
\filldraw [black]  (-1,0) ellipse (0.1 and 0.1);
\filldraw [black]  (0,-1) ellipse (0.1 and 0.1);
\filldraw [black]  (0,0) ellipse (0.1 and 0.1);
\filldraw [black]  (0,1) ellipse (0.1 and 0.1);
\filldraw [black]  (0,2) ellipse (0.1 and 0.1);
\filldraw [black]  (1,4) ellipse (0.1 and 0.1);
\end{scope}
\begin{scope}[shift={(5,0)}]  
\draw[step=1cm,black!20,thin] (-1.4,-1.4) grid (1.4,3.4); 
\draw (-1, 0) -- (0, -1) -- (2,3) -- (-1,0);
\filldraw [black]  (-1,0) ellipse (0.1 and 0.1);
\filldraw [black]  (0,-1) ellipse (0.1 and 0.1);
\filldraw [black]  (0,0) ellipse (0.1 and 0.1);
\filldraw [black]  (1,1) ellipse (0.1 and 0.1);
\filldraw [black]  (1,2) ellipse (0.1 and 0.1);
\filldraw [black]  (0,1) ellipse (0.1 and 0.1);
\filldraw [black]  (2,3) ellipse (0.1 and 0.1);
\end{scope}
\end{tikzpicture}
\end{center}
\caption{Toric diagram of $\IC^3/\IZ_5$  with action $(\omega,\omega,\omega^3)$ (left), $\IC^3/\IZ_6$  with action $(\omega,\omega,\omega^4)$ (center) and $\IC^3/\IZ_6$ with action $(\omega,\omega^2,\omega^3)$  (right). The first two have two internal lattice points, hence two compact divisors, while the third has one internal lattice point and corresponds to the almost Fano surface $PdP_{3a}$.  \label{toric56}}
\end{figure}


Let us consider the action of $\bZ_N$ on $\bC^3$ given by
\begin{equation}
1\mto\diag(\om,\om,\om^{-2}),\qquad \om=e^{2\pi\bi/N}.
\end{equation}
The case $N=2$ corresponds to a small crepant resolution $\IC^2/\IZ_2\times \IC$ already discussed in \S\ref{sec:inv of quotients}, while  $N=3$ and $N=4$ were considered in \S\ref{secP2} and \S\ref{secF2}, respectively.
For $N=5$, our computations for low dimension vectors indicate that
\begin{equation}
\Om_*(e_i)=1,\qquad \Om_*(n\de)=-y\inv(y^4+2y^2+2),
\end{equation}
while all other attractor invariants vanish. This is in agreement with the motive computed from 
the toric diagram in Figure \ref{toric56}.
Moreover the Coulomb branch formula gives
$\Om_*(\de)=\OmS(\de)-2y-2/y$, in agreement with \eqref{OmSgen}.

For $N=6$, we find instead 
\begin{gather}
\Om_*(e_i)=1,\qquad \Om_*(n\de)=-y\inv(y^2+1)(y^2+2),\qquad n\ge1,\nn\\
\Om_*(e_i+e_{i+2}+e_{i+4}+n\de)=-y,\qquad n\ge0,
\end{gather}
while all other attractor invariants vanish. The value of $\Om_*(n\de)$ is in agreement with the motive computed from  the toric diagram in Figure \ref{toric56}.
The Coulomb branch formula gives again
$\Om_*(\de)=\OmS(\de)-2y-2/y$, in agreement with \eqref{OmSgen}.

\begin{remark}
The toric diagram of the quotient $\cX_N=\bC^3/\bZ_N$ was described in \S\ref{sec:general action}. 
Applying Lemma \ref{lm:Ptoric}, we obtain the motive of a crepant resolution $\widetilde \cX_N$ of 
$\bC^3/\bZ_N$,
\begin{equation}
[\widetilde \cX_N]=\begin{cases}
q(q^2+kq+k-1)&N=2k,\\
q(q^2+kq+k)&N=2k+1.
\end{cases}
\end{equation}
Then $\Om_*(n\de)=(-y)^{-3}[\wtl\cX_N]$ as explained earlier.
\end{remark}

\begin{conjecture}
For $N\ge3$, consider the action of $\bZ_N$ on $\bC^3$ given by
$1\mto\diag(\om,\om,\om^{-2})$, $\om=e^{2\pi\bi/N}$.
Then the attractor invariants of the corresponding quiver with potential are
\bea
\Om_*(e_i) &=&1,\qquad \Om_*(n\de)=(-y)^{-3}[\wtl\cX_N],
\qquad n\ge1, \nn \\
\Om_*(e_i+e_{i+2}+\dots+e_{i-2})&=&-y,\qquad n\ge0,
\text{ even }N.
\eea
All other attractor invariants vanish.
\end{conjecture}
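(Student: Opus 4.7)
The plan is to extend the double dimensional reduction technique of \S\ref{sec:general action}, which already handles the family of quotients $\bC^3/\bZ_N$ with action $(\omega,\omega^k,\omega^{-k-1})$, to the specific subfamily $k=1$, and combine it with the general result of Remark \ref{remarkBBS} for D0-invariants and a separate treatment of the \qq{alternating} dimension vectors appearing for even $N$.

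First I would make the McKay quiver explicit: its vertices are $Q_0=\bZ_N$, it has two short arrows $a_i,b_i\colon i\to i+1$ (from the repeated eigenvalue $\omega$) and one long arrow $c_i\colon i\to i-2$, with potential $W=\sum_{i\in \bZ_N}c_{i+2}(b_{i+1}a_i-a_{i+1}b_i)$. Taking the disjoint cuts $I=\{c_i\}$ and $I'=\{b_i\}$, the remaining quiver $Q''=Q\setminus (I\cup I')$ is the cyclic quiver $C_N$, whose indecomposables are classified by the positive roots of $\widetilde A_{N-1}$ as recalled in \S\ref{sec:general action}. The relations in $J_I$ impose that the $b_i$ data define a morphism $M\to \Sigma^2 M$ between $C_N$-representations, so formula \eqref{part function} applies with interaction form $\sigma(M,N)=2h(M,N)-2h(M,\Sigma^2 N)-\rho(M,N)$. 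This reduces the computation of $\hOm(x)$ to a sum over multi-sets of indecomposable $C_N$-representations, exactly in parallel with \S\ref{secP2} and \S\ref{secF2}.

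For the value $\Om_*(n\delta)=(-y)^{-3}[\wtl\cX_N]$, I would not compute it from the quiver at all, but instead invoke Remark \ref{remarkBBS}, since $n\delta$ is in the kernel of $\ang{-,-}$ and hence $\Om_*(n\delta,y)=\Om_\te(n\delta,y)$ for any generic $\te$; combined with Lemma \ref{lm:Ptoric} applied to the toric diagram described in \S\ref{sec:general action}, this gives the closed form in the conjecture. For $\Om_*(e_i)=1$ the argument is the trivial one for simple representations. For the alternating vectors $\alpha_{i,n}=e_i+e_{i+2}+\dots+e_{i-2}+n\delta$ (even $N$), I would first verify directly from the arrow counts ($\ang{e_j,e_{j+1}}=2,\ \ang{e_j,e_{j-2}}=1$) that $\ang{\alpha_{i,n},-}=0$, so that attractor stability is equivalent to any generic perturbation and $\Om_*(\alpha_{i,n},y)=\Om_\te(\alpha_{i,n},y)$ is again stability-independent. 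I would then split the generating series $\hOm$ restricted to the sublattice $\bZ\alpha_{i,0}\oplus \bZ\delta$ into its invertible and nilpotent parts exactly as in \S\ref{sec:general action}; the invertible part gives $\Exp\bigl(q\sum_{n\ge 1}x^{n\delta}\bigr)$ and contributes only to the D0 sector, while the nilpotent part, restricted to this sublattice, involves only the $C_N$-indecomposables supported on the even-indexed vertices, which is a finite subproblem accessible to direct combinatorial analysis, yielding the value $-y$ after extraction via the formula relating $\hOm_*$ and $\Om_*$.

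The hard part will be the vanishing assertion for all other dimension vectors. The series $\hOm^{\bn}(x)$ has no known closed form, so the Joyce-Reineke formula cannot be applied globally to prove vanishing for infinitely many $d$. One route would be to establish vanishing of $\Om_*(d,y)$ for all $d\notin \bN^{Q_0}\cdot\{\text{simple roots}\}\cup \ker\ang{-,-}$, by analyzing the generic fibre of the self-stability moduli functor: if $M$ is $\te$-semistable for $\te$ an infinitesimal perturbation of $\ang{-,d}$, a careful decomposition argument (following the pattern of the proof that $\cA_*(d,y)=0$ when the support is not strongly connected) should force $M$ to be decomposable unless $d$ lies in one of the allowed families. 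A second, more powerful route would be to import the scattering diagram methods of \cite{bridgeland2016scattering, Bousseau:2019ift}, which have already been used to pin down initial data for $K_{\bP^2}$; extending these to the $(\omega,\omega,\omega^{-2})$ family would prove the conjecture in one stroke, but developing the necessary scattering-diagram machinery for non-symmetric toric CY3 with two compact divisors is likely to be the principal technical challenge.
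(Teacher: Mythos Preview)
The paper does not prove this statement: it is explicitly labelled a \emph{conjecture}, supported only by the computational method of \S\ref{sec:general action} (double reduction to $C_N$ followed by the Joyce--Reineke formula) applied for $N\le 6$ and small dimension vectors. So there is no ``paper's own proof'' to compare against, and your proposal is in fact an attempt to go beyond what the paper establishes.

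Two parts of your outline are genuinely rigorous and match the paper's reasoning: $\Om_*(e_i)=1$ is trivial, and $\Om_*(n\delta)=(-y)^{-3}[\wtl\cX_N]$ follows from Remark~\ref{remarkBBS} together with Lemma~\ref{lm:Ptoric}, exactly as the paper indicates. There is a small slip in your setup: for $k=1$ the arrows $b_i$ define a morphism $M\to\Sigma M$, not $M\to\Sigma^2 M$, so the interaction form is $\sigma(M,N)=2h(M,N)-2h(M,\Sigma N)-\rho(M,N)$ (see \eqref{eq:interaction form2}). More seriously, your proposed computation of $\Om_*(\alpha_{i,n})=-y$ for the alternating vectors does not work as stated. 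The fact that $\alpha_{i,n}\in\ker\ang{-,-}$ makes $\Om_\te(\alpha_{i,n},y)$ independent of the generic $\te$, but it does \emph{not} allow you to restrict the trivial-stability series $\hOm$ to the sublattice $\bZ\alpha_{i,0}\oplus\bZ\delta$: to reach $\hOm_\te$ from $\hOm$ via Joyce--Reineke you must sum over \emph{all} decompositions $\alpha_{i,n}=\sum_j\beta_j$, and the $\beta_j$ need not lie in that sublattice. Likewise, for $n\ge 1$ the dimension vector $\alpha_{i,n}$ has full support, so the contributing $C_N$-indecomposables are by no means ``supported on the even-indexed vertices'' and the problem is not finite. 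Finally, as you yourself acknowledge, neither of your two routes for the vanishing assertion constitutes a proof; the decomposition argument you sketch is not the same mechanism as in the strongly-connected support theorem (there the obstruction is a subrepresentation forced by the quiver's combinatorics, whereas here the support is the whole quiver), and the scattering-diagram route remains a programme. In short, your proposal proves no more than the paper does, and the treatment of the alternating vectors contains a genuine gap.
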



Finally, let  us consider the action of $\bZ_6$ on $\bC^3$ given by
\begin{equation}
1\mto\diag(\om,\om^2,\om^3),\qquad \om=e^{2\pi\bi/6}.
\end{equation}
A crepant resolution of $\cX=\IC^3/\IZ_6$ is the canonical bundle over the almost Fano surface $\PdP_{3a}$, in the notation of \cite{Beaujard:2020sgs}. 
Our computations indicate that 
\bea
\Om_*(e_i) &=&1,\qquad
\Om_*(n\de)=-y\inv(y^4+4y^2+1),
\\
\Om_*(e_i+e_{i+2}+e_{i+4}+n\de)
&=&\Om_*(e_i+e_{i+3}+n\de)
=\Om_*(e_i+e_{i+1}+e_{i+3}+e_{i+4}+n\de)=-y \nn
\eea
and all other attractor invariants vanish. The Coulomb branch formula gives
$\Om_*(\de)=\OmS(\de)-y-1/y$, in agreement with \eqref{OmSgen}.

\bibliography{attractor_combined}
\bibliographystyle{hamsplain}
\end{document}